\documentclass[11pt,a4paper]{article}

\usepackage[a4paper,
            textwidth=145mm,
            top=28mm, bottom=32mm,
            footskip=14mm,
            headsep=8mm]{geometry}

\usepackage[T1]{fontenc}
\usepackage{mathtools}  
\usepackage{amsthm}
\usepackage{amssymb}

\usepackage{newtxtext}
\usepackage[varg,bigdelims]{newtxmath}
\usepackage{microtype}

\usepackage{stmaryrd}        
\usepackage{mathrsfs}        
\usepackage{stackengine}     
\usepackage{verbatim}

\usepackage{graphicx}
\usepackage{xcolor}
\usepackage{float}
\usepackage{subcaption}
\usepackage{tikz}
\usepackage{tikz-cd}
\usetikzlibrary{fit,calc,petri,arrows.meta,backgrounds,positioning,decorations.pathmorphing}

\usepackage{quiver}          

\colorlet{placecol}{blue!20}
\tikzset{
  proplace/.style    = {place, fill=blue!20, minimum size=5mm, inner sep=0pt},
  trans/.style       = {transition, scale=0.6, fill=black!20},
  ptoken/.style      = {token, scale=1, text=white, font=\tiny,
                        inner sep=0pt, minimum size=2.6mm},
  ptokenblank/.style = {token, scale=1},
  procbox/.style     = {rounded corners=2.5pt, fill=black!7, draw=none, inner sep=0pt},
  mor/.style         = {-{Stealth[length=2mm,width=1.7mm]}, thick, red!80!black,
                        shorten >=1.5pt, shorten <=1.5pt},
}

\def\procOpenKey{open}
\def\procNoTok{}

\newcommand{\procrow}[5]{%
  \def\procType{#5}%
  \def\procTok{#4}%
  \ifx\procTok\procNoTok
    \def\procTokStyle{ptokenblank}%
  \else
    \def\procTokStyle{ptoken}%
  \fi
  \ifx\procType\procOpenKey
    \node[proplace] (#1p#2) at (0.48, #3) {}
      [children are tokens] child {node[\procTokStyle] {#4}};
  \else
    \node[proplace] (#1p#2) at (0, #3) {}
      [children are tokens] child {node[\procTokStyle] {#4}};
    \node[trans] (#1t#2) at (1, #3) {};
    \draw[-latex, thick] (#1p#2) -- (#1t#2);
  \fi
}

\newcommand{\proc}[6]{%
  \begin{scope}[shift={#2}]
    \procrow{#1}{a}{0.45}{#3}{#5}
    \procrow{#1}{b}{-0.25}{#4}{#6}
    \coordinate (#1nw) at (-0.36, 0.82);
    \coordinate (#1se) at ( 1.32,-0.59);
  \end{scope}
  \begin{pgfonlayer}{background}
    \node[procbox, fit=(#1nw)(#1se)] (#1box) {};
  \end{pgfonlayer}
}

\usepackage[textsize=footnotesize]{todonotes}

\usepackage{thmtools}
\usepackage{thm-restate}

\usepackage[colorlinks=true,
            linkcolor=black!70!blue,
            citecolor=black!70!green,
            urlcolor=black!70!blue,
            pdftitle={Universal Properties of Petri Net Unfoldings},
            pdfauthor={Serge Lechenne, Hugo Paquet}]{hyperref}
\usepackage[capitalise,noabbrev]{cleveref}

\theoremstyle{plain}
\newtheorem{theorem}{Theorem}[section]
\newtheorem{lemma}[theorem]{Lemma}
\newtheorem{proposition}[theorem]{Proposition}
\newtheorem{corollary}[theorem]{Corollary}

\theoremstyle{definition}
\newtheorem{definition}[theorem]{Definition}
\newtheorem{example}[theorem]{Example}

\theoremstyle{remark}
\newtheorem{remark}[theorem]{Remark}

\usepackage[affil-it]{authblk}

\title{\bfseries Universal Properties of Petri Net Unfoldings}

\author[ ]{Serge Lechenne\thanks{\texttt{serge.lechenne@inria.fr}, \url{https://orcid.org/0009-0001-9626-4742}}}
\author[ ]{Hugo Paquet\thanks{\texttt{hugo.paquet@inria.fr}, \url{https://orcid.org/0000-0002-8192-0321}}}
\affil[ ]{Inria, \'Ecole Normale Sup\'erieure -- PSL University, CNRS, France}

\date{}

\tikzset{
  proplace/.style = {place, fill=blue!20, minimum size=5mm, inner sep=0pt},
  trans/.style    = {transition, scale=0.6, fill=black!20},
  open/.style     = {coordinate},
  ptoken/.style   = {token, scale=1, text=white, font=\tiny,
                     inner sep=0pt, minimum size=2.6mm},
  procbox/.style  = {rounded corners=2.5pt, fill=black!7, draw=none, inner sep=0pt},
  mor/.style      = {-{Stealth[length=2mm,width=1.7mm]}, thick, red!80!black,
                     shorten >=1.5pt, shorten <=1.5pt},
  catlab/.style   = {font=\small},
}

\tikzset{
  conflict/.style={decorate,
                   decoration={snake, amplitude=0.2mm, segment length=1mm},
                   color=blue!70!black, thick},
  causality/.style={->, thick, -{Triangle[open]}},
  every token/.style={scale=0.8},
}

\newlength{\hlength}
\newlength{\vlength}
\newcommand{\calN}{\mathcal{N}}

\newcommand{\calE}{\mathcal{E}}

\newcommand{\calJ}{J}

\newcommand{\calU}{\mathcal{U}}
\newcommand{\calM}{\mathcal{M}}

\newcommand{\Occ}{\mathbf{Occ}}
\newcommand{\Petri}{\mathbf{Petri}}

\newcommand{\WGPetri}{\mathbf{Petri}^{\mathsf{WG}}}

\newcommand{\WGOcc}{\mathbf{Occ}^{\mathsf{WG}}}

\newcommand{\ES}{\mathbf{Ev}}
\newcommand{\SES}{\mathbf{EvSym}}

\newcommand{\U}{\mathsf{U}}
\newcommand{\Q}{\mathsf{Q}}
\newcommand{\Pre}{\mathrm{Pre}}
\newcommand{\Post}{\mathrm{Post}}
\newcommand{\Set}{\mathbf{Set}}

\newcommand{\Con}{\mathrm{Con}}
\newcommand{\conf}{\mathscr{C}}
\newcommand{\lt}{\ensuremath{<}}
\newcommand{\id}{\mathrm{id}}
\newcommand{\profto}{\mathrel{\ooalign{$\longrightarrow$\cr\hfil\!\raisebox{-1.2pt}{\rotatebox{90}{$-$}}\hfil\cr}}}

\newcommand{\Colim}{\colim}
\DeclareMathOperator*{\colim}{colim}
\newcommand{\inb}{\mathrm{in}}
\newcommand{\outb}{\mathrm{out}}
\newcommand{\Path}{\mathbf{Path}}

\newcommand{\Cat}{\mathbf{Cat}}

\newcommand{\semWG}{\llbracket - \rrbracket^{\mathsf{WG}}}
\newcommand{\semWGA}[1]{\llbracket #1 \rrbracket^{\mathsf{WG}}}
\newcommand{\semS}{\llbracket - \rrbracket^{\mathrm{Sym}}}
\newcommand{\semSA}[1]{\llbracket #1 \rrbracket^{\mathrm{Sym}}}

\allowdisplaybreaks

\begin{document}
\maketitle

\begin{abstract}
\noindent
It is an established idea in concurrency theory that every Petri net admits an
unfolding semantics. This is a denotational object that represents its domain of
possible executions. Unfoldings play an important role in practical analysis and
verification.

This paper is concerned with the following well-known problem: while the
unfolding resembles a universal construction in the category of Petri nets, it
generally fails to satisfy the expected universal property. This is because the
unfolding construction overlooks the net's internal symmetries.

There are two solutions: make these symmetries explicit to obtain a weak
universal property (one that holds only ``up to symmetry''); or break the
symmetries by assigning individual identities to components of the net. We
review these two solutions and establish, in each case, a universal unfolding of
Petri nets to event structures.

This paper demonstrates a 2-categorical approach to Petri net unfoldings. We
show that each unfolding semantics determines a 2-categorical relative
adjunction involving Petri nets and event structures. Viewed in this way, the
above two constructions can be related formally via an appropriate morphism of
adjunctions. We exhibit a 2-density property of event structures which implies
that unfolding functors are essentially unique.
\end{abstract}

\medskip

\section{Introduction}
\label{sec:intro}

The theory of Petri nets has long helped to analyse the operational behaviour of concurrent systems in various application domains \cite{24143,winskel1987petri,PNR}. But this line of research has primarily targeted a restricted class of well-behaved Petri nets satisfying a ``safety'' condition. Safety is a helpful restriction in practice but it feels mathematically ad-hoc. This paper contributes to the categorical theory of unrestricted (or \emph{unsafe}) Petri nets, pioneered by \cite{meseguer1990petri,FMOP} and continuing to attract interest \cite{GENUN,baez2020open,WGPN,baez2021categories}. 

Specifically, we consider the unfolding problem for  Petri nets, which has attracted attention due to the symmetry issues that arise in the absence of safety (e.g. \cite{meseguer1997semantics}). We re-examine two approaches, respectively by Hayman--Winskel \cite{GENUN} and by Kock \cite{WGPN}, which exemplify the two main solutions to the unfolding problem: embrace the symmetries, or break them. 

We explore this from a 2-categorical perspective, and show that event structures \cite{ES1} continue to be an appropriate semantic domain for Petri nets, even in the unsafe setting, provided the symmetries are handled adequately.

\subsection{Petri nets unfoldings and event structures}

Informally, a Petri net is a graph with two types of nodes, places
and transitions, equipped with a collection of symbolic placeholders
(tokens) distributed across the places. For example, the following is a Petri net
where we have drawn places as circles, transitions as squares, and tokens as
bullets:
\[
\begin{tikzpicture}
  \node[place,tokens=1, scale = 0.5, fill=blue!20]   (p1) at (-1, -0.4) {};
  \node[place,tokens=1, scale = 0.5, fill=blue!20]   (p2) at (-1, 0.4) {};
  \node[place,tokens=0, scale = 0.5, fill=blue!20]   (p3) at (1, -0.4) {};
  \node[transition, scale=0.6, fill=black!20] (t1) at (0, 0.2) {} ;
  \node[transition, scale=0.6, fill=black!20] (t2) at (0, -0.4) {} ;
  \draw[-latex,thick, bend left=5] (p1) to (t1) {} ;
   \draw[-latex,thick, bend left=15] (p1) to (t2) {} ;
   \draw[-latex,thick, bend left=5] (p2) to (t1) {} ;
   \draw[-latex,thick] (t2) to (p3) {} ;
   \draw[-latex, thick, bend left=15] (t2) to (p1) {} ;
\end{tikzpicture}
\]
The assignment of tokens to the places of a Petri net is called a marking. This
is the only aspect of the net that changes during execution, via the
activation (or \emph{firing}) of transitions, consuming tokens and producing new
ones. For our example, here are some possible firings:
\begin{align*}
  &
\begin{tikzpicture}
  \node[place,tokens=1, scale = 0.5, fill=blue!20]   (p1) at (-1, -0.4) {};
  \node[place,tokens=1, scale = 0.5, fill=blue!20]   (p2) at (-1, 0.4) {};
  \node[place,tokens=0, scale = 0.5, fill=blue!20]   (p3) at (1, -0.4) {};
  \node[transition, scale=0.6, fill=black!20] (t1) at (0, 0.2) {$a$} ;
  \node[transition, scale=0.6, fill=black!20] (t2) at (0, -0.4) {$b$};
  \draw[-latex,thick, bend left=5] (p1) to (t1) {} ;
   \draw[-latex,thick, bend left=15] (p1) to (t2) {} ;
   \draw[-latex,thick, bend left=5] (p2) to (t1) {} ;
   \draw[-latex,thick] (t2) to (p3) {} ;
   \draw[-latex, thick, bend left=15] (t2) to (p1) {} ;
    \end{tikzpicture}
 \quad \raisebox{1em}{\scalebox{1.5}{$\longrightarrow$}} \quad
 \begin{tikzpicture}
  \node[place,tokens=0, scale = 0.5, fill=blue!20]   (p1) at (-1, -0.4) {};
  \node[place,tokens=0, scale = 0.5, fill=blue!20]   (p2) at (-1, 0.4) {};
  \node[place,tokens=0, scale = 0.5, fill=blue!20]   (p3) at (1, -0.4) {};
\node[transition, scale=0.6, fill=black!20] (t1) at (0, 0.2) {$a$} ;
  \node[transition, scale=0.6, fill=black!20] (t2) at (0, -0.4) {$b$};
  \draw[-latex,thick, bend left=5] (p1) to (t1) {} ;
   \draw[-latex,thick, bend left=15] (p1) to (t2) {} ;
   \draw[-latex,thick, bend left=5] (p2) to (t1) {} ;
   \draw[-latex,thick] (t2) to (p3) {} ;
   \draw[-latex, thick, bend left=15] (t2) to (p1) {} ;
\end{tikzpicture}
  \\
  &
\begin{tikzpicture}
  \node[place,tokens=1, scale = 0.5, fill=blue!20]   (p1) at (-1, -0.4) {};
  \node[place,tokens=1, scale = 0.5, fill=blue!20]   (p2) at (-1, 0.4) {};
  \node[place,tokens=0, scale = 0.5, fill=blue!20]   (p3) at (1, -0.4) {};
\node[transition, scale=0.6, fill=black!20] (t1) at (0, 0.2) {$a$} ;
  \node[transition, scale=0.6, fill=black!20] (t2) at (0, -0.4) {$b$};
  \draw[-latex,thick, bend left=5] (p1) to (t1) {} ;
   \draw[-latex,thick, bend left=15] (p1) to (t2) {} ;
   \draw[-latex,thick, bend left=5] (p2) to (t1) {} ;
   \draw[-latex,thick] (t2) to (p3) {} ;
   \draw[-latex, thick, bend left=15] (t2) to (p1) {} ;
    \end{tikzpicture}
 \quad \raisebox{1em}{\scalebox{1.5}{$\longrightarrow$}} \quad
 \begin{tikzpicture}
  \node[place,tokens=1, scale = 0.5, fill=blue!20]   (p1) at (-1, -0.4) {};
  \node[place,tokens=1, scale = 0.5, fill=blue!20]   (p2) at (-1, 0.4) {};
  \node[place,tokens=1, scale = 0.5, fill=blue!20]   (p3) at (1, -0.4) {};
  \node[transition, scale=0.6, fill=black!20] (t1) at (0, 0.2) {$a$} ;
  \node[transition, scale=0.6, fill=black!20] (t2) at (0, -0.4) {$b$};
  \draw[-latex,thick, bend left=5] (p1) to (t1) {} ;
   \draw[-latex,thick, bend left=15] (p1) to (t2) {} ;
   \draw[-latex,thick, bend left=5] (p2) to (t1) {} ;
   \draw[-latex,thick] (t2) to (p3) {} ;
   \draw[-latex, thick, bend left=15] (t2) to (p1) {} ;
 \end{tikzpicture}
  \quad \raisebox{1em}{\scalebox{1.5}{$\longrightarrow$}} \quad
 \begin{tikzpicture}
  \node[place,tokens=1, scale = 0.5, fill=blue!20]   (p1) at (-1, -0.4) {};
  \node[place,tokens=1, scale = 0.5, fill=blue!20]   (p2) at (-1, 0.4) {};
  \node[place, scale = 0.5, fill=blue!20]   (p3) at (1, -0.4) {}
   [children are tokens, token distance=0.4em]
  child {node [token] {}}
  child {node [token] {}};
  \node[transition, scale=0.6, fill=black!20] (t1) at (0, 0.2) {$a$} ;
  \node[transition, scale=0.6, fill=black!20] (t2) at (0, -0.4) {$b$};
  \draw[-latex,thick, bend left=5] (p1) to (t1) {} ;
   \draw[-latex,thick, bend left=15] (p1) to (t2) {} ;
   \draw[-latex,thick, bend left=5] (p2) to (t1) {} ;
   \draw[-latex,thick] (t2) to (p3) {} ;
   \draw[-latex, thick, bend left=15] (t2) to (p1) {} ;
\end{tikzpicture}  
\quad \raisebox{1em}{\scalebox{1.5}{$\longrightarrow \dots$}} \quad 
\end{align*}

The dynamic behaviour of a Petri net can be intricate. For instance, the firing
of a transition may produce enough tokens to enable other firings; this defines
a causality relation between firings. Meanwhile several transitions can compete
for the same tokens, which creates conflict and nondeterminism.

To reason about this complex behaviour, a denotational semantics is useful, and this is what event
structures (introduced by Plotkin, Nielsen and Winskel in a seminal 1981 paper
about Petri net unfoldings \cite{ES1}) aspire to provide. The rough idea is to
represent all possible sequences of transitions in a single structure that
accounts for causal dependency, conflict, and concurrency. The semantics of our
example net would be the infinite event structure below, in which each event
represents the firing of a single transition (indicated here by the event label);
an arrow represents a causal dependency; and a wavy blue line indicates a
conflict.
\[
\begin{tikzpicture}
\node[transition, scale=0.6, fill=black!20] (a1) at (0, 0.2) {$a$} ;
\node[transition, scale=0.6, fill=black!20] (b1) at (0, -0.4) {$b$};
\node[transition, scale=0.6, fill=black!20] (a2) at (1, 0.2) {$a$} ;
  \node[transition, scale=0.6, fill=black!20] (b2) at (1, -0.4) {$b$};
\node[transition, scale=0.6, fill=black!20] (a3) at (2, 0.2) {$a$} ;
  \node[transition, scale=0.6, fill=black!20] (b3) at (2, -0.4) {$b$};
\node[transition, scale=0.6, fill=black!20] (a4) at (3, 0.2) {$a$} ;
\node[transition, scale=0.6, fill=black!20] (b4) at (3, -0.4) {$b$};
\node[] (a5) at (4.3, 0.2) {$\dots$} ;
  \node[] (b5) at (4.3, -0.4) {$\dots$};
  \draw[conflict] (a1) to (b1) {} ;
  \draw[conflict] (a2) to (b2) {} ;
  \draw[conflict] (a3) to (b3) {} ;
  \draw[conflict] (a4) to (b4) {} ;
\draw[causality, bend left=5] (b1) to (b2) {} ;
\draw[causality, bend left=5] (b1) to (a2) {} ;
\draw[causality, bend left=5] (b2) to (b3) {} ;
\draw[causality, bend left=5] (b2) to (a3) {} ;
\draw[causality, bend left=5] (b3) to (b4) {} ;
\draw[causality, bend left=5] (b3) to (a4) {} ;
\draw[causality, bend left=5] (b4) to (b5) {} ;
\draw[causality, bend left=5] (b4) to (a5) {} ;
\end{tikzpicture}
\]
Note that conflict is hereditary, e.g. the first event labelled $a$ is
implicitly in conflict with all events labelled $b$, since no execution can see both events. We will define event
structures formally in \S\ref{sec:event-structures}. Informally, $b$ can fire repeatedly and forever but
as soon as $a$ fires the execution must stop.

This kind of semantics is known as an \emph{unfolding} (the cycle in the original Petri
net is `unfolded' to an infinite chain in the event structure). Event structures
are closely connected to a special class of acyclic Petri nets known as
occurrence nets (\S\ref{subsec:occ}), thus the terminology `unfolding' also refers to the
process of turning an arbitrary Petri net into an occurrence net.

\subsection{Universality issues in the unfolding of Petri nets}

Unfolding semantics can sometimes be ambiguous or non-canonical. The simplest problematic situation is when, at some stage of execution, a transition can choose between several tokens.
\subparagraph{A problematic example.}
\label{example-intro} Consider the net
\[
  \begin{tikzpicture}
    \node[place, scale = 0.5, fill=blue!20]   (p1) at (-1, 0) {}
     [children are tokens, token distance=0.4em]
     child {node [token] {}}
     child {node [token] {}};
\node[transition, scale=0.6, fill=black!20] (a1) at (0, 0) {$a$} ;
\draw[-latex,thick] (p1) -- (a1)  ;
\end{tikzpicture}
\]
whose marking allows for two successive firings. A natural candidate for its unfolding is the event structure
\[
  \begin{tikzpicture}
\node[transition, scale=0.6, fill=black!20] (a1) at (0, 0) {$a$} ;
\node[transition, scale=0.6, fill=black!20] (a1) at (0.8, 0) {$a$} ;
\end{tikzpicture}
\]
comprising two events without any conflict or dependency. But there is a mismatch: in the event structure, we can distinguish between two single-firing processes, whereas the Petri net has just one possible single-step execution:
\[
  \begin{tikzpicture}
    \node[place, scale = 0.5, fill=blue!20]   (p1) at (-1, 0) {}
     [children are tokens, token distance=0.4em]
     child {node [token] {}}
     child {node [token] {}};
\node[transition, scale=0.6, fill=black!20] (a1) at (0, 0) {$a$} ;
\draw[-latex,thick] (p1) -- (a1)  ;
\end{tikzpicture}
\qquad \raisebox{0em}{\scalebox{1.5}{$\longrightarrow$}} \qquad
  \begin{tikzpicture}
    \node[place, scale = 0.5, fill=blue!20]   (p1) at (-1, 0) {}
     [children are tokens, token distance=0.4em]
      child {node [token] {}};
\node[transition, scale=0.6, fill=black!20] (a1) at (0, 0) {$a$} ;
\draw[-latex,thick] (p1) -- (a1)  ;
\end{tikzpicture}
\]
Thus the event structure above fails to satisfy the universal property that one
might naturally expect of an unfolding: that the executions of a Petri net
correspond bijectively with the processes of its unfolding. (We will explain in
\S\ref{subsec:ex-paths} why this is a universal property in the categorical sense. See also \cite{GENUN}.)

This failure of universality is well-known, and it is common in Petri net theory
to impose a further restriction: that every reachable marking has at most one
token per place. Petri nets satisfying this condition are traditionally called
\emph{safe} nets. Safe nets \emph{do} admit a universal unfolding \cite{ES1}.

In this paper we do not make any safety assumptions and instead consider
 existing proposals for recovering a universal unfolding in the general case.

 \subparagraph{Unfolding general Petri nets to event structures.} To resolve this issue,
 two methods have emerged. The first method consists in enriching the unfolding with
 additional information expressing that certain processes are `the same'. This
 idea culminates in the work of Hayman and Winskel, based on a theory of
 explicit symmetries \cite{GENUN, SPN}.

 The second method is to modify the very definition of Petri net so that
 elements (tokens and edges) carry an individual identity. Viewed in this style, the
 problematic net above admits two processes
\begin{align*}
&  \begin{tikzpicture}
    \node[place, scale = 0.7, fill=blue!20]   (p1) at (-1, 0) {}
     [children are tokens, token distance=0.7em]
     child {node [token,scale = 1.3] {$x$}}
     child {node [token,scale = 1.3] {$y$}};
\node[transition, scale=0.6, fill=black!20] (a1) at (0, 0) {$a$} ;
\draw[-latex,thick] (p1) -- (a1)  ;
\end{tikzpicture}
\qquad \raisebox{0.4em}{\scalebox{1.5}{$\longrightarrow$}} \qquad
  \begin{tikzpicture}
    \node[place, scale = 0.7, fill=blue!20]   (p1) at (-1, 0) {}
     [children are tokens, token distance=0.5em]
      child {node [token, scale = 1.3] {$x$}};
\node[transition, scale=0.6, fill=black!20] (a1) at (0, 0) {$a$} ;
\draw[-latex,thick] (p1) -- (a1) ;
\end{tikzpicture} \\
&    \begin{tikzpicture}
    \node[place, scale = 0.7, fill=blue!20]   (p1) at (-1, 0) {}
     [children are tokens, token distance=0.7em]
     child {node [token,scale = 1.3] {$x$}}
     child {node [token,scale = 1.3] {$y$}};
\node[transition, scale=0.6, fill=black!20] (a1) at (0, 0) {$a$} ;
\draw[-latex,thick] (p1) -- (a1)  ;
\end{tikzpicture}
\qquad \raisebox{0.4em}{\scalebox{1.5}{$\longrightarrow$}} \qquad
  \begin{tikzpicture}
    \node[place, scale = 0.7, fill=blue!20]   (p1) at (-1, 0) {}
     [children are tokens, token distance=0.5em]
      child {node [token, scale = 1.3] {$y$}};
\node[transition, scale=0.6, fill=black!20] (a1) at (0, 0) {$a$} ;
\draw[-latex,thick] (p1) -- (a1) ;
\end{tikzpicture}
\end{align*}
distinguished by the specific token that is consumed. This is consistent with the
event structure above. This method is formalized using the theory of
whole-grain Petri nets developed by Kock \cite{WGPN}.

\subsection{Objectives and contributions}

This paper has two objectives. First, we revisit each method separately and
establish in both cases a universal unfolding of Petri nets to event structures.
The constructions given by Hayman--Winskel \cite{GENUN} and Kock \cite{WGPN} are
currently limited to occurrence nets, which are not adequate as a semantic
domain (roughly speaking, because places can be redundant). We will see however that connecting to event structures requires care. The second objective is to connect the two methods, following the intuitive idea that
forgetting names induces new symmetries.

 Making all of this precise requires some technical constructions. Universal
 properties are stated in the language of category theory, and in this paper we
 additionally need some basic 2-category theory. This is essential because
 the unfolding of Petri nets is universal only in a 2-categorical sense. (We
 note that 2-categorical methods turn up already in \cite{GENUN,WGPN,ESS}. They are a natural tool for dealing with named elements and symmetries.)

 We briefly outline our results and the organization of this paper. In
 \S\ref{sec:categ-petri-nets}, we recall that Petri nets form a category
 $\Petri$ and introduce a 2-category $\WGPetri$ of whole-grain Petri nets
(following Kock \cite{WGPN}).  We relate them using a functor
 $\Q : \WGPetri \to \Petri$ which forgets the names of tokens (and edges) to recover mere multiplicities. 
 
 In \S\ref{subsec:occ}, we introduce the
 sub-category $\Occ$ of occurrence nets, which can be defined equivalently either
 in whole-grain style or in ordinary style. We recall the definition of the category $\ES$ of event structures and show that the unfolding of
         whole-grain Petri nets induces a 2-adjunction between $\WGPetri$ and $\ES$. This is closely related to the unfolding of whole-grain
         Petri nets to occurrence nets found in  \cite{WGPN} but the link to event
         structures requires the introduction of new, more general class of morphisms and therefore a new unfolding functor.

In \S\ref{sec:unf-sem-sym}, we revisit the unfolding of ordinary Petri nets. Our main contribution is to establish a \emph{relative adjunction} \cite{RM} involving the
         categories $\Petri$ and $\ES$, and a 2-category $\SES$ of \emph{event
         structures with symmetry} \cite{ESS}. Furthermore, we show that
         the embedding $\ES \to \SES$ is a 2-dense 2-functor, which characterizes the unfolding uniquely up to symmetry.

Finally, in \S\ref{sec:adjunction-morphism}, we formally  connect the two universal constructions (ordinary and whole-grain) using a morphism of relative pseudo-adjunctions.

\subparagraph*{Note on terminology and related work.} Petri net theory is a vast area and the word `unfolding' is sometimes used for other purposes. Here we are concerned only with the process of unfolding a Petri net to an occurrence net or an event structure. We also note that the symmetry issues we discuss are similar to issues in the representation of unmarked nets as certain monoidal categories. We refer the reader to \cite{WGPN} for a thorough comparison.

\subsection{Elements and multiplicities}
\label{subsec:elem-mult}

An important point for this paper is the distinction between multisets of
elements of a set $X$, and sets of named occurrences of elements of $X$. This
section is just to fix notation for this basic principle.

\subparagraph{Multisets.} If $X$ is any set, let $\calM(X)$ denote the set of multisets of elements of $X$. Formally these are defined as functions $X \to \mathbb N$ with finite support. 

Now, for a set $S$ equipped with a function $m : S \to X$, we call
$\Q(m)$ the function $X \to \mathbb{N} \cup \{\infty \}$ giving for each $x \in X$ the cardinality of $m^{-1}\{ x\}$. Under simple conditions on the fibres of $m$, we have that $\Q(m) \in \calM(X)$. We regard $\Q$ as a forgetful operation: while
$S$ may contain several named occurrences of an element $x \in X$, in $\Q(m)$
their identity is erased and the copies are indistinguishable.

\subparagraph{Multirelations.}

A span of sets $X \xleftarrow f S \xrightarrow g Y$
can equivalently be presented by a function $\langle f, g \rangle : S \to X \times Y$. Under similar cardinality conditions on the fibres, via the multiplicity counting-operation $\Q$ this induces a \emph{multirelation} $\Q(\langle f, g \rangle )$, seen as element of $\calM(X \times Y)$.

Spans can be composed (via pullback) and form a bicategory (e.g. \cite{BSR}). When infinite coefficients are allowed multirelations are also closed under an infinite form of matrix multiplication, and the multiplicity-count operation is functorial.  (The multirelations in this paper are all finitary, and indeed satisfy further conditions to ensure that finiteness is preserved by composition.)

We will overload the variable $\Q$ and use it to denote the functor mapping
whole-grain Petri nets to ordinary Petri nets. This is appropriate because
the functor consists in applying $\Q$ to every component. 

\section{Categories of Petri nets, and multiplicity count}
\label{sec:categ-petri-nets}

We start by presenting two categories of Petri nets. First we look at the
traditional kind of Petri net with the standard notion of morphisms based on
multirelations \cite{GENUN}. Then we consider the recent theory of whole-grain
Petri nets \cite{WGPN},
based on spans instead of multirelations. We will see that the whole-grain framework is
inherently 2-categorical.

The section is organized as follows. In \S\ref{subsec:nets} we define each kind
of net and we see that a whole-grain Petri net induces an ordinary net if
individual identities are forgotten. Then in \S\ref{subsec:morphisms} we
consider morphisms between nets. This gives a category of Petri nets and a
2-category of whole-grain Petri nets, respectively $\Petri$ and $\WGPetri$, and
a $2$-functor $\Q: \WGPetri \rightarrow \Petri$.
\subsection{Ordinary Petri nets and whole-grain Petri nets}
\label{subsec:nets}

We recall the classical definition of Petri nets, based on multisets
and multirelations.

\begin{definition}[Petri net]
  A Petri net $P$ consists of:
  \begin{itemize}
  \item a set $S$ of \emph{places} and a set $T$ of \emph{transitions};
  \item multirelations $\Pre : S \profto T$ and $\Post : S \profto T$ such that for every $t\in T$ there are finitely many $s \in S$ such that $\Pre(s, t) > 0$ and finitely many such that $\Post(s, t) > 0$;
  \item a multiset $\mu \in \calM(S)$ of places, the \emph{marking};
  \end{itemize}
satisfying the following two conditions:
\begin{itemize}
    \item The net is \emph{grounded}: for every $t \in T$, there is $s \in S$ such that $\Pre(s,
      t) > 0$.
    \item The net has no \emph{isolated places}: for every $s \in S$, there exists $t \in T$ such that $\Pre(s,t) > 0$ or $\Post(s, t) > 0$.
\end{itemize}
\end{definition}
The marking $\mu$ specifies the number of tokens in each place at
initialisation. The fact that $\Pre$ and $\Post$ are multirelations
indicates that several tokens may pass through a single edge
during one step of execution. This multiplicity-based formalism follows the
\emph{collective-token philosophy} \cite{van2009configuration}: the tokens in each place are
indistinguishable from each other, have no individual identity, and
cannot be individually tracked as they move through the net.

It is common to impose an additional safety condition that forbids having
multiple tokens in one place \cite{ES2} for any reachable marking. We do not need to define this, precisely because this paper is about the universality issues with general ``unsafe'' Petri nets.

We now look at whole-grain Petri nets \cite{WGPN}.
\begin{definition}[Whole-grain Petri net]
  \label{def:whole-grain}
  A whole-grain Petri net $P$ consists of:
  \begin{itemize}
  \item a set $S$ of places and a set $T$ of transitions;
  \item two sets $I$ and $O$ together with spans $S \leftarrow I \rightarrow T$
  and $S \leftarrow O \rightarrow T$;
    \item a finite set $M$ together with a map $M \to S$, the \emph{marking};
  \end{itemize}
such that the functions $I \to T$ and $O \to T$ have finite fibres. Additionally, we require that the maps $I \rightarrow T$ and $I + O \rightarrow S$ are
surjective.
 
\end{definition}
The surjectivity conditions correspond to the two conditions imposed
on ordinary Petri nets \cite{ES2}, as is also the finite fibres hypothesis (that states each place/transition has a finite number of transitions/places connecting to it via $I$ and $O$). This formalism allows for a much more precise description of
the token game where tokens are individually tracked as they are consumed and produced by transitions: this is the \emph{individual-token philosophy}.
A basic observation is that we can always forget (or quotient out) the
token and edge identities, to recover an ordinary Petri net.

\begin{proposition}
  \label{prop:Qonobjects}
  For a whole-grain Petri net $P$, with components labelled as in
Definition~\ref{def:whole-grain}, there is an ordinary Petri net $\Q(P)$ with transition set $T$, place set $S$, marking $\mu = \Q(M)$, and multirelations
$\Pre = \Q(S \leftarrow I \rightarrow T)$ and $\Post = \Q(S \leftarrow O \rightarrow T)$. This Petri net is denoted $\Q(P).$
\end{proposition}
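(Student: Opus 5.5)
We have to check that the data $(S, T, \Pre, \Post, \mu)$ defined via the multiplicity count $\Q$ of \S\ref{subsec:elem-mult} satisfies each clause of the definition of an ordinary Petri net. The plan has two parts. First we show that each component is well defined, that is, that the counts are finite. Then we verify the groundedness and no-isolated-places conditions from the surjectivity hypotheses.

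\textbf{Well-definedness.} For the marking, $M$ is finite, so every fibre of $M \to S$ is finite and only finitely many places have a nonempty fibre. Hence $\Q(M) : S \to \mathbb N$ has finite support and lies in $\calM(S)$.

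For $\Pre$, consider $\langle f, g\rangle : I \to S \times T$. For a pair $(s,t)$, the fibre over it is contained in the fibre of $g : I \to T$ over $t$, which is finite by hypothesis. So $\Pre(s,t) \in \mathbb N$.

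Moreover, for fixed $t$ the places $s$ with $\Pre(s,t) > 0$ are exactly the images under $f$ of elements of the finite set $g^{-1}\{t\}$. There are therefore finitely many such $s$, which is the required local finiteness condition. The same argument applies verbatim to $\Post$ using $O \to T$.

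One point needs care: a multirelation is a multiset on $S \times T$, so it has finite support only when $T$ is finite. I would read the definition's multirelations $S \profto T$ as functions $S \times T \to \mathbb N$ subject to the stated per-transition finiteness, which is exactly what the definition of Petri net demands. With that reading the argument above suffices. This convention issue is the only real subtlety, and I expect it to be the main thing to state carefully, consistent with the remark in \S\ref{subsec:elem-mult} that the multirelations are finitary.

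\textbf{Conditions.} For groundedness, take $t \in T$. Surjectivity of $I \to T$ gives some $i \in I$ with $g(i) = t$. Then $\Pre(f(i), t) \geq 1$.

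For the absence of isolated places, take $s \in S$. Surjectivity of $I + O \to S$ gives an element of $I$ or of $O$ mapping to $s$. Its image $t$ in $T$ then satisfies $\Pre(s,t) > 0$ or $\Post(s,t) > 0$ respectively.

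This completes the verification, and $\Q(P)$ is an ordinary Petri net.
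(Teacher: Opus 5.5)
Your verification is correct and follows the same route as the paper. The paper gives no separate proof: it only remarks after the definition of whole-grain nets that the finite-fibre and surjectivity hypotheses correspond to local finiteness, groundedness, and the absence of isolated places, which is exactly what you check. Your reading of a multirelation as a function $S \times T \to \mathbb{N}$ subject to per-transition finiteness, rather than as a globally finitely supported multiset, is the right one, since unfoldings generally have infinitely many transitions.
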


In other words, $\Q(P)$ is a version of $P$ in which parallel edges are combined
into a single edge with multiplicity and the initial marking is reduced to a
multiplicity count for each place. The operational behaviour of the nets $P$ and $Q(P)$ correspond; e.g., a transition $e$ is enabled in $P$ if and only if it is enabled as a transition of $Q(P)$. 

\begin{remark}
Both kinds of Petri nets in this paper are assumed to be grounded and to have no isolated places. These two conditions are common but not systematically imposed, so this deserves a brief comment. In the context of net unfoldings, the grounded-ness condition seems essential, to avoid tokens being uncontrollably produced. However, one could likely do away with the condition on isolated places, at the cost of a slightly more complex 2-category of whole-grain Petri nets (because Theorem~\ref{thm:discreteness} below would not hold). 
\end{remark}

\subsection{Morphisms of Petri nets}
\label{subsec:morphisms}

Maps of Petri nets have a well-established theory; they have a canonical justification as relations preserving the token game \cite{winskel1984new}. 
\begin{definition}
  \label{def:morphisms-petri}
  For $P = (S, T, \Pre, \Post, \mu)$ and
  $P' = (S', T', \Pre', \Post', \mu')$, a \emph{map of Petri nets}
  $P \to P'$ consists of a function $\eta : T \rightarrow T'$  and a multirelation $\beta : S \profto S'$ such
  that the diagrams
\[\begin{tikzcd}[row sep=0.2em]
	& S \\
	1 \\
	& {S'}
	\arrow["\beta"{inner sep=.8ex}, "\shortmid"{marking}, from=1-2, to=3-2]
	\arrow["\mu"{inner sep=.8ex}, "\shortmid"{marking}, from=2-1, to=1-2]
	\arrow["{\mu'}"'{inner sep=.8ex}, "\shortmid"{marking}, from=2-1, to=3-2]
\end{tikzcd}
\quad
\begin{tikzcd}
	S & T \\
	{S'} & {T'}
	\arrow["\Pre"{inner sep=.8ex}, "\shortmid"{marking}, from=1-1, to=1-2]
	\arrow["\beta"'{inner sep=.8ex}, "\shortmid"{marking}, from=1-1, to=2-1]
	\arrow["\eta", from=1-2, to=2-2]
	\arrow["{\Pre'}"'{inner sep=.8ex}, "\shortmid"{marking}, from=2-1, to=2-2]
\end{tikzcd}\quad
\begin{tikzcd}
	S & T \\
	{S'} & {T'}
	\arrow["\Post"{inner sep=.8ex}, "\shortmid"{marking}, from=1-1, to=1-2]
	\arrow["\beta"'{inner sep=.8ex}, "\shortmid"{marking}, from=1-1, to=2-1]
	\arrow["\eta", from=1-2, to=2-2]
	\arrow["{\Post'}"'{inner sep=.8ex}, "\shortmid"{marking}, from=2-1, to=2-2]
\end{tikzcd}
\]
commute. (The multisets $\mu$ and $\mu'$ are seen as multirelations from a singleton set, and the function $\eta$ is seen as a multirelation with
$\eta(t, t') = 1$ if $\eta(t) = t'$ and $0$ otherwise.)
\end{definition}
\begin{remark}
Equivalent to the above diagrammatic conditions are the following equations, for every $p' \in P'$ and $t \in T$:
\begin{itemize}
    \item $\sum_{p \in P} \beta(p, p')\mu(p) = \mu'(p')$;
    \item $\sum_{p \in P}\beta(p, p') \Pre(p, t) = \Pre'(p',\eta(t))$; and 
    \item $\sum_{p \in P}\beta(p, p')\Post(p, t) = \Post'(p',\eta(t))$.
\end{itemize}
\end{remark}

In a map of Petri nets, the multirelation $\beta$ can be read as a reverse action on places: every
place $s' \in S'$ that admits an edge to a transition $\eta(t)$ must be
assigned a multiset of places incoming for $t$. When $\beta$ arises from a total function $S' \to S$, the map $(\eta, \beta)$ is called a \emph{folding map}. Morphisms of Petri nets compose (as multirelations) to form a category that we denote $\Petri$.

We now turn to the morphisms of whole-grain Petri nets, closely following
Kock \cite{WGPN}. This involves a bit more data (in particular,
because in the whole-grain setting, the multirelation $\beta$ must be
replaced by another span) but axioms are stated as simple pullback
conditions.

\begin{definition}
  \label{def:map-whole-grain}
  Suppose that $P = (S,T,I,O, M)$ and $P' = (S', T', I', O', M')$ are
  whole-grain Petri nets. A \emph{map of (whole-grain) Petri nets}
  $\varphi : P \rightarrow P'$ is a family of five functions
  relating $P$ and $P'$ componentwise such that
\[\begin{tikzcd}
	S & I & T & O & S & M & S \\
	{S'} & {I'} & {T'} & {O'} & {S'} & {M'} & {S'}
	\arrow[from=1-1, to=2-1]
	\arrow["{(a)}"{description}, draw=none, from=1-1, to=2-2]
	\arrow[from=1-2, to=1-1]
	\arrow[from=1-2, to=1-3]
	\arrow[from=1-2, to=2-2]
	\arrow["{(b)}"{description}, draw=none, from=1-2, to=2-3]
	\arrow[from=1-3, to=2-3]
	\arrow["{(c)}"{description}, draw=none, from=1-3, to=2-4]
	\arrow[from=1-4, to=1-3]
	\arrow[from=1-4, to=1-5]
	\arrow[from=1-4, to=2-4]
	\arrow["{(d)}"{description}, draw=none, from=1-4, to=2-5]
	\arrow[from=1-5, to=2-5]
	\arrow[from=1-6, to=1-7]
	\arrow[from=1-6, to=2-6]
	\arrow["{(e)}"{description}, draw=none, from=1-6, to=2-7]
	\arrow[from=1-7, to=2-7]
	\arrow[from=2-2, to=2-1]
	\arrow[from=2-2, to=2-3]
	\arrow[from=2-4, to=2-3]
	\arrow[from=2-4, to=2-5]
	\arrow[from=2-6, to=2-7]
\end{tikzcd}\] commutes. (We keep the individual functions anonymous for
clarity; the functions $S \to S'$ are all the same in the diagram.) 

The only relevant maps for
this paper are those satisfying further conditions, as follows.
\begin{itemize}
\item A map $\varphi$ is an \emph{\'etale map} if $(b)$ and $(c)$ are
  pullback squares and the map $M \to M'$ is an identity function (in
  particular $M = M'$).
  \item A map $\varphi$ is a \emph{cabling map} if $(a)$, $(d)$, and $(e)$ are
        pullback squares and the map $T \to T'$ is an identity function (in
        particular $T' = T$).
\end{itemize}
\end{definition}

More informally, \'etale maps are those preserving the input and output arities of transitions up to names: the pullback condition enforces this via an appropriate isomorphism of fibres
A cabling is in some sense \emph{place-\'etale}, as it preserves arities of places. (One could relax the identity map axiom to an invertibility condition, but the definition above makes the overall formalism simpler \cite{WGPN}.)

\begin{remark}
  Our cabling maps generalise those of Kock \cite{WGPN}, who only considered
  maps between Petri nets having the same initial marking. It is important for
  our purposes to allow for varying markings (subject to the pullback condition
  $(e)$---compare with the first condition in
  Definition~\ref{def:map-whole-grain})  for a proper
  connection with event structures.
\end{remark}

More concretely, the pullback conditions for an \'etale map assert that a
transition and its image must have the same number of incoming and outgoing
edges, with the map providing a specific bijection. A cabling map has the
analogous property for places, and additionally the two Petri nets must have the
same transitions. By combining \'etale maps and cabling maps we recover a
whole-grain version of the maps of ordinary Petri nets in Definition~\ref{def:morphisms-petri}.

\begin{definition}
  For whole-grain Petri nets $P$ and $P'$, a \emph{rational map}
  $P \to P'$ is a span
  \[
     P \xleftarrow{\varphi} R \xrightarrow{\psi} P'
  \]
  where $R$ is another whole-grain net, $\varphi$ is a cabling map and $\psi$ is
  an \'etale map.
\end{definition}

We recover \'etale maps as the class of rational maps for which $\varphi = \id$.
The components of a rational map can be explicitly laid out as
\[\begin{tikzcd}
	{S_P} & {I_P} & {T_P} & {O_P} & {S_P} & M \\
	{S_R} & {I_R} & {T_R} & {O_R} & {S_R} & {M'} \\
	{S_{P'}} & {I_{P'}} & {T_{P'}} & {O_{P'}} & {S_{P'}} & {M'}
	\arrow[from=1-2, to=1-1]
	\arrow[from=1-2, to=1-3]
	\arrow[from=1-4, to=1-3]
	\arrow[from=1-4, to=1-5]
	\arrow[from=1-6, to=1-5]
	\arrow[from=2-1, to=1-1]
	\arrow[from=2-1, to=3-1]
	\arrow["\lrcorner"{anchor=center, pos=0.125, rotate=180}, draw=none, from=2-2, to=1-1]
	\arrow[from=2-2, to=1-2]
	\arrow[from=2-2, to=2-1]
	\arrow[from=2-2, to=2-3]
	\arrow[from=2-2, to=3-2]
	\arrow["\lrcorner"{anchor=center, pos=0.125}, draw=none, from=2-2, to=3-3]
	\arrow[equals, from=2-3, to=1-3]
	\arrow[from=2-3, to=3-3]
	\arrow[from=2-4, to=1-4]
	\arrow["\lrcorner"{anchor=center, pos=0.125, rotate=90}, draw=none, from=2-4, to=1-5]
	\arrow[from=2-4, to=2-3]
	\arrow[from=2-4, to=2-5]
	\arrow["\lrcorner"{anchor=center, pos=0.125, rotate=-90}, draw=none, from=2-4, to=3-3]
	\arrow[from=2-4, to=3-4]
	\arrow[from=2-5, to=1-5]
	\arrow[from=2-5, to=3-5]
	\arrow["\lrcorner"{anchor=center, pos=0.125, rotate=180}, draw=none, from=2-6, to=1-5]
	\arrow[from=2-6, to=1-6]
	\arrow[from=2-6, to=2-5]
	\arrow[from=3-2, to=3-1]
	\arrow[from=3-2, to=3-3]
	\arrow[from=3-4, to=3-3]
	\arrow[from=3-4, to=3-5]
	\arrow[equals, from=3-6, to=2-6]
	\arrow[from=3-6, to=3-5]
\end{tikzcd}\]
\begin{proposition}
 \label{prop:Qonmor}
  For $P, P' \in \WGPetri$, every rational map $P \xleftarrow{\varphi} R \xrightarrow{\psi} P'$
  induces a morphism of ordinary Petri nets $(\eta, \beta): \Q(P) \to \Q(P')$ where $\eta$ is the
  function $T_R \to T_{P'}$ and $\beta$ is the finitary multirelation
  $\Q(S_P \leftarrow S_R \to S_{P'})$.
\end{proposition}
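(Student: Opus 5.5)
The plan is to verify the three equations of the Remark following Definition~\ref{def:morphisms-petri}, after first checking that $\beta$ is a well-defined finitary multirelation. I will use only that the counting operation $\Q$ is functorial on spans: it takes composition by pullback to composition of multirelations, and pullback squares to equalities of composite multirelations.

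\textbf{Step 1: $\beta$ is well defined and finitary.} Write $\beta = \Q(S_P \leftarrow S_R \to S_{P'})$. For a pair $(s,s')$, the count $\beta(s,s')$ is at most the size of the fibre of $S_R \to S_P$ over $s$. The cabling squares $(a)$ and $(d)$ are pullbacks, and the transition map is an identity. Hence each $r \in S_R$ over $s$ corresponds to edges of $R$ adjacent to $r$. By surjectivity of $I_R + O_R \to S_R$, each such $r$ has at least one such edge. These edges lie in the fibre of $I_R+O_R \to I_P + O_P$ over the edges at $s$. This bounds the fibres, although only finitely per edge. For the uses below, finiteness of the specific composite counts is what matters, and it follows from the finite-fibre hypotheses on $I \to T$ and $O \to T$. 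I expect this fibre bookkeeping to be the main fiddly point. The cleanest route is to admit $\mathbb{N}\cup\{\infty\}$-valued multirelations throughout, as \S\ref{subsec:elem-mult} allows. Finiteness of the resulting equalities then follows because the right-hand sides ($\Pre'$, $\Post'$, $\mu'$) are finite.

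\textbf{Step 2: the $\Pre$ square.} I must show $\Pre' \circ \beta = \eta \circ \Pre$ as multirelations $S_P \profto T_{P'}$. Write $\beta$ as the span composite $\psi_S \circ \varphi_S^{\mathrm{op}}$. The pullback square $(a)$ for the cabling $\varphi$ gives $I_R \cong S_R \times_{S_P} I_P$, because $T_R = T_P$. The étale pullback $(b)$ gives $I_R \cong I_{P'} \times_{T_{P'}} T_R$. I will compute the span $S_P \leftarrow I_R \to T_{P'}$ in two ways. It is the composite of $S_P \leftarrow S_R \to S_{P'}$ with $\Pre'$ viewed as a span, since $I_R \to S_R\times_{S_{P'}} I_{P'}$ is a bijection by pasting $(a)$, $(b)$ and the commuting square relating $S_R$ and $S_{P'}$. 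It is also the composite of $\Pre = (S_P \leftarrow I_P \to T_P)$ with the function $\eta$, since $I_R \to I_P$ is a bijection by $(a)$ and $T_R = T_P$. Applying functoriality of $\Q$ then yields the equation. The verification of the first bijection is the crux. Concretely, fix $s \in S_P$ and $t \in T_R$. I count $\sum_{s'} \beta(s,s')\Pre'(s',\eta t)$, which is the number of pairs $(r,i')$ with $r$ over $s$ and $i'$ an input of $\eta t$ at $\psi(r)$. Using $(b)$, such $i'$ lift uniquely to inputs $i$ of $t$ in $R$. Using $(a)$, the pairs $(r,i)$ with $r$ attached to $i$ correspond to inputs of $t$ at $s$ in $P$. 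I will phrase this pointwise count as the proof if the span argument gets heavy.

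\textbf{Step 3: the $\Post$ square and the marking.} The $\Post$ square is identical, using squares $(c)$ and $(d)$. For the marking, the cabling pullback $(e)$ gives $M_R \cong S_R \times_{S_P} M_P$, and the étale condition gives $M_R = M_{P'}$ over $S_{P'}$. Therefore $\sum_s \beta(s,s')\mu(s)$ counts pairs of a token $m$ of $P$ and an $r$ over its place with $\psi(r)=s'$. These are exactly the elements of $M_R = M_{P'}$ lying over $s'$, and their number is $\mu'(s')$.
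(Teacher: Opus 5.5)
Your Step 3 is correct and is exactly the paper's proof. The paper writes out only the marking condition and calls the $\Pre$/$\Post$ cases similar.

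\textbf{The gap is in Step 2: it targets the wrong equation.} You announce the Remark's equations, but then set out to prove $\Pre'\circ\beta=\eta\circ\Pre$ as multirelations $S_P\profto T_{P'}$. Your pointwise count fixes $s\in S_P$, sums over $s'$, and aims at $\sum_{s'}\beta(s,s')\Pre'(s',\eta t)=\Pre(s,t)$. The condition actually required is the Remark's: for fixed $s'\in S_{P'}$ and $t\in T_P$, $\sum_{s}\beta(s,s')\Pre(s,t)=\Pre'(s',\eta(t))$. For the marking the two orientations agree, which is why Step 3 works. For $\Pre$ they do not, and your version is false.

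Here is a counterexample.
\begin{itemize}
\item Let $P$ have one place $s$ carrying one token, and one transition $t$ with a single input edge $j$ from $s$.
\item Let $R$ have places $r_1,r_2$ over $s$, the same transition $t$ with input edges $(r_1,j),(r_2,j)$, and one token on each $r_k$. Squares $(a)$, $(d)$, $(e)$ are then pullbacks, so $\varphi$ is a cabling.
\item Let $P'$ have one place $s'$ with two tokens, and one transition $t'$ with two input edges from $s'$.
\item Let $\psi$ send $r_k\mapsto s'$ and $t\mapsto t'$, map the two inputs bijectively, and take $M_R=M_{P'}$. This is an \'etale map.
\end{itemize}
Then $\beta(s,s')=2$, $\Pre(s,t)=1$ and $\Pre'(s',t')=2$. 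The Remark's equation reads $2\cdot 1=2$, whereas both your composite equation and its pointwise form read $2\cdot 2=1$.

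Both bijections you invoke also fail here. By $(a)$ we have $I_R\cong S_R\times_{S_P}I_P$, so $I_R\to I_P$ has the fibres of $\varphi_S$ and is two-to-one. And $S_R\times_{S_{P'}}I_{P'}$ has four elements while $I_R$ has two. Your lifting step breaks for the same reason: the input of $t$ obtained from $i'$ via $(b)$ sits at some place $r''$ with $\psi_S(r'')=\psi_S(r)$, not necessarily at $r$.

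\textbf{The repair.} Count the span $S_{P'}\leftarrow I_R\to T_P$ in two ways; its legs are the source map followed by $\psi_S$, and the target map (recall $T_R=T_P$). This mirrors the marking argument, with $(a)$ playing the role of $(e)$ and $(b)$ the role of $M_R=M_{P'}$.
\begin{itemize}
\item By $(a)$, the elements of $I_R$ over $(s',t)$ are the pairs $(r,j)$ with $j\in I_P$ an input edge of $t$ and $r\in\varphi_S^{-1}\{\text{place of }j\}\cap\psi_S^{-1}\{s'\}$. There are $\sum_s\beta(s,s')\Pre(s,t)$ of them.
\item By $(b)$, together with the commuting square relating $I_R\to S_R$ and $I_{P'}\to S_{P'}$, these elements correspond bijectively to the input edges of $\eta(t)$ in $P'$ whose place is $s'$. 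There are $\Pre'(s',\eta(t))$ of them.
\end{itemize}
The $\Post$ case is the same argument using $(c)$ and $(d)$.

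Step 1 also simplifies. Since $P$ has no isolated places, each $s$ has an edge $j$ to some $t$. By $(a)$ (or $(d)$), $r\mapsto(r,j)$ injects $\varphi_S^{-1}\{s\}$ into the fibre of $I_R\to T_R$ (or $O_R\to T_R$) over $t$, which is finite. So $\beta$ is finitary without any detour through $\mathbb{N}\cup\{\infty\}$.
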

\begin{proof}
We verify the first condition in  Definition~\ref{def:morphisms-petri} and omit the other two, which use similar arguments. Explicitly, we must show that for every $p' \in S_{P'}$, $\sum_{p \in S_P} \beta(p, p')\mu(p) = \mu'(p')$, where $\mu$ and $\mu'$ are the respective markings of $\Q(P)$ and $\Q(P')$. 

Recall that, by definition of $\Q$, for $p \in S_P$ and $p' \in S_{P'}$, $\beta(p, p')$ is the cardinality of the set $\varphi_S^{-1}\{p\} \cap \psi_S^{-1}\{p'\}$; $\mu(p)$ is the cardinality of the fibre $m_P^{-1}\{ p\}$ for the marking function $m_P : M_P \to S_P$; and similarly $\mu'(p')$ is the cardinality of $m_{P'}^{-1}\{p'\}$. 

By the axioms of rational maps we have the following situation:
\[
\begin{tikzcd}
	S_P & S_R & S_P' \\
	M_P & {M_R} & {M_{P'}} 
	\arrow[from=1-2, to=1-1, "\varphi_S"']
	\arrow[from=1-2, to=1-3, "\psi_S"]
	\arrow[from=2-1, to=1-1, "m_P"]
	\arrow["\lrcorner"{anchor=center, pos=0.125, rotate=180}, draw=none, from=2-2, to=1-1]
	\arrow[from=2-2, to=1-2, "m_R"]
	\arrow[from=2-2, to=2-1]
	\arrow[equals, from=2-2, to=2-3]
	\arrow[from=2-3, to=1-3, "m_{P'}"]
\end{tikzcd}
\]
Thus, by the characterization of pullbacks in $\mathbf{Set}$ we may identify $M_R$ with the coproduct $\sum_{p \in S_P} \varphi_S^{-1}\{p\} \times m_P^{-1}\{p \}$, and under this identification the map $m_R : M_R \to S_R$ is given by $(p, r, j) \mapsto r$. For $p' \in S_{P'}$ we obtain the desired equation by considering the bijection 
\begin{align*}
m_{P'}^{-1}\{p'\} &\cong m_R^{-1} \left( \psi_S^{-1} \{ p' \} \right) \\
&\cong \{ (p, r, j) \mid  p \in S_P, r \in \varphi_S^{-1}\{p \}, j \in m_P^{-1}\{p\} \text{ and } r \in \psi_S^{-1} \{p'\} \} \\ 
&= \{ (p, r, j) \mid  p \in S_P, r \in \varphi_S^{-1}\{p \} \cap \psi_S^{-1}\{p'\}, j \in m_P^{-1}\{p\} \} \\ 
&\cong \coprod_{p \in P} \left(\varphi_S^{-1}\{p \} \cap \psi_S^{-1}\{p'\}\right) \times m_P^{-1}\{p\}
\end{align*}
whose domain has cardinality $\mu(p')$ and whose codomain has cardinality $\sum_{p \in S_P} \beta(p, p') \times \mu(p)$.

\end{proof}

\'Etale maps correspond to folding maps:
\begin{lemma}
\label{lemma:folding}
    For $P, P' \in \WGPetri$, if $\psi : P \rightarrow P'$ is an \'etale map (seen as a degenerate rational map), then $\Q(\psi)$ is a folding map. 
\end{lemma}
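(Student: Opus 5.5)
We unfold the definitions: an \'etale map $\psi : P \to P'$ is viewed as the degenerate rational map $P \xleftarrow{\id} P \xrightarrow{\psi} P'$, where the identity is trivially a cabling map (all squares are pullbacks and $T \to T$ is the identity). By Proposition~\ref{prop:Qonmor}, $\Q(\psi) = (\eta, \beta)$ with $\eta = \psi_T : T_P \to T_{P'}$ and $\beta = \Q(S_P \xleftarrow{\id} S_P \xrightarrow{\psi_S} S_{P'})$. So the task is to show that this multirelation $\beta$ comes from a total function $S_{P'} \to S_P$. This is the sense of ``folding map'' used after Definition~\ref{def:morphisms-petri}.

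First we compute $\beta$ explicitly. By the description of $\Q$ on spans, $\beta(p, p')$ is the cardinality of $\{ r \in S_P \mid r = p \text{ and } \psi_S(r) = p' \}$. This equals $1$ if $\psi_S(p) = p'$ and $0$ otherwise. Hence $\beta$ is just the graph of the function $\psi_S : S_P \to S_{P'}$, read as a multirelation $S_P \profto S_{P'}$, with all entries in $\{0,1\}$. Each $p$ is related to exactly one $p'$, namely $\psi_S(p)$. In the reverse-action reading of $\beta$ this is a total function between the place sets: every place of $P$ is sent to a single place of $P'$ with multiplicity one.

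The only delicate point is matching this with the paper's convention, ``$\beta$ arises from a total function $S' \to S$''. A total function in either direction yields a $\{0,1\}$-valued multirelation that is single-valued on one side. We therefore need to check the orientation used when a function is seen as a multirelation, as in the convention for $\eta$ in Definition~\ref{def:morphisms-petri}. Under that convention, $\beta$ is precisely the multirelation induced by the function $\psi_S$. If the intended direction is really the opposite one, the lemma should be read with the folding condition meaning ``$\beta$ is the multirelation of a function on places''. Either way, the content of the lemma is the calculation above. No use of the pullback conditions $(b)$, $(c)$ or of $M = M'$ is needed for this; those only guarantee that $\Q(\psi)$ is a map of Petri nets at all, which is already given by Proposition~\ref{prop:Qonmor}.
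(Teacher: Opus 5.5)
Your argument is correct and matches the paper's one-line proof: with the identity as left leg, the span $S_P \xleftarrow{\id} S_P \xrightarrow{\psi_S} S_{P'}$ has multiplicity count equal to the $\{0,1\}$-valued graph of the total function $\psi_S$. Your orientation worry is justified. The paper's own proof, and its folding map $\varepsilon_P : \calU_P \to P$ with $\beta(t,s,i) = s$, show that the intended reading is a function $S \to S'$; the ``$S' \to S$'' in the definition is a slip, and read literally the lemma would fail whenever $\psi_S$ is non-injective.
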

\begin{proof}
For a span of sets of the form $S_P =\!\!=\!\!= S_P \to S_{P'}$ the multirelation $S_P \profto S_{P'}$ given by $\Q$ is a total function.
\end{proof}

Rational maps $(\varphi, \psi) : P \to P'$ and $(\varphi', \psi') : P' \to P''$ compose, following the composition of spans as pullbacks:
\[\begin{tikzcd}[row sep=1em, column sep=1em]
	&& {R''} && \\
	& R && {R'} \\
	P && {P'} && {P''}
	\arrow[dashed, from=1-3, to=2-2]
	\arrow[dashed, from=1-3, to=2-4]
	\arrow["\lrcorner"{anchor=center, pos=0.125, rotate=-45}, draw=none, from=1-3, to=3-3]
	\arrow["\varphi"', from=2-2, to=3-1]
	\arrow["\psi", from=2-2, to=3-3]
	\arrow["{\varphi'}"', from=2-4, to=3-3]
	\arrow["{\psi'}", from=2-4, to=3-5]
\end{tikzcd}\]
Here we must take care when discussing pullbacks since the two legs of a rational map belong to different classes of maps. So, formally, the pullbacks are taken in the category $\WGPetri_{\mathrm{gen}}$ of whole-grain Petri nets and arbitrary maps (that is, the maps of Definition~\ref{def:map-whole-grain}, without any pullback conditions), which corresponds to taking pullbacks componentwise in $\Set$. 
That the composite span gives a rational map is ensured by the following lemma:
\begin{lemma}
	\label{lem:pullback-cabling-etale}
 For whole-grain Petri nets $P', R, R'$, if $\psi : R \rightarrow P'$ is a cabling map, and $\varphi : R' \rightarrow P'	$ is an \'etale map, then there is a $\WGPetri_{\mathrm{gen}}$-pullback square as below for which $\psi''$ is a cabling map and $\varphi''$ is an \'etale map.
 \[\begin{tikzcd}[row sep=1em, column sep=1em]
	&& {R''} && \\
	& R && {R'} \\
	 && {P'} &&
	\arrow["{\varphi''}"', from=1-3, to=2-2]
	\arrow["{\psi''}", from=1-3, to=2-4]
	\arrow["\lrcorner"{anchor=center, pos=0.125, rotate=-45}, draw=none, from=1-3, to=3-3]
	\arrow["\psi"', from=2-2, to=3-3]
	\arrow["{\varphi'}", from=2-4, to=3-3]
\end{tikzcd}\]
\end{lemma}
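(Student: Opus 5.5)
We construct $R''$ componentwise as the pullback in $\Set$: $S_{R''} = S_R \times_{S_{P'}} S_{R'}$, and similarly for $I$, $T$, $O$, $M$. The structure maps are induced by the universal property, since $\psi$ and $\varphi'$ commute with the structure maps. We then check three things, in order.

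\textbf{Step 1: $R''$ is a whole-grain Petri net.} $M_{R''}$ is finite: $\psi_M$ is the identity, so $M_R = M_{P'}$, and hence $M_{R''} \cong M_{R'}$, which is finite. Since $\psi_T$ is the identity, $T_{R''} \cong T_{R'}$ and $I_{R''} \cong I_R \times_{I_{P'}} I_{R'}$. Because $\varphi'$ is étale, $I_{R'} \cong T_{R'} \times_{T_{P'}} I_{P'}$, so
\[ I_{R''} \cong I_R \times_{T_{P'}} T_{R'}, \]
with the map to $T_{R''}$ being the projection. Its fibre over $t'$ is therefore the fibre of $I_R \to T_R = T_{P'}$ over $\varphi'_T(t')$. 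This fibre is finite, and nonempty because $I_R\to T_R$ is surjective; so $I_{R''}\to T_{R''}$ has finite fibres and is surjective. The same argument handles $O$. For surjectivity of $I_{R''}+O_{R''}\to S_{R''}$, take $(r,r')$ with $\psi_S(r)=\varphi'_S(r')=p$. Since $R'$ has no isolated places, $r'$ has an edge, say $i'\in I_{R'}$. Its image $\varphi'(i')\in I_{P'}$ lies over $p$. Because $\psi$ is cabling, square (a) is a pullback, so there is a unique $i\in I_R$ over $r$ and $\varphi'(i')$. Then $(i,i')\in I_{R''}$ lies over $(r,r')$.

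\textbf{Step 2: pullback conditions.} This is pure pasting of pullback squares. For $\psi'':R''\to R'$, $T_{R''}\to T_{R'}$ is the identity up to the canonical iso. We may choose the pullback representative so that it is literally the identity, or set $T_{R''}=T_{R'}$ directly. Square (a) for $\psi''$, namely $I_{R''}\to I_{R'}$ over $S_{R''}\to S_{R'}$, is a pullback. To see this, note that in the cube formed by the pullbacks $I_{R''}$, $S_{R''}$ over $P'$, the composite square $I_{R''}\to I_{R'}\to I_{P'}$ versus $S_{R''}\to S_{R'}\to S_{P'}$ equals the pasting of $I_{R''}\to I_R\to I_{P'}$, which is a pullback by (a) for $\psi$ and the definition of $I_{R''}$. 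The two-pullback lemma in $\Set$ then gives the claim; (d) and (e) are analogous. For $\varphi''$ to be étale, (b) and (c) are pullbacks by the symmetric argument using that $\varphi'$ is étale. The marking map $M_{R''}\to M_R$ is $M_{R'}\times_{M_{P'}}M_{P'}$ collapsing onto $M_R=M_{P'}$, so it is an identity after the same choice of representative ($M_{R''}=M_{R'}$, with $M_{R'}=M_{P'}=M_R$ for the étale map).

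\textbf{Step 3: pullback in $\WGPetri_{\mathrm{gen}}$.} This is automatic: limits there are computed componentwise in $\Set$, because the general maps are just commuting componentwise functions.

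The main obstacle is bookkeeping. It consists of the strict identity requirements on $T$ and $M$, which force a careful choice of pullback representatives, and the no-isolated-places condition in Step 1, which is the only non-formal point and uses the cabling pullback (a) essentially.
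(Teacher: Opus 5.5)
Your strategy is the paper's: componentwise pullback in $\Set$, pullback conditions via the pasting law, and representatives chosen so that $\psi''_T$ and $\varphi''_M$ are literal identities. Your Step~1 usefully checks that $R''$ is a legitimate whole-grain net, which the paper leaves implicit. However, two steps fail as written.

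\textbf{Markings.} The marking bookkeeping rests on a misreading of the definitions. In this paper it is the \emph{\'etale} map whose marking component is an identity. A cabling map only makes square (e) a pullback, and the Remark after the definition stresses that cabling maps may change the marking. So ``$\psi_M$ is the identity, so $M_R=M_{P'}$'' is false. For a counterexample, let $P'$ have a single token, on a place $p$. Let $R$ be $P'$ with $p$ split into two places $r_1,r_2$, each receiving a copy of $p$'s edges and of its token. The evident map $\psi:R\to P'$ is cabling, with $|M_R|=2$ and $|M_{P'}|=1$. Taking $\varphi'=\id_{P'}$ gives $R''\cong R$, so $M_{R''}$ has two elements while $M_{R'}$ has one. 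What does hold is $M_{R'}=M_{P'}$, because $\varphi'$ is \'etale; hence $M_{R''}=M_R\times_{M_{P'}}M_{R'}\cong M_R$. Finiteness of $M_{R''}$ survives for this corrected reason, but your representative $M_{R''}=M_{R'}$ does not make $\varphi''_M$ an identity. Take instead $M_{R''}:=M_R$, $\varphi''_M:=\id$ and $\psi''_M:=\psi_M$ (landing in $M_{P'}=M_{R'}$). Then obtain (e) for $\psi''$ from (e) for $\psi$ by the pasting described below.

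\textbf{Pasting in Step~2.} The pasting is set up in the wrong direction. Consider the square with top row $I_{R''}\to I_{R'}\to I_{P'}$ and bottom row $S_{R''}\to S_{R'}\to S_{P'}$. It contains square (a) of the \'etale map $\varphi'$, which is generally not a pullback; for $\psi=\id$ your composite is exactly that square. So the composite need not be a pullback, and nothing can be cancelled. Instead, stack square (a) for $\psi''$ on top of the pullback square defining $S_{R''}$ (horizontal sides $S_{R''}\to S_{R'}$ and $S_R\to S_{P'}$). The resulting rectangle has horizontal sides $I_{R''}\to I_{R'}$ and $S_R\to S_{P'}$. It equals the defining pullback of $I_{R''}$ stacked on square (a) for $\psi$, so it is a pullback, and the pasting law yields (a) for $\psi''$. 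The same pattern gives (d) and (e) for $\psi''$. Exchanging the roles of $\psi$ and $\varphi'$, with $T$ in place of $S$, it also gives (b) and (c) for $\varphi''$. This is the cube argument the paper uses.
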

\begin{proof}
 The diagram
\[\begin{tikzcd}[row sep=0.8em, column sep=0.8em]
	& {I_R} && {I_{P'}} \\
	{S_{R}} && {S_{P'}} \\
	& {I_{R''}} && {I_{R'}} \\
	{S_{R''}} && {S_{R'}}
	\arrow[from=1-2, to=1-4]
	\arrow[from=2-1, to=1-2]
	\arrow[from=2-1, to=2-3]
	\arrow[from=2-3, to=1-4]
	\arrow[from=3-2, to=1-2]
	\arrow[from=3-2, to=3-4]
	\arrow[from=3-4, to=1-4]
	\arrow[from=4-1, to=2-1]
	\arrow[from=4-1, to=3-2]
	\arrow[from=4-1, to=4-3]
	\arrow[from=4-3, to=2-3]
	\arrow[from=4-3, to=3-4]
\end{tikzcd}\]
commutes in $\Set$, by definition of maps in $\WGPetri_{\mathrm{gen}}$, with the front and back squares pullbacks. The right-hand square is a pullback because $\varphi'$ is an \'etale map, and therefore by the pasting law for pullbacks the left-hand square is also a pullback. This is one of the conditions for $\varphi''$ to be \'etale; other conditions are derived in the same way. 

The proof that $\psi''$ is a cabling map is also analogous, but for a minor technical point: to preserve the cabling property we use that pullbacks of identities can be taken to be identities. (Indeed one can make a global choice of pullbacks in $\Set$ that satisfies this.) We omit the proof. 
\end{proof}

This composition operation for rational maps gives rise to a bicategory. There is an identity rational map for every Petri net $P$ given by the identity span $P =\!= P =\!= P$. Pullbacks are only defined up to isomorphism and so the composition of spans is
only associative and unital up to coherent invertible 2-cells. The 2-cells are
defined as standard morphisms of spans.

\begin{definition}
  For whole-grain Petri nets $P$ and $P'$ and rational maps as on the left below,
\[\begin{tikzcd}[row sep=0.3em]
	P & R & {P'} \\
	P & R & {P'}
	\arrow["\varphi"', from=1-2, to=1-1]
	\arrow["\psi", from=1-2, to=1-3]
	\arrow["{\varphi'}"', from=2-2, to=2-1]
	\arrow["{\psi'}", from=2-2, to=2-3]
\end{tikzcd}\qquad\qquad \qquad 
    \begin{tikzcd}[row sep=0.3em]
	& {R} \\
	P && {P'} \\
	& R'
	\arrow["\varphi"', from=1-2, to=2-1]
	\arrow["\psi", from=1-2, to=2-3]
	\arrow["\alpha", from=1-2, to=3-2]
	\arrow["{\varphi'}", from=3-2, to=2-1]
	\arrow["{\psi'}"', from=3-2, to=2-3]
\end{tikzcd}\]
  a \emph{$2$-cell} $(\varphi, \psi) \to (\varphi', \psi')$ is a morphism $\alpha : R \rightarrow R'$ that makes the above triangles commute.
  \end{definition}
  Whole-grain Petri nets, rational maps and $2$-cells assemble into a
  bicategory that we denote $\WGPetri$. That we have a bicategory and not a
  category is typical for span-like morphisms and this is closely connected to
  the fact that the whole-grain setting tracks occurrences of elements with
  individual names, not just multiplicities. Any renaming of elements will give
  a 2-cell, and in fact \emph{every} 2-cell in this bicategory is a renaming:

\begin{restatable}[Discreteness]{theorem}{discreteness}
\label{thm:discreteness}
  For whole-grain Petri nets $P$ and $P'$ and rational maps
  $(\varphi, \psi), (\varphi', \psi') : P \rightarrow P'$, there is at most one
  $2$-cell $(\varphi, \psi) \rightarrow (\varphi', \psi')$ and when it exists it is invertible.
\end{restatable}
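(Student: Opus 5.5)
The plan is to show directly that any $2$-cell $\alpha : R \to R'$ between $(\varphi,\psi)$ and $(\varphi',\psi')$ is forced componentwise, and that each component is a bijection. Uniqueness and invertibility then follow together: the componentwise inverses automatically form a map of whole-grain nets, and they satisfy the two triangle equations because $\alpha$ does. I will treat the five components $T, M, I, O, S$ in that order, using the cabling axioms of $\varphi,\varphi'$ and the \'etale axioms of $\psi,\psi'$.

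\emph{Transitions and markings.} Since $\varphi$ and $\varphi'$ are cabling maps, $\varphi_T$ and $\varphi'_T$ are identities, so $T_R = T_P = T_{R'}$. The equation $\varphi'_T\alpha_T = \varphi_T$ then forces $\alpha_T = \id$. Dually, $\psi_M$ and $\psi'_M$ are identities because $\psi,\psi'$ are \'etale, so $M_R = M_{P'} = M_{R'}$. The equation $\psi'_M\alpha_M = \psi_M$ then forces $\alpha_M = \id$.

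\emph{Edges.} Because $\psi$ is \'etale, square $(b)$ is a pullback, giving $I_R \cong T_R \times_{T_{P'}} I_{P'}$; likewise $I_{R'} \cong T_{R'} \times_{T_{P'}} I_{P'}$. The component $\alpha_I$ commutes with the projections to $T$ (where $\alpha_T=\id$) and to $I_{P'}$ (by $\psi'\alpha=\psi$). It is therefore the map between these two pullbacks induced by identities, so it is uniquely determined and bijective. The same argument using square $(c)$ handles $\alpha_O$.

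\emph{Places.} This is the step I expect to be the main obstacle, since places are not directly pinned down by a pullback with $T$. For uniqueness, I use that $I_R + O_R \to S_R$ is surjective (no isolated places). Every $r \in S_R$ is the source of some edge $e$, and $\alpha_S(r)$ must equal the source of $\alpha_I(e)$ or $\alpha_O(e)$, both of which are already determined. Surjectivity of $\alpha_S$ follows in the same way: every place of $R'$ is the source of an edge of $R'$, that edge is the image of an edge of $R$ since $\alpha_I,\alpha_O$ are bijections, and commutation carries sources to sources.

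For injectivity I use the cabling pullbacks $(a)$ and $(d)$, which give $I_R \cong S_R \times_{S_P} I_P$ and similarly for $R'$. Suppose $\alpha_S(r_1)=\alpha_S(r_2)$. Applying $\varphi'_S$ gives $\varphi_S(r_1)=\varphi_S(r_2)=p$. Say $r_1$ has an input edge $(r_1,i)$ with $i \in I_P$ over $p$; the output case is identical using $(d)$. Then $(r_2,i) \in I_R$ as well. Since $\alpha_I$ commutes with the projections to $S$ and to $I_P$ (the latter by $\varphi'\alpha=\varphi$), we get $\alpha_I(r_1,i)=(\alpha_S r_1,i)=(\alpha_S r_2,i)=\alpha_I(r_2,i)$. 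Injectivity of $\alpha_I$ then gives $r_1=r_2$.

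Hence all components of $\alpha$ are bijections and all are determined by the data, which proves both at-most-one and invertibility.
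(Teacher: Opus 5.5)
Your proof is correct and follows essentially the same route as the paper's appendix. You get $\alpha_T=\id$ from the cabling legs, and $\alpha_I,\alpha_O$ are forced and bijective by the \'etale pullbacks $(b)$ and $(c)$; $\alpha_S$ is then pinned down by the no-isolated-places surjection and shown bijective via the cabling pullbacks $(a)$ and $(d)$. The only differences are minor: you also dispose of the marking component $\alpha_M$, which the paper's proof passes over, and your element-wise injectivity and surjectivity argument for $\alpha_S$ unpacks the paper's extensivity-plus-pasting argument that the square relating $I+O\to S$ in $R$ and $R'$ is a pullback.
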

\begin{proof}
Appendix~\ref{sec:discreteness-proof}.
\end{proof}

In other words, the bicategory $\WGPetri$ is locally essentially discrete. This property
prevents any coherence issues and greatly eases the proofs of several theorems
below.

We have described two categorical models for Petri nets, a bicategory $\WGPetri$ and a category $\Petri$. We now connect them using the multiplicity-counting operation $\Q$ (of \S\ref{subsec:elem-mult}), whose action on whole-grain Petri nets extends to a 2-functor $\WGPetri \to \Petri$ (this is a straightforward verification of the axioms). For this statement to make sense, $\Petri$ is understood as a locally discrete $2$-category.
\begin{theorem}
There is a 2-functor $\Q : \WGPetri \to \Petri$ extending the action on objects and morphisms described in Proposition~\ref{prop:Qonobjects} and Proposition~\ref{prop:Qonmor}.
\end{theorem}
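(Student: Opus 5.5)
Since $\Petri$ is locally discrete, a 2-functor $\Q:\WGPetri\to\Petri$ amounts to three things. First, an assignment on objects and 1-cells, already given by Proposition~\ref{prop:Qonobjects} and Proposition~\ref{prop:Qonmor}. Second, a proof that 2-cells are respected. Third, a proof that identities and composition are preserved up to equality, so that all coherence 2-cells are identities. Because the target has only identity 2-cells, the coherence axioms for a (pseudo)functor from a bicategory are then automatic. The plan is to verify these three points in turn.

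\textbf{2-cells.} A 2-cell $\alpha:(\varphi,\psi)\Rightarrow(\varphi',\psi')$ is invertible by Theorem~\ref{thm:discreteness}, and $\alpha$ commutes with both legs. Its component $\alpha_T$ is a bijection with $\psi'_T\alpha_T=\psi_T$. Since $T_R=T_P=T_{R'}$ with $\varphi_T,\varphi'_T$ identities, $\alpha_T$ must be the identity, so $\eta=\eta'$. Likewise $\alpha_S$ is a bijection $S_R\cong S_{R'}$ over $S_P\times S_{P'}$. It therefore restricts to bijections between the fibres $\varphi_S^{-1}\{p\}\cap\psi_S^{-1}\{p'\}$, giving $\beta=\beta'$. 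So $\Q$ sends every 2-cell to an identity, and it respects vertical and horizontal composition of 2-cells trivially.

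\textbf{Identities and composition.} The identity span $P=P=P$ gives $\eta=\id$ and $\beta$ equal to the graph of $\id_{S_P}$, which is the identity multirelation. For composition, take $(\varphi,\psi):P\to P'$ and $(\varphi',\psi'):P'\to P''$, with composite apex $R''=R\times_{P'}R'$ computed componentwise in $\Set$ as in Lemma~\ref{lem:pullback-cabling-etale}. On transitions, $T_{R''}=T_R\times_{T_{P'}}T_{P'}\cong T_R$, because $\varphi'_T$ is an identity; with the chosen pullbacks this is literally $T_R$. The composite $\eta$ is then $\psi'_T\circ\psi_T$, as required. On places, $S_{R''}=S_R\times_{S_{P'}}S_{R'}$. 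The key step is the standard fact that multiplicity counting turns span composition into multirelation composition:
\[
\bigl|\{(r,r')\in S_{R''} : \varphi_S(r)=p,\ \psi'_S(r')=p''\}\bigr| \;=\; \sum_{p'\in S_{P'}} \beta(p,p')\,\beta'(p',p'').
\]
This holds by decomposing the pullback as a coproduct over $p'$ of products of fibres, exactly as in the proof of Proposition~\ref{prop:Qonmor}.

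\textbf{Main obstacle.} The one delicate point is finiteness. The sum above ranges over possibly infinitely many $p'$, and the fibres could in principle be infinite, so one must check that the composite is still a finitary multirelation. I would handle this with the pullback conditions. Fix $p$. Because $\varphi$ is cabling, each $r$ over $p$ matches, through condition (a) or (d), an element of $I_P+O_P$ over $p$. This, combined with the no-isolated-places and finite-fibre hypotheses, lets one bound the relevant fibres. Where the paper's ambient convention allows multirelations valued in $\mathbb{N}\cup\{\infty\}$, this step reduces to the cardinality identity above, which holds in general. With these points established, $\Q$ is a 2-functor.
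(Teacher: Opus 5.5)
Your proposal is correct and matches the paper's argument. The paper's only substantive step is the 2-cell case, which it handles as you do: $\alpha_T=\id$ is forced by the cabling legs, and invertibility of $\alpha_S$ (Theorem~\ref{thm:discreteness}) gives equal fibre counts; your explicit checks of identities and composition via fibre counting just spell out what the paper calls immediate.

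The finiteness point you flag as the main obstacle needs no separate argument, and your sketch of it is the vaguest part of the proposal. By Lemma~\ref{lem:pullback-cabling-etale} the composite span is itself a rational map, so Proposition~\ref{prop:Qonmor} already makes its image finitary, and your cardinal identity then gives equality on the nose.
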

\begin{proof}
  It remains only to deal with the $2$-cells. A $2$-cell $\alpha : R \to R'$ between rational maps $P \leftarrow R \rightarrow P'$ and $P \leftarrow R' \rightarrow P'$  implies that $T_R = T_{R'}$ and the functions $T_P =\!= T_R \to T_P'$ and $T_P =\!= T_R \to T_{P'}$ are the same. We have seen that $\alpha$ has a component $S_R \to S_{R'}$ which is an isomorphism of spans between $S_P \leftarrow S_R \rightarrow S_{P'}$ and $S_P \leftarrow S_{R'} \rightarrow S_{P'}$. Necessarily these spans also have the same image under $\Q$. The axioms for a 2-functor follow immediately.
\end{proof}

We note that this functor is surjective on objects and morphisms.  
\begin{proposition} 
	\label{prop:Q-surjective}
	For every $P \in \Petri$ there exists $P' \in \WGPetri$ with $\Q(P') = P$, and for every morphism $(\eta, \beta) : P_1 \to P_2$ of ordinary Petri nets there is a rational map $(\psi, \varphi) : P_1' \to P_2'$ with $\Q(\psi, \varphi) = (\eta, \beta)$.
\end{proposition}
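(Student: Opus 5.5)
The plan is to build explicit ``named'' versions of all the data by replacing multiplicities with finite sets of copies. For objects: given $P = (S, T, \Pre, \Post, \mu)$, set $S' = S$, $T' = T$, and
\[
I = \{(s,t,i) \mid i < \Pre(s,t)\}, \qquad O = \{(s,t,i) \mid i < \Post(s,t)\}, \qquad M = \{(s,i) \mid i < \mu(s)\},
\]
with the evident projections. The maps $I \to T$ and $O \to T$ have finite fibres because $\Pre(-,t)$ and $\Post(-,t)$ have finite support and finite values. $M$ is finite because $\mu$ is a finite multiset. Surjectivity of $I \to T$ is groundedness, and surjectivity of $I+O \to S$ is the absence of isolated places. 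By construction the fibre cardinalities recover $\Pre$, $\Post$ and $\mu$, so $\Q(P') = P$.

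For morphisms, let $(\eta,\beta) : P_1 \to P_2$ be given, and let $P_1', P_2'$ be the canonical lifts just constructed. I build the middle net $R$ of the span directly.
\begin{itemize}
\item Transitions: $T_R = T_1$.
\item Places: $S_R = \{(p,p',k) \mid k < \beta(p,p')\}$, with legs $S_R \to S_1$ and $S_R \to S_2$ the projections, so that $\Q$ of this span of places is $\beta$.
\item Cabling leg: $I_R$, $O_R$ and $M_R$ are forced by the pullback conditions (a), (d), (e). That is, $I_R = S_R \times_{S_1} I_1$, and similarly for $O_R$ and $M_R$, with $I_R \to T_R$ the composite to $T_1$.
\end{itemize}
The nontrivial work is on the étale leg $\psi : R \to P_2'$. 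We need functions $I_R \to I_2$, $O_R \to O_2$ and $M_R \to M_2$ such that $M_R \to M_2$ is a bijection and squares (b), (c) are pullbacks. Concretely, for each $t \in T_1$ and $p' \in S_2$, we need a bijection between $\{x \in I_R \text{ over } t \text{ and over } p'\}$ and the fibre of $I_2$ over $(p',\eta(t))$.

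The cardinality of the first set is $\sum_p \beta(p,p')\Pre_1(p,t)$, and that of the second is $\Pre_2(p',\eta(t))$. These agree by the equations in the remark following Definition~\ref{def:morphisms-petri}. So I choose such a bijection arbitrarily for each pair $(t,p')$, and likewise for $O$ and for the markings, where the first equation gives $|M_R \cap (\text{over } p')| = \mu_2(p')$. Gluing these fibrewise bijections gives the required maps. They commute with the projections to $S$ and $T$ by construction, and fibrewise bijectivity is exactly the pullback condition in $\Set$. The bijection on markings lets us replace $M_2$ by $M_R$ up to relabelling; strictly, the definition demands an identity, so I take $M_{P_2'}$ to be defined as $M_R$ via this bijection. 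Alternatively, I first fix the relabelled lift of $P_2$, which still satisfies $\Q(P_2') = P_2$.

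It remains to check that $R$ is a whole-grain net and that $\Q$ gives $(\eta,\beta)$.
\begin{itemize}
\item Finite fibres of $I_R \to T$ follow from the bijection with finite fibres of $I_2$.
\item Surjectivity of $I_R \to T_R$ holds because $I_1 \to T_1$ is surjective and each element of $I_1$ over $p$ lifts provided some $\beta(p,p') > 0$. That in turn follows from the $\Pre$ equation, since $\Pre_2(-,\eta(t))$ is nonzero somewhere.
\item Surjectivity of $I_R+O_R \to S_R$ holds for a place $(p,p',k)$ because $p$ is non-isolated in $P_1$, via the pullback.
\end{itemize}
Then $\Q(\varphi,\psi) = (\eta,\beta)$ by the construction of $S_R$ and $T_R$.

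The main obstacle I anticipate is the strictness of the marking condition. The definition requires the étale leg to be the identity on markings and the cabling leg's (e) square to be a pullback, while the lifts $P_1', P_2'$ must be fixed independently. The fix is the relabelling of $M_2$ described above. I also need to confirm that the statement permits the lifts $P_i'$ to depend on the chosen morphism, which its quantifiers appear to allow.
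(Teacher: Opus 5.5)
Your proposal is correct and is essentially the paper's own construction. It uses the same canonical lifts, $T_R = T_{P_1}$, $S_R$ indexed by $k < \beta(p,p')$, $I_R$ and $O_R$ as pullbacks along $S_R \to S_{P_1}$, and an \'etale leg built from fibrewise chosen bijections, which exist by exactly the three equations in the remark after Definition~\ref{def:morphisms-petri}.

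The one difference is the marking. You relabel $M_{P_2'}$, which makes the lift of $P_2$ depend on the morphism. The paper instead sets $M_R := M_{P_2}$ and transports the cabling leg's marking data along your bijection $M_{P_2} \cong S_R \times_{S_{P_1}} M_{P_1}$. Square $(e)$ stays a pullback, since that only requires the comparison map to be a bijection. This keeps both lifts fixed and canonical, which is the form the paper later relies on.
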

\begin{proof}[Proof sketch]
    For the places and transitions of $P'$, take the corresponding sets in $P$. Then let $I := \left\{(p, t, i) \mid 1 \leq i \leq \Pre(p, t)\right\}$ and $O := \left\{(p, t, i) \mid 1 \leq i \leq \Post(p, t)\right\}$, with span legs given by suitable projections out of these sets. Finally let $M:= \{(p, i) \mid 1 \leq i \leq \mu(p) \}$ with the obvious projection to places. We have defined $P'$ with $\Q(P') = P$.

	Now let $P_1, P_2 \in \Petri$ and define whole-grain nets $P_1', P_2'$ with $\Q(P_i') = P_i$ as in the previous paragraph. Given a morphism $(\eta, \beta): P_1 \rightarrow P_2$,
	 build a whole-grain Petri net $R$ with the components $T_R := T_{P_1}$, $S_R :=  \left\{(p, p', k) \mid p \in S_{P_1}, p' \in S_{P_2}, 1 \leq k \leq \beta(p, p')\right\}$, $I_R := \left\{(p, p', k, t, i) \mid (p, p', k) \in S_R, t \in T_R, 1 \leq i \leq \Pre_{P_1}(p, t) \right\}$, 
 $O_R := \{(p, p', k, t, i) \mid (p, p', k) \in S_R, t \in T_R, 1 \leq i \leq \Post_{P_1}(p, t) \}$, and $M_R = M_{P_2}$, equipped with appropriate projection maps. Then build a span $P_1' \xleftarrow{\psi} R \xrightarrow{\varphi} P_2'$ in terms of projections, chosen bijections and the function $\eta$. We omit the rest of the proof including the verification that $\Q(\psi, \varphi) = (\eta, \beta)$.
\end{proof}    

We emphasize that $\Q : \WGPetri \to \Petri$ is not an equivalence: the whole-grain Petri net
  \[
  \begin{tikzpicture}
    \node[place, scale = 0.5, fill=blue!20]   (p1) at (-1, 0) {}
     [children are tokens, token distance=0.5em]
     child {node [token] {$x$}}
     child {node [token] {$y$}};;
\node[transition, scale=0.6, fill=black!20] (a1) at (0, 0) {$a$} ;
\draw[-latex,thick,bend left=15] (p1) to (a1)  ;
\draw[-latex,thick,bend right=15] (p1) to (a1)  ;
\end{tikzpicture}
\]
  has four distinct (rational) automorphisms, since the two tokens and the two edges can be permuted, however its image under $\Q$ only has an identity automorphism.
 
\section{The token game: paths, occurrence nets, and event structures}
\label{subsec:occ}

The ``token game'' specifies the operational behaviour of a Petri net. There is
only one rule: when a transition is fired, the marking is modified according to
the incoming and outgoing edges for that transition. Slightly more formally, in an ordinary Petri net, a transition $t$ is enabled if, for every place $p$, $\mu(p) \geq \Pre(p, t)$, and the firing of an enabled transition has the action of updating $\mu$ to a new marking $\mu'$ given by $\mu'(p) = \mu(p) - \Pre(p, t) + \Post(p, t)$ for every $p$.

But this simple rule creates complex dynamics: several transitions may be enabled at a given point and firing one might disable the other or enable new transitions. So a Petri net generally admits many possible execution behaviours (\emph{cf.} the example in \S\ref{example-intro}), including some in which firings occur in parallel.

For whole-grain Petri nets, the complexities are the same, and in addition one must track token and edge identities. 

In this section we recall the formal notion of execution path (or \emph{process}) for a Petri net, and we then define occurrence nets which, as we recall, represent colimits of paths. We compare the whole-grain view and the traditional view on paths: the latter is based on causal nets \cite{ES1} and the former uses directed acyclic graphs
\cite{WGPN}.

\subsection{The execution paths of a Petri net}
\label{subsec:ex-paths}

One key idea of \cite{WGPN} is to organize the transitions of a given execution path
into a directed acyclic graph, where a node represents a transition firing and an edge
between two firings indicates that a token is produced by one and consumed by
the other (via a place). Technically this gives an ``open-ended'' graph with dangling edges, because the tokens in the initial marking are not produced by any firings, and the tokens in the final marking are not consumed by any firings. 

Recall that a directed graph can be defined as a pair of sets $N, A$ (nodes and arcs) together with source and target functions $s, t: A \to N$. Graphs in this paper are defined in this style but they are open-ended, grounded, and acyclic: 
\begin{definition}
  An \emph{open-ended graph} $G$ consists of a pair of finite sets $(N, A)$
  of nodes and arcs, together with partial functions $s : A \rightharpoonup N$ and $t : A \rightharpoonup N$,
  specifying the source and target nodes of each arc where they are defined. Say $G$ is \emph{grounded} if $t$ is surjective, and \emph{acyclic} if the only directed path from a node to itself has length 0. 

  The \emph{in-boundary} of $G$ is the subset of $A$ where $s$ is undefined, and
  the \emph{out-boundary} is the subset of $A$ where $t$ is undefined. These
  subsets are denoted $\inb(G)$ and $\outb(G)$, respectively.
\end{definition}

We observe, following \cite{WGPN}, that graphs can be seen as special whole-grain Petri nets, if one temporarily drops the assumption that Petri nets should have no isolated places\footnote{Kock \cite{WGPN} does not impose this condition on whole-grain nets. This axiom is not required for the basic theory, but saves us a lot of coherence trouble via Theorem~\ref{thm:discreteness}. Temporarily dropping it when considering graphs causes no issues.}.
Indeed the data of an (open-ended, acyclic, grounded) graph $G = (N, A, s, t)$ can be written out as $A \hookleftarrow I \ \xrightarrow{t} N \xleftarrow{s}\ O \hookrightarrow A$ where $I$ is the complement of $\inb(G)$ in $A$ and $O$ is the complement of $\outb(G)$ in $A$. The initial marking is given by the inclusion map $\inb(G) \hookrightarrow A$.

A graph $G$ is a whole-grain Petri net and therefore also gives rise to an ordinary Petri net $\Q(G)$. In this instance the operation $\Q$ does not actually discard any information because a graph has at most a single token per place and no parallel edges. A Petri net of the form $\Q(G)$ is called a \emph{causal net} (introduced using different language in \cite{ES1}).

\begin{definition}
  For $P \in \WGPetri$, a \emph{path} of $P$ is a graph $G$  together with an \'etale map $p : G \to P$. A morphism of paths $(G, p) \to (G', p')$
  is defined as an \'etale map $\psi : G \to G'$ such that $p = p' \circ \psi$. We
  write $\Path(P)$ for the \emph{category of paths of $P$}.
\end{definition} 

\begin{definition}
  For $P \in \Petri$, a \emph{path} of $P$ is a graph $G$   together with a folding map $p : \Q(G) \to P$. A morphism of paths $(G, p) \to (G', p')$
  is defined as an \'etale map $f : G \to G'$ such that $p = p' \circ \Q(f)$. We
  write $\Path(P)$ for the \emph{category of paths of $P$}.
\end{definition}

Maps of Petri nets induce functors between categories of paths:
\begin{lemma}[\cite{WGPN}]
	\label{lem:functoriality}
	Let $P$ and $P'$ be whole-grain Petri nets.
	\begin{enumerate}
\item An \'etale map $\varphi : P \to P'$ induces a functor $\Path(P) \to \Path(P')$ given by post-composition: $(G, p : G \to P)$ is mapped to $(G, \varphi \circ p$).
\item A cabling map $\psi : P' \to P$ induces a functor $\Path(P) \to \Path(P')$ defined by pullback along $\psi$. A path $(G, p : G \to P)$ is mapped to $(\psi^*G, \psi^*G \to P')$ given by 
\[\begin{tikzcd}[row sep=1.4em, column sep=1em]
	P & {P'} \\
	G & {\psi^*G}.
	\arrow["\psi"',from=1-2, to=1-1]
	\arrow["p",from=2-1, to=1-1]
	\arrow["\lrcorner"{anchor=center, pos=0.125, rotate=180}, draw=none, from=2-2, to=1-1]
	\arrow[from=2-2, to=1-2]
	\arrow[from=2-2, to=2-1]
\end{tikzcd}\]
	\end{enumerate}
Combining the two, a rational map $P \leftarrow R \rightarrow P'$ induces a functor $\Path(P) \to \Path(P')$.
\end{lemma}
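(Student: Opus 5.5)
We have to show that each kind of map really lands in $\Path(P')$ and that the assignments are functorial; the statement about rational maps then follows by composing the two functors.

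\emph{Part 1 (\'etale maps).} Given a path $(G,p)$ with $p : G \to P$ \'etale, the composite $\varphi\circ p : G \to P'$ is a map of whole-grain nets. It is \'etale for two reasons. First, squares $(b)$ and $(c)$ for $\varphi\circ p$ are obtained by pasting the corresponding pullback squares for $p$ and for $\varphi$, so the pasting law makes them pullbacks. Second, the marking component is a composite of identities. On morphisms, a map of paths $f : (G,p)\to(G',p')$ satisfies $p = p'\circ f$, hence $\varphi\circ p = (\varphi\circ p')\circ f$, and $f$ is sent to itself. Functoriality is then immediate.

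\emph{Part 2 (cabling maps).} Given $\psi : P'\to P$ cabling and a path $p : G\to P$, form the componentwise pullback $\psi^*G$ in $\WGPetri_{\mathrm{gen}}$. This is exactly the situation of Lemma~\ref{lem:pullback-cabling-etale}, which gives the following:
\begin{itemize}
\item the projection $\psi^*G \to P'$ is \'etale;
\item the projection $\psi^*G\to G$ is cabling.
\end{itemize}

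The real content is checking that $\psi^*G$ is again a graph, i.e.\ an open-ended, grounded, acyclic graph with finite node and arc sets. Since $\psi^*G\to G$ is cabling, its transitions are those of $G$, so the nodes are unchanged. Its places, inputs, outputs and marking are pullbacks over $G$ along maps that satisfy the place-\'etale conditions $(a)$, $(d)$, $(e)$.

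The key point is that each arc of $G$ lies in at most one of $I_G$, $O_G$ and is also in $M_G$ iff it lies outside $O_G$. The pullback conditions then give that each place $r$ of $\psi^*G$ has the same in/out/marking profile as its image arc in $G$. Concretely, $r$ has at most one input edge and at most one output edge (or marking token), and exactly one of "produced by a transition" or "initially marked" holds. Such a net is therefore of the form $A \hookleftarrow I \to N \leftarrow O \hookrightarrow A$ for a graph.

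The remaining properties transfer as follows:
\begin{itemize}
\item Finiteness comes from the finite fibres of $I_G,O_G\to T$ together with the finiteness of $G$. Each arc of $\psi^*G$ sits over an arc of $G$ with fibre given by the fibre of $S_{P'}\to S_P$ restricted to places hit by edges or marking.
\item Groundedness transfers because every place of $\psi^*G$ has an output edge exactly when its image does.
\item Acyclicity holds because any directed cycle in $\psi^*G$ projects to a directed cycle in $G$, as nodes are identical and arcs map to arcs compatibly with sources and targets.
\end{itemize}

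On morphisms, an \'etale map $f : G\to G'$ over $P$ induces $\psi^*f : \psi^*G\to\psi^*G'$ by the universal property of pullbacks. It is \'etale by the pasting argument of Lemma~\ref{lem:pullback-cabling-etale}: the square for $f$ over $P$ pasted with the pullback squares exhibits the required squares as pullbacks. Functoriality holds up to the canonical isomorphisms of pullbacks. Fixing a global choice of pullbacks (as in that lemma), or noting that maps of paths between pullbacks are determined uniquely by the universal property, makes it strictly functorial.

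For a rational map $P\leftarrow R\rightarrow P'$, the induced functor is the composite $\Path(P)\to\Path(R)\to\Path(P')$ of the two functors above.

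I expect the main obstacle to be verifying that $\psi^*G$ is genuinely a graph. This requires carefully using the cabling pullback conditions $(a)$, $(d)$, $(e)$ to rule out duplicated places feeding an edge and to control the marking. Everything else is pasting of pullbacks.
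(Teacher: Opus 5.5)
Your plan is the one the paper has in mind. The paper gives no proof; it cites Kock and only remarks that part (2) rests on Lemma~\ref{lem:pullback-cabling-etale}. The plan is: post-composition for \'etale maps, componentwise pullback for cabling maps with $\psi^*G\to G$ cabling and $\psi^*G\to P'$ \'etale, and a check that $\psi^*G$ is again a graph. Several parts are fine as written: Part 1, the transfer of each place's in/out/marking profile via conditions (a), (d), (e), the acyclicity argument, the action on morphisms, and the rational-map composite. One slip: internal arcs of $G$ lie in both $I_G$ and $O_G$. What you actually use is that $I_G\to A$ and $O_G\to A$ are injective and $O_G+M_G\to A$ is bijective.

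The gap is in the graph check, for the properties that concern transitions rather than places. Groundedness says every node has an incoming arc. Your reason (a place of $\psi^*G$ is produced iff its image is) is about places and does not address this. Worse, groundedness cannot be transferred along the cabling leg at all. By (a), the inputs of a node $n$ in $\psi^*G$ are the pairs $(r,i)$ with $i\in I_G$ over $n$ and $r$ in the fibre of $S_{\psi^*G}\to S_G$ over the place of $i$. That fibre is a fibre of $\psi_S$, and it can be empty, because cabling maps need not be surjective on places. So a node can lose inputs, and nothing on the $G$ side rules out losing all of them.

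The correct argument uses the other leg. Since $\psi^*G\to P'$ is \'etale, $I_{\psi^*G}\to T_{\psi^*G}$ is a pullback of the surjection $I_{P'}\to T_{P'}$, hence surjective.

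Finiteness has the same problem: finiteness of $G$ and of $I_G,O_G\to T$ does not bound the fibres of $\psi_S$. The clean fix has three pieces:
\begin{itemize}
\item $O_{\psi^*G}$ is finite, since its fibres over the finite node set match those of $O_{P'}\to T_{P'}$;
\item $M_{\psi^*G}=M_{P'}$ is finite;
\item $O_{\psi^*G}+M_{\psi^*G}\to S_{\psi^*G}$ is a bijection by your own profile argument.
\end{itemize}
In short, place-local properties come from the cabling leg to $G$, while transition-local properties and the marking come from the \'etale leg to $P'$. With these two repairs the proof goes through.
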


Note that Point (2) in Lemma~\ref{lem:functoriality} uses a special case of Lemma~\ref{lem:pullback-cabling-etale}.

\subsection{Occurrence nets and unfoldings}
\label{subsec:occ-unfold}

\subsubsection{Occurrence nets}

Occurrence nets are a special class of Petri nets. The motivation is that an occurrence net should represent, as a single `unfolded' net, the full domain of paths of a Petri net. But for ordinary nets it is difficult to state this formally. We will first define occurrence nets directly, and later discuss how they represent domains of paths. 

There are various equivalent definitions of occurrence nets, but in this paper it makes sense to use one based on causality and conflict relations, which makes plain the connection to event structures.
\begin{definition}[\cite{ES1}]
\label{def:occurrence-net}
  Let $P = (S, T, \Pre, \Post, \mu)$ be a Petri net with `no parallel edges': the multirelations $\Pre$ and $\Post$ are ordinary relations. Say two
  transitions $t, t' \in T$ are in \emph{immediate conflict}, written
  $t \mathbin{\#_0} t'$, if there exists a place $s$ with $\Pre(s, t)$ and
  $\Pre(s, t')$. Define binary relations $<$ and $\#$ on the set $S \uplus T$ as
  follows:
  \begin{itemize}
    \item $<$ is the transitive closure of  $\Pre \cup \Post$, so $u < v$ iff there is a non-empty path from $u$ to $v$ in the graph underlying $P$;
    \item $\#$ is the \emph{hereditary closure} of $\#_0$ under ${<}$, that is,
          the smallest relation containing $\#_0$ and such that if $u \mathbin{\#} u'$ and
          $u < v$ then $v \mathbin{\#} u'$.
        \end{itemize}
        The Petri net $P$ is an \emph{occurrence net} if: 
		\begin{itemize}
			\item the relations $<$ and
        $\#$ are irreflexive; 
		\item each place $s$ has at most one $t$ with $\Post(s, t)$;
		\item the set $\{ u \mid u < v \}$ is finite for every
        $v \in S \uplus T$; and
		\item the multiset $\mu$ is a set (at most one token
        per place) containing precisely the $\leq$-minimal places: those with no incoming transitions. This set is also called the \emph{in-boundary} of $P$. 
		\end{itemize}
  \end{definition}
Let $\Occ$ be the full subcategory of $\Petri$ spanned by occurrence nets. 

A \emph{whole-grain occurrence net} is a whole-grain Petri net $O$ such that $\Q(O)$ is an occurrence net.\footnote{Kock calls this an \emph{occurrence hypergraph}; his
    definition is equivalent but emphasises other aspects \cite{WGPN}.} Occurrence nets and rational maps form a bicategory, called $\WGOcc$, which turns out to be equivalent to $\Occ$.
\begin{theorem}
  The restricted 2-functor $\Q : \WGOcc \rightarrow \Occ$ forms an equivalence. 
\end{theorem}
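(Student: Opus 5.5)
We show that the 2-functor $\Q : \WGOcc \to \Occ$ is essentially surjective on objects, essentially surjective on 1-cells, and fully faithful on 2-cells. Since $\Occ$ is locally discrete, the last condition means: for occurrence nets $O, O'$, every hom-groupoid $\WGOcc(O,O')$ is equivalent, via $\Q$, to the discrete set $\Occ(\Q O, \Q O')$. By Theorem~\ref{thm:discreteness} each hom-category $\WGOcc(O,O')$ is already an equivalence relation: there is at most one 2-cell between two rational maps, and any such 2-cell is invertible. So it suffices to prove two things. First, $\Q$ is surjective on hom-sets, which we get from Proposition~\ref{prop:Q-surjective} once we check that its construction lands in $\WGOcc$. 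Second, any two rational maps with the same image under $\Q$ are connected by a 2-cell.

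\emph{Objects.} Given an occurrence net $N$, apply Proposition~\ref{prop:Q-surjective} to obtain $N'$ with $\Q(N') = N$. By definition $N'$ is a whole-grain occurrence net. Moreover, because $N$ has no parallel edges and its marking is a set, the fibres of $I, O, M$ over places and transitions are singletons or empty. Hence $N'$ is, up to isomorphism, the \emph{unique} whole-grain net over $N$. This rigidity drives the rest of the argument.

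\emph{1-cells.} Given $(\eta,\beta) : \Q O \to \Q O'$, Proposition~\ref{prop:Q-surjective} gives a rational map $O \leftarrow R \rightarrow O'$ between the canonical lifts. By rigidity we may take $O$ and $O'$ themselves to be these lifts. We need $R$ to be a legitimate whole-grain net, in particular grounded with no isolated places. This holds because the cabling leg has pullback squares over $O$, so places of $R$ inherit edges from places of $O$.

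\emph{Faithfulness up to 2-cells.} Let $O \xleftarrow{\varphi} R \xrightarrow{\psi} O'$ and $O \xleftarrow{\varphi'} R' \xrightarrow{\psi'} O'$ have the same image under $\Q$. Then $T_R = T_O = T_{R'}$, and the transition maps to $T_{O'}$ agree. The spans $S_O \leftarrow S_R \to S_{O'}$ and $S_O \leftarrow S_{R'} \to S_{O'}$ have equal multiplicity counts, so we can choose a bijection $\alpha_S : S_R \cong S_{R'}$ over $S_O \times S_{O'}$. Next we extend $\alpha_S$ to $I, O, M$. Because $\varphi$ is cabling, $I_R \cong S_R \times_{S_O} I_O$, and similarly for $R'$. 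This induces $\alpha_I$ compatibly with the maps to $S_O$ and $I_O$.

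The main obstacle is making $\alpha$ also commute with the \'etale leg into $I_{O'}$. The bijection $\alpha_S$ must be chosen so that the induced maps on edges agree over $O'$, and in general there can be several choices of $\alpha_S$ within a fibre. I would resolve this using the occurrence-net structure. In $O$ there are no parallel edges, so an element of $I_R$ over a fixed $t$ and a fixed place of $S_O$ is determined by its place in $S_R$. By the \'etale pullback square (b), such an element corresponds bijectively to an edge of $O'$ into $\eta(t)$, and in $O'$ that edge is determined by its place. Consequently, each place $r \in S_R$ with an outgoing edge is pinned down by the pair $(\varphi(r), \psi(r))$ together with any one transition it feeds. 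If $r$ has no output edges, it has an input edge, since $R$ has no isolated places, and we argue symmetrically using the $O$-edges and $M$. This shows that, restricted to fibres that actually carry edges or tokens, $\alpha_S$ is forced and coherent. Isolated fibres cannot occur, so any choice works there.

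Finally, checking $\alpha_M$ is the same argument using square (e) and $M_R = M_{O'}$. The resulting $\alpha$ is a 2-cell, and it is invertible by Theorem~\ref{thm:discreteness}, which completes the equivalence.
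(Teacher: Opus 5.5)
Your proposal is correct and follows essentially the paper's route: surjectivity on objects and on hom-categories via Proposition~\ref{prop:Q-surjective}, and fullness by building the 2-cell from a bijection $S_R \cong S_{R'}$ over $S_O \times S_{O'}$. Your no-parallel-edges argument is exactly the paper's observation that $\beta$ is an ordinary relation with $S_R \cong S_{R'} \cong \{(p,p') \mid \beta(p,p')\}$. Your rigidity remark justifies the hom-wise surjectivity that the paper only asserts, though groundedness of $R$ comes from the \'etale leg ($I_R \cong T_R \times_{T_{O'}} I_{O'}$), not from the cabling leg.
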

\begin{proof}[Proof sketch]
  It follows easily from Proposition~\ref{prop:Q-surjective} that is is surjective on objects and morphisms. This implies in particular that for every $O, O' \in \WGOcc$, the functor (of essentially discrete categories) $\Q : \WGOcc(O, O') \to \Occ(\Q(O), \Q(O'))$ is surjective on objects. We prove it is an equivalence by showing it is full (it is automatically faithful). 
 When $O \leftarrow R \rightarrow O'$ and  $O \leftarrow R' \rightarrow O'$ are rational maps with the same image $(\eta, \beta)$ under $\Q$, by definition of cabling maps $R$ and $R'$ have the same transitions, and multiplicity conditions on occurrence nets ensure that the multirelation $\beta$ is  actually an ordinary relation, moreover satisfying $R \cong R' \cong \{ (p, p') \in S_O \times S_{O'} \mid \beta(p, p') \}$. From this we construct an appropriate 2-cell.
 \end{proof}

One key property is that an occurrence net $O$ (whole-grain or ordinary) is a canonical colimit of its paths: the map $\colim_{(G, p) \in \Path(O)} G \to O$ is an isomorphism of Petri nets, see \cite[Proposition 8.15]{WGPN}. 

We note also that occurrence nets are `closed under cabling' in the following sense:
\begin{lemma}
\label{lem:cabling-to-occ}
   Let $\psi : P \rightarrow P'$ be a cabling map. If $P'$ is an occurrence net, then so is $P$.
\end{lemma}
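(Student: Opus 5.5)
Since $\Q(P)$ is a Petri net and $\Q(P')$ an occurrence net, I would transfer each condition of Definition~\ref{def:occurrence-net} from $\Q(P')$ to $\Q(P)$. The cabling map $\psi$ is the identity on transitions and has components $\psi_S : S_P \to S_{P'}$, $\psi_I$, $\psi_O$, $\psi_M$ with squares $(a)$, $(d)$, $(e)$ pullbacks. Reading these pullbacks fibrewise, for each place $s \in S_P$ they give bijections between the incoming edges of $s$ and those of $\psi_S(s)$ (square $(d)$), between the outgoing edges of $s$ and those of $\psi_S(s)$ (square $(a)$), and between the tokens on $s$ and those on $\psi_S(s)$ (square $(e)$), all compatible with the transition labels because $T_P = T_{P'}$. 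So $\psi$ is a label-preserving ``local isomorphism around places''.

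With this in hand, the conditions transfer as follows.

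\emph{No parallel edges.} Suppose two edges from $s$ to $t$ existed in $P$. Their images would be two distinct edges from $\psi_S(s)$ to $t$ in $P'$, by injectivity on fibres. This contradicts $\Q(P')$ having relations for $\Pre$ and $\Post$.

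\emph{At most one input transition per place, and the marking.} These transfer in the same way. A place $s$ has at most one incoming transition, has at most one token, and is marked exactly when it has no incoming transitions, because $\psi_S(s)$ has these properties and the fibres of $s$ and $\psi_S(s)$ correspond bijectively.

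\emph{Causality.} Every edge $s \to t$ or $t \to s$ of $P$ maps to an edge $\psi_S(s) \to t$ or $t \to \psi_S(s)$ of $P'$. So the map $S_P \uplus T \to S_{P'} \uplus T$ (given by $\psi_S$ and the identity) sends directed paths to directed paths of the same length, and hence preserves $<$. Irreflexivity of $<$ on $P$ follows: a cycle in $P$ would give a cycle in $P'$.

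\emph{Immediate conflict.} If $t \mathbin{\#_0} t'$ in $P$ via a place $s$, then $t \mathbin{\#_0} t'$ in $P'$ via $\psi_S(s)$. Since the map preserves both $<$ and $\#_0$, it preserves the hereditary closure $\#$, and irreflexivity of $\#$ transfers.

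The main obstacle is finiteness of $\{u \mid u < v\}$, because $\psi_S$ need not be injective and so the map need not reflect finiteness directly. To handle it, I would argue by backward search. The transitions below $v$ in $P$ are all transitions below the image of $v$ in $P'$, of which there are finitely many, since $T_P = T_{P'}$ and $<$ is preserved. Each transition has finitely many input places in $P$, because $I_P \to T$ has finite fibres. The places below $v$ are either input places of such transitions or $v$ itself; output places cannot lie strictly below anything except through subsequent transitions, which are again among these finitely many. So the down-set is a finite union of finite sets.

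One routine point remains. By the surjectivity axioms, the graph underlying $P$ has no isolated places, so every place is reached as above.
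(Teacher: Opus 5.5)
Your proof is correct and follows the route the paper indicates. The pullback squares $(a)$, $(d)$, $(e)$ give fibrewise bijections around each place, and $\psi_S$ together with the identity on transitions preserves $<$ and $\#_0$, so any violation of an occurrence-net axiom in $\Q(P)$ would yield one in $\Q(P')$; this is exactly the ``easy argument by contradiction'' the paper alludes to. Your direct handling of the finiteness condition is fine, provided ``transitions below $v$'' is read as $\le v$ so that the input places of $v$ itself are counted, and the closing remark about isolated places is not actually needed.
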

\begin{proof}[Proof note]
See \cite{WGPN} for a similar proof in case $P$ is a graph. The general proof is an easy argument by contradiction.
\end{proof}

\color{black}
\subsubsection{Unfolding of whole-grain Petri nets and \'etale maps}

One benefit of moving to the whole-grain approach (the key insight of \cite{WGPN}) is that it becomes easy to construct an unfolding with the expected universal property:
\begin{theorem}[\cite{WGPN}]
\label{ref:thmunfolding-etale}
  Let $\WGPetri_\textrm{\'et}$ and $\WGOcc_{\textrm{\'et}}$ denote the
  respective wide subcategories on \'etale morphisms. The inclusion functor $\WGOcc_{\textrm{\'et}} \hookrightarrow \WGPetri_{\textrm{\'et}}$ has a right adjoint $\U$. 
\end{theorem}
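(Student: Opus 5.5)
To build $\U$, I would define $\U(P)$ as the colimit of the path category, $\U(P) := \colim_{(G,p) \in \Path(P)} G$, computed componentwise in $\Set$. Here $\Path(P)$ is the category of graphs with \'etale maps to $P$. Since \'etale maps preserve arities of transitions and the marking on the nose, the colimit is again a whole-grain net. The legs of the colimiting cocone are \'etale, and the induced map $\varepsilon_P : \U(P) \to P$ is \'etale.

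Concretely, the transitions of $\U(P)$ are equivalence classes of pairs (path, node). Two such pairs are identified when they have the same causal history, that is, the same \emph{prime} path: a path with a unique maximal node. So I would describe the colimit explicitly as a net whose transitions are isomorphism classes of prime paths. The places are classes of (path, arc): either an initial token $m \in M$, or an output edge $o \in O$ of the last node of a prime path. The input and output edges are induced accordingly.

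The first key step is to check that $\U(P)$ is an occurrence net in the sense of Definition~\ref{def:occurrence-net} applied to $\Q(\U(P))$.
\begin{itemize}
\item Causality is acyclic, with finite downsets, because each transition is a finite acyclic prime path.
\item Each place has at most one producer by construction.
\item The marking is exactly $M$, sitting at the minimal places.
\item Conflict is irreflexive: a prime path cannot consume the same token twice, since \'etaleness of $G \to P$ gives an injective assignment of arcs of $G$ to tokens and places.
\end{itemize}
A subtle point is that \'etale maps between graphs are injective on nodes only up to the discreteness of automorphisms. Here I would use that a prime path, once its root data are fixed, has no nontrivial automorphisms over $P$, because names of edges and tokens pin everything down.

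The universal property comes next. For an occurrence net $O$ and an \'etale map $f : O \to P$, I would use the canonical colimit presentation $O \cong \colim_{\Path(O)} G$ (\cite[Proposition 8.15]{WGPN}). Post-composition with $f$ gives a functor $\Path(O) \to \Path(P)$ (Lemma~\ref{lem:functoriality}), hence an induced map $\colim_{\Path(O)} G \to \colim_{\Path(P)} G$, that is, $O \to \U(P)$. This map is \'etale because it is \'etale on each colimit leg and the colimit is computed componentwise. Its composite with $\varepsilon_P$ is $f$.

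Uniqueness comes from the same colimit presentation. A map $g : O \to \U(P)$ with $\varepsilon_P g = f$ is determined by its restrictions to paths of $O$. Each path of $O$ maps to a path of $\U(P)$, and paths of $\U(P)$ over a fixed path of $P$ are unique, because $\varepsilon_P$ induces an equivalence $\Path(\U(P)) \simeq \Path(P)$. That equivalence is the core lemma. Functoriality of $\U$ on \'etale maps is then automatic from the universal property.

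I expect the main obstacle to be showing that the componentwise colimit in $\Set$ really has the expected explicit description. It could fail if $\Path(P)$ is not filtered enough, so that the colimit identifies too much or too little. I would handle this by working with the full subcategory of prime paths, which is essentially a poset because \'etale maps over $P$ between prime paths are unique if they exist. I would then verify directly that every path is the union of its prime subpaths, so the two colimits agree.
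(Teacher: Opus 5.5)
Your plan is the natural one: define $\U P$ as the componentwise colimit $\colim_{(G,p)\in\Path(P)}G$, check that it is an occurrence net, get the factorisation from $O\cong\colim_{\Path(O)}G$, and get uniqueness from unique lifting of paths. It is consistent with what the paper records after the statement; the paper gives no proof of its own and cites Kock.

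\textbf{The main gap.} The paper \emph{deduces} both $\U P\cong\colim_{\Path(P)}G$ and the isomorphism $\Path(\U P)\cong\Path(P)$ from the theorem, so you cannot borrow the latter. Your uniqueness argument rests entirely on it, and you call it ``the core lemma'' without proving it. What you need is literal uniqueness of lifts, not an abstract equivalence: every \'etale $k:G\to\U P$ equals the colimit leg $\iota_{(G,\varepsilon_P k)}$. This goes by induction along a topological order of the nodes of $G$.
\begin{itemize}
\item In-boundary arcs are forced to the token places, because the marking component is the identity.
\item Suppose the input arcs of a node $n$ already go to the same places under $k$ and $\iota$. Then $k(n)$ and $\iota(n)$ are transitions with the same label in $P$ and the same input places, matched in the same way to the input edges of that label. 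Such a transition is unique: it is the class of a prime path, and that prime path is reconstructed from exactly this data.
\item Output arcs then agree by the \'etale bijection with the output edges of the label.
\end{itemize}
The fact that makes this work, and which you should isolate, is that $\Path(P)$ is \emph{thin}, and each of its morphisms is an injective embedding onto a down-closed subpath. Indeed, a morphism is the identity on the in-boundary, every node has an input arc (groundedness), and every arc has at most one target. So nodes are forced inductively, and output arcs are forced by \'etaleness over $P$.

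\textbf{Other weak points, which thinness repairs.}
\begin{itemize}
\item Morphisms of $\Path(P)$ preserve prime histories. Hence the equivalence relation generated in the colimit is exactly ``same prime history'', and the explicit description needs no filteredness.
\item Two nodes of one path with isomorphic histories over $P$ coincide, by the same induction. So the colimit legs $G\to\U P$ are injective. This is the correct reason why conflict is irreflexive and there are no parallel edges. Your stated reason, injectivity coming from $G\to P$, does not work, since that map is generally far from injective on arcs.
\item Your fallback of restricting to prime paths would break the construction. That subcategory is not final, because the path with no nodes is not prime. That path is the initial object of $\Path(P)$, and it is what glues together the copies of the initial tokens. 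For instance, if $P$ has just two transitions consuming different initial tokens, the colimit over prime paths contains two copies of the marking.
\end{itemize}

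\textbf{A caveat on the setting.} Under the paper's standing axiom of no isolated places, ``the colimit is again a whole-grain net'' fails when some initial token is consumed by no path of $P$. In that case no occurrence net maps \'etale to $P$ at all, so there is no right adjoint. The statement really lives in Kock's setting, where isolated places are allowed.
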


\begin{remark} 
	The unfolding $\U P$ of a whole-grain net $P$ is an occurrence net whose in-boundary corresponds to the token set of $P$. To see why this is the case, observe that the counit of this adjunction gives an \'etale map $\varepsilon_P : \U P \to P$, so in particular $\U P$ and $P$ have the same marking set $M_P$ (although they have different place sets). Since $\U P$ is an occurrence net, its marking and in-boundary coincide.
\end{remark}

The theorem states the following universal property: for every whole-grain Petri net $P$ and whole-grain occurence net $O$ with an \'etale map $\varphi : O \to P$, there is a unique map $O \to \U P$ that makes the diagram 
\[\begin{tikzcd}
	{\U P} & O \\
	& P
	\arrow["\varepsilon_P"',from=1-1, to=2-2]
	\arrow[dashed,from=1-2, to=1-1]
	\arrow["\varphi",from=1-2, to=2-2]
\end{tikzcd}\]
commute. In particular, from an \'etale map $P' \to P$ one obtains an \'etale map $\U P' \to U P$.

The generalization of this universal property to rational maps, i.e. establishing an adjunction between $\WGPetri$ and $\Occ$, is not immediate. This is what we do in the next section, leveraging Theorem~\ref{ref:thmunfolding-etale}.

\begin{corollary}
	For a whole-grain Petri net $P$, post-composition with $\varepsilon_P$ determines a functor $\Path(\U P) \to \Path(P)$ which is an isomorphism of categories. 
\end{corollary}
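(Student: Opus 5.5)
The plan is to read the corollary as a direct consequence of the adjunction in Theorem~\ref{ref:thmunfolding-etale}, using that graphs are whole-grain occurrence nets. Write $\Phi : \Path(\U P) \to \Path(P)$ for the functor sending $(G, q)$ to $(G, \varepsilon_P \circ q)$. On morphisms, an \'etale $f : G \to G'$ with $q = q' \circ f$ also satisfies $\varepsilon_P \circ q = \varepsilon_P \circ q' \circ f$. This is a special case of Lemma~\ref{lem:functoriality}(1), so $\Phi$ is a well-defined functor, and it is the identity on underlying graphs and \'etale maps.

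First I check that every path graph $G$ lies in $\WGOcc_{\textrm{\'et}}$, i.e.\ that $\Q(G)$ is an occurrence net. A graph has no parallel edges. Each place (arc) has at most one producing transition and at most one consuming transition, so $\#_0$ is empty and hence $\#$ is empty. Acyclicity gives irreflexivity of $<$, and finiteness of $G$ gives finiteness of the down-sets. The marking is exactly $\inb(G)$, which is the set of arcs without a source, i.e.\ the minimal places.

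One caveat: graphs may have isolated places (arcs with neither source nor target), which the paper's definition of whole-grain net forbids. I would note that, as in the footnote, the universal property of $\U$ from \cite{WGPN} is stated without this axiom, so it still applies to such $G$. Alternatively, isolated arcs must map to marked tokens and can be handled by hand, since $M_{\U P} = M_P$ via $\varepsilon_P$.

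For objects, I would argue as follows. Given a path $(G, p)$ of $P$, the universal property provides a unique \'etale map $\hat p : G \to \U P$ with $\varepsilon_P \circ \hat p = p$. Hence $(G, \hat p)$ is the unique object over $(G, p)$, so $\Phi$ is bijective on objects.

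For morphisms, let $(G, q)$ and $(G', q')$ be paths of $\U P$ and let $f : G \to G'$ be \'etale with $\varepsilon_P q' f = \varepsilon_P q$. Both $q' f$ and $q$ are \'etale maps $G \to \U P$ lifting $\varepsilon_P q$ along $\varepsilon_P$. By uniqueness in the universal property, $q' f = q$. So $f$ is already a morphism in $\Path(\U P)$, and $\Phi$ is full. It is faithful because it is the identity on underlying maps. Altogether $\Phi$ is an isomorphism of categories.

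The only real obstacle is confirming that graphs qualify as objects of $\WGOcc_{\textrm{\'et}}$, so that the adjunction applies. This is mainly the isolated-places caveat above; I would settle it by appealing to Kock's formulation, which does not require the condition.
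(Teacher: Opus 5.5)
Your proposal is correct and follows the paper's own argument: the paper also gets the inverse functor from the unique factorization of each path $G \to P$ through $\varepsilon_P$, and gets functoriality from uniqueness. Your presentation as ``bijective on objects and fully faithful'' is the same argument. Your extra checks are details the paper leaves implicit, handling the isolated-places issue only by its footnote: that graphs are whole-grain occurrence nets, and the isolated-arcs caveat.
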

\begin{proof}
The inverse is induced by the universal property: any path $G \to P$ must factor through $\varepsilon_P$ via a unique path $G \to \U P$. Functoriality is a straightforward consequence of uniqueness. 
\end{proof}

\begin{remark}
Since occurrence nets are colimits of their paths, it also follows that there is a canonical isomorphism $\U P \cong \colim_{(G, p) \in \Path(P)} G$. This fits the initial motivation for occurrence nets and unfoldings. 
\end{remark}

\subsubsection{Unfolding of whole-grain Petri nets and rational maps}

First we state the main theorem of this section. (Note that we have overloaded the notation $\U$ to refer to the unfolding, but both functors $\U$ agree on their common domain $\WGPetri_{\textit{\'et}}$.)
\begin{theorem}
\label{thm:whole-grain-unfolding}
The embedding functor $\WGOcc \hookrightarrow \WGPetri$ admits a right adjoint, denoted $\U :\WGPetri \to \WGOcc$.
\end{theorem}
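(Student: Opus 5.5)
We construct the right 2-adjoint on objects by the étale unfolding $\U P$ of Theorem~\ref{ref:thmunfolding-etale}, with counit the étale map $\varepsilon_P : \U P \to P$ regarded as a degenerate rational map $\U P =\!= \U P \to P$. Since $\WGPetri$ is locally essentially discrete (Theorem~\ref{thm:discreteness}), it suffices to show that for every whole-grain occurrence net $O$, postcomposition with $\varepsilon_P$ gives an equivalence of essentially discrete hom-categories
\[
\WGOcc(O, \U P) \longrightarrow \WGPetri(O, P).
\]
Concretely, we must show that every rational map $O \to P$ factors through $\varepsilon_P$ up to a (necessarily unique, invertible) 2-cell, and that two rational maps $O \to \U P$ with isomorphic images are themselves isomorphic. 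Pseudofunctoriality of $\U$ on rational maps, and the coherence of the adjunction, then follow formally from this universal property. Coherence conditions on 2-cells are automatic by discreteness.

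\textbf{Essential surjectivity.} Given a rational map $O \xleftarrow{\varphi} R \xrightarrow{\psi} P$ with $\varphi$ cabling and $\psi$ étale, we proceed in three steps.
\begin{enumerate}
\item By Lemma~\ref{lem:cabling-to-occ}, $R$ is an occurrence net, since $O$ is one.
\item Hence $\psi : R \to P$ is an étale map from an occurrence net. By Theorem~\ref{ref:thmunfolding-etale} it factors uniquely as $\psi = \varepsilon_P \circ \psi'$ with $\psi' : R \to \U P$ étale.
\item Then $(\varphi, \psi') : O \to \U P$ is a rational map between occurrence nets. Its composite with $(\id, \varepsilon_P)$ is the span $(\varphi, \varepsilon_P \circ \psi')$, which equals the original map up to the canonical 2-cell coming from composing with an identity-legged span (pullback along an identity, which may be chosen strictly).
\end{enumerate}

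\textbf{Full faithfulness.} Let $(\varphi_1, \psi_1), (\varphi_2, \psi_2) : O \to \U P$ be rational maps, and let $\alpha : R_1 \to R_2$ be a 2-cell between the composites $(\varphi_i, \varepsilon_P \psi_i)$. Then $\varphi_2 \alpha = \varphi_1$ and $\varepsilon_P \psi_2 \alpha = \varepsilon_P \psi_1$. The step to check is that $\alpha$ is étale. It is an isomorphism by Theorem~\ref{thm:discreteness}, and invertible maps of whole-grain nets satisfy all pullback conditions, so it is both étale and cabling. Hence $\psi_2 \alpha$ and $\psi_1$ are two étale maps from the occurrence net $R_1$ into $\U P$ lying over the same map to $P$. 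Uniqueness in Theorem~\ref{ref:thmunfolding-etale} forces $\psi_2 \alpha = \psi_1$, so $\alpha$ is a 2-cell $(\varphi_1, \psi_1) \to (\varphi_2, \psi_2)$. Faithfulness is automatic by discreteness.

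\textbf{Main obstacle.} I expect the delicate point to be the bookkeeping of markings in step 2 of the essential-surjectivity argument. The universal property of Theorem~\ref{ref:thmunfolding-etale} is stated for étale maps, which fix the marking set. We have $M_R = M_P$ via $\psi$, and $\U P$ has marking $M_P$, so $\psi'$ should have identity marking component. We must also confirm that Lemma~\ref{lem:cabling-to-occ} applies under our generalized cabling maps with varying markings. In particular, we need that the pullback condition $(e)$ makes $M_R$ exactly the in-boundary of $R$, which should follow because $\varphi$ is place-étale and $M_O$ is the in-boundary of $O$.
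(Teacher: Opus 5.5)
Your proposal is correct and follows the paper's proof essentially step for step: the counit is $(\id, \varepsilon_P)$, essential surjectivity comes from Lemma~\ref{lem:cabling-to-occ} together with the étale universal property, and fullness comes from noting that the invertible 2-cell $\alpha$ is étale, so uniqueness in Theorem~\ref{ref:thmunfolding-etale} gives $\psi_1 = \psi_2\alpha$. The marking bookkeeping you flag is automatic. The factorisation $\psi'$ is étale by construction, and the marking component of $\alpha$ is forced to be the identity because those of $\psi_1$, $\psi_2$ and $\varepsilon_P$ are identities and $\varepsilon_P\psi_2\alpha = \varepsilon_P\psi_1$.
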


\begin{proof}
We will show that the counit of the restricted adjunction of Theorem~\ref{ref:thmunfolding-etale}, a family of \'etale maps $\varepsilon_P : \U P \to P$, extends to the rational setting. 

First recall that we can regard $\varepsilon_P$ as a degenerate rational map
 $\U P \xleftarrow{\id} \U P \xrightarrow{\varepsilon_P} P$, and it follows easily from properties of pullbacks that pre-composing this with another rational map $P_1 \xleftarrow{\psi} P_2 \xrightarrow{\varphi} \U P$ gives the span $P_1 \xleftarrow{\psi} P_2 \xrightarrow{\varepsilon_P \circ \varphi} P$. (This is a well-defined rational map: $\psi$ is a cabling map and both $\varphi$ and $\varepsilon_P$ are \'etale.)

We show that $(\id, \varepsilon_P)$ is a universal arrow in the following sense: for every $O \in \WGOcc$ and $P \in \WGPetri$, the post-composition functor 
\[
(\id, \varepsilon_P) \circ - : \WGOcc(O, \U P) \longrightarrow \WGPetri(O, P)
\]
is an equivalence of (essentially discrete) categories. It suffices to show it is essentially surjective and full, because faithfulness is immediate for functors between essentially discrete categories.

\emph{Essentially surjective.} Consider a rational map $O \xleftarrow{\psi} R \xrightarrow{\varphi} P$ (with $\varphi$ \'etale, $\psi$ cabling). By Lemma~\ref{lem:cabling-to-occ}, $R$ is an occurrence net. Thus we can instantiate Theorem~\ref{ref:thmunfolding-etale}: $\varphi$ must factor through $\varepsilon_P : \U P\to P$ via an \'etale map $\varphi' : R \to \U P$, so $\varphi = \varepsilon_P \circ \varphi'$. We have constructed a rational map $O \xleftarrow{\psi} R \xrightarrow{\varphi'} \U P$ in the pre-image of $(\varphi, \psi)$. 

\emph{Full.} Let $O \xleftarrow{\psi_1} R_1 \xrightarrow{\varphi_1} \U P$ and $O \xleftarrow{\psi_2} R_2 \xrightarrow{\varphi_2} \U P$ be rational maps such that $(\id, \varepsilon_P) \circ (\psi_1, \varphi_1)$ and $(\id, \varepsilon_P) \circ (\psi_2, \varphi_2)$ are related by a 2-cell $\alpha$ in $\WGPetri(O, P)$. We must show $(\psi_1, \varphi_1)$ and $(\psi_2, \varphi_2)$ are themselves related by a 2-cell. 

By Theorem~\ref{thm:discreteness}, the 2-cell $\alpha$ is an isomorphism $R_1 \to R_2$ such that
\[\begin{tikzcd}[row sep=tiny]
	& {R_1} & {\U P} & \\
	O &&& P \\
	& {R_2} & {\U P}
	\arrow["{\varphi_1}", from=1-2, to=1-3]
	\arrow["\alpha"{description}, from=1-2, to=3-2]
	\arrow["{\varepsilon_P}", from=1-3, to=2-4]
	\arrow["{\psi_1}", from=2-1, to=1-2]
	\arrow["{\psi_2}"', from=2-1, to=3-2]
	\arrow["{\varphi_2}"', from=3-2, to=3-3]
	\arrow["{\varepsilon_P}"', from=3-3, to=2-4]
\end{tikzcd}\]
commutes. But in particular $\alpha$ is \'etale, and so we must have $\varphi_1 = \varphi_2 \circ \alpha$ by the universal property of $\varepsilon_P$ (Theorem~\ref{ref:thmunfolding-etale}). Thus $\alpha$ also provides the required 2-cell $(\psi_1, \varphi_1) \to (\psi_2, \varphi_2)$.
\end{proof}

We have established the existence of the unfolding functor $\U$ by identifying the universal arrow $\varepsilon$. But it is interesting to look directly at the induced functorial action of $\U$ on rational maps. 
\begin{proposition}
Let $P \xleftarrow{\psi} R \xrightarrow{\varphi} P'$ be a rational map of whole-grain Petri nets. Up to isomorphism, the functor $\U$ maps $(\psi, \varphi)$ to a rational map \[ \U P \xleftarrow{\xi} R' \xrightarrow{\chi} \U P' \] 
consisting of the following components:
\begin{itemize}
	\item $R'$ is an occurrence net computed as \[ \Colim_{(G, p : G \to P) \in \Path(P)} \psi^* G \] or more explicitly as the colimit of the functor $\Path(P) \xrightarrow{\psi^*} \Path(R) \xrightarrow{(G, p) \mapsto G} \WGPetri_{\mathrm{gen}}$.
	\item $\xi$ is the canonical map out of a colimit \[ \Colim_{(G, p) \in \Path(P)} \psi^* G \quad \longrightarrow \quad \Colim_{(G, p) \in \Path(P)} G \] 
	induced by the family of pullback projections $\psi^* G \to G$.
	\item $\chi$ is the canonical map out of a colimit
\[
 \Colim_{(G, p) \in \Path(P)} \psi^* G \quad \longrightarrow \quad \Colim_{(H, p : H \to P') \in \Path(P')} H
 \] 
 induced by the family of colimit injections $\iota_{(\psi^* G, q)} : \psi^* G \to   \Colim_{(H, p) \in \Path(P')} H$ where $q$ consists of the (\'etale) pullback projection $\psi^* G \to R$ composed with $\varphi : R \to P'$. 
\end{itemize}
\end{proposition}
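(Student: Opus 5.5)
The action of $\U$ on a rational map is determined, up to the unique invertible 2-cell provided by Theorem~\ref{thm:discreteness}, by the universal property of the counit established in the proof of Theorem~\ref{thm:whole-grain-unfolding}. Concretely, $\U(\psi,\varphi)$ is any rational map $\U P \to \U P'$ whose post-composition with $(\id,\varepsilon_{P'})$ is 2-isomorphic to the composite $(\psi,\varphi)\circ(\id,\varepsilon_P)$. So the plan has three steps: compute this composite, factor its \'etale leg through $\varepsilon_{P'}$ as in the essential-surjectivity step of that proof, and then identify the pieces with the stated colimits.

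\emph{Step 1: the composite.} By the definition of composition, $(\psi,\varphi)\circ(\id,\varepsilon_P)$ is obtained from the pullback of the cabling map $\psi : R \to P$ along the \'etale map $\varepsilon_P : \U P \to P$. Lemma~\ref{lem:pullback-cabling-etale} shows that this pullback $R' := \psi^*(\U P)$ comes with a cabling map $\xi : R' \to \U P$ and an \'etale map $\pi : R' \to R$. The composite rational map is therefore $\U P \xleftarrow{\xi} R' \xrightarrow{\varphi\circ\pi} P'$. By Lemma~\ref{lem:cabling-to-occ}, $R'$ is an occurrence net. Theorem~\ref{ref:thmunfolding-etale} then factors $\varphi\circ\pi$ uniquely as $\varepsilon_{P'}\circ\chi$, with $\chi : R' \to \U P'$ \'etale. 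Hence $\U(\psi,\varphi)\cong(\xi,\chi)$.

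\emph{Step 2: identifying $R'$ as a colimit.} Write $\U P \cong \Colim_{(G,p)\in\Path(P)} G$, using the remark following the corollary on $\Path(\U P)\cong\Path(P)$. The claim is that the pullback functor $\psi^*$ preserves this colimit. Pullbacks in $\WGPetri_{\mathrm{gen}}$ are computed componentwise in $\Set$, and colimits in $\WGPetri_{\mathrm{gen}}$ should also be componentwise. Since pullback along a fixed map preserves colimits in $\Set$ (because $\Set$ is locally cartesian closed), we get $\psi^*(\Colim G)\cong\Colim\psi^* G$, where each $\psi^*G$ is computed along $G\to P$. This is the same as the pullback along $G \to \U P \to P$, by the pasting law.

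This is the step I expect to be the main obstacle. First, one must check that colimits of paths in $\WGPetri_{\mathrm{gen}}$ really are computed componentwise in $\Set$, including the marking component and the partiality of the open-ended graphs. Second, one must check that the isomorphism $\U P\cong\Colim G$ is an isomorphism over $P$, so that the pullbacks match. I would first verify the componentwise description using the explicit structure of occurrence nets as unions of their paths (\cite[Proposition 8.15]{WGPN}), and then apply distributivity of pullback over colimits in $\Set$ component by component.

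\emph{Step 3: identifying the maps.} Under the isomorphism of Step 2, $\xi$ restricts on each summand $\psi^*G$ to the projection $\psi^*G\to G$ followed by the injection of $G$ into $\Colim G$. This is exactly the canonical map in the statement. For $\chi$, it suffices to check the equation $\varepsilon_{P'}\circ\chi = \varphi\circ\pi$ on each colimit injection. On the summand $\psi^*G$, the \'etale path $q=\varphi\circ(\psi^*G\to R)$ factors through $\varepsilon_{P'}$ via the injection $\iota_{(\psi^*G,q)}$, by the same description of $\U P'$ as a colimit of paths. Uniqueness in Theorem~\ref{ref:thmunfolding-etale} then forces $\chi$ to agree with the map induced by these injections. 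Compatibility of these injections with the morphisms of $\Path(P)$ follows from the functoriality of $\psi^*$ in Lemma~\ref{lem:functoriality}.
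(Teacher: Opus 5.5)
Your proposal is correct and follows essentially the same route as the paper. Both arguments compute $(\psi,\varphi)\circ(\id,\varepsilon_P)$ as the pullback $R'$ of $\psi$ along $\varepsilon_P$, factor its \'etale leg through $\varepsilon_{P'}$ to get $\chi$, and identify $R'$ and $\xi$ using that pullbacks commute with colimits in $\WGPetri_{\mathrm{gen}}$ (which the paper justifies by calling it a presheaf category). Your Step~3 and the points you flag in Step~2 just spell out details that the paper dismisses as ``easy'' or ``clear from the universal property of $\varepsilon_{P'}$''.
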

\begin{proof}
By construction of $\U$, the rational map $\U P \to \U P'$ induced by $(\psi, \varphi) : P \to P'$ is obtained by applying the universal property of $\varepsilon_{P'} : \U P' \to P'$ to the \'etale part of the rational map $(\psi, \varphi) \circ (\id, \varepsilon_P) :  \U P \to P'$. The latter is obtained by composition of spans using a pullback and therefore we are in the following situation:
\[\begin{tikzcd}[row sep=small, column sep=tiny]
	& {R'} &&& {\U P'} \\
	{\U P} && R && {P'} \\
	& P
	\arrow["\chi", dashed, from=1-2, to=1-5]
	\arrow["\xi"', from=1-2, to=2-1]
	\arrow["\pi", from=1-2, to=2-3]
	\arrow["\lrcorner"{anchor=center, pos=0.125, rotate=-45}, draw=none, from=1-2, to=3-2]
	\arrow["{\varepsilon_P}", from=1-5, to=2-5]
	\arrow["{\varepsilon_P}"', from=2-1, to=3-2]
	\arrow["\varphi"', from=2-3, to=2-5]
	\arrow["\psi", from=2-3, to=3-2]
\end{tikzcd}\]
In a presheaf category (such as $\WGPetri_\mathrm{gen}$) pullbacks commute with colimits, and $\U P$ is a colimit of the paths $G$ of $P$, so $R'$ is a colimit of the $\psi^* G$ and the characterization of $\xi$ follows easily. The characterization of $\chi$ is clear by the universal property of $\varepsilon_{P'}$. 
\end{proof}

Unfortunately, while things work smoothly in the whole-grain setting, unfoldings to occurrence nets are more difficult for ordinary Petri nets. The reason is that, for a net $P \in \Petri$, the projection functor $\Path(P) \to \Occ$ does not generally admit a colimit in $\Occ$.  (The colimit does exist if $P$ is a safe Petri net \cite{GENUN, ES2}.) 

\begin{example} Figure~\ref{fig:pathexample} contrasts the situation in $\WGPetri$ and $\Petri$ for the problematic example of \S\ref{example-intro}, here denoted $P$. (To read the figure, recall that graphs are special nets (\S\ref{subsec:ex-paths}), although we draw them with nodes and edges.) The symmetries in $\Path{(\Q P)}$ affect the universality of the potential unfolding of $\Q P$. The symmetries are not present at all in $\Path{(P)}$ because morphisms must respect token labels. 
\end{example}

\begin{figure}[t]
\centering
\begin{subfigure}[b]{0.60\textwidth}
\centering
\begin{tikzpicture}
  \proc{Aa}{(-0.58, 0.0)} {$x$}{$y$}{open} {open}
  \proc{Ab}{(-2.68, 2.0)} {$x$}{$y$}{trans}{open}
  \proc{Ac}{( 1.52, 2.0)} {$x$}{$y$}{open} {trans}
  \proc{Ad}{(-0.58, 4.0)} {$x$}{$y$}{trans}{trans}

  \draw[mor] (Aabox) -- (Abbox);
  \draw[mor] (Aabox) -- (Acbox);
  \draw[mor] (Abbox) -- (Adbox);
  \draw[mor] (Acbox) -- (Adbox);
\end{tikzpicture}
\caption{The category $\Path(P)$.}
\label{fig:pathexample-wg}
\end{subfigure}
\hfill
\begin{subfigure}[b]{0.36\textwidth}
\centering
\begin{tikzpicture}
  \proc{Ba}{(-0.58, 0.0)} {}{}{open} {open}
  \proc{Bb}{(-0.58, 2.0)} {}{}{trans}{open}
  \proc{Bd}{(-0.58, 4.0)} {}{}{trans}{trans}

  \draw[mor] (Babox) to[bend left=20]  (Bbbox);
  \draw[mor] (Babox) to[bend right=20] (Bbbox);
  \draw[mor] (Bbbox) to[bend left=20]  (Bdbox);
  \draw[mor] (Bbbox) to[bend right=20] (Bdbox);

  \draw[mor] (Bdbox.north east) to[out=52,in=128,looseness=1.3]
             (Bdbox.north west);
  \draw[mor] (Babox.south west) to[out=-128,in=-52,looseness=1.15]
             (Babox.south east);
\end{tikzpicture}
\caption{The category $\Path(\Q P)$.}
\label{fig:pathexample-ord}
\end{subfigure}
\caption{Skeletons of path categories for the example in \S\ref{example-intro}.}
\label{fig:pathexample}
\end{figure}

 One solution, found by Hayman and Winskel, is to enrich occurrence nets with additional structure encoding symmetry \cite{GENUN}, to render this problematic automorphism ``equivalent'' to the identity. With symmetry one can therefore state universal properties ``up to symmetry'', as will be developed in \S\ref{sec:unf-sem-sym}.

\subsection{Event structures}
\label{sec:event-structures}

Event structures \cite{ES1} are an abstraction of occurrence nets in which the places have been discarded: an event structure has a single set equipped with a partial order and conflict relation. These are designed to correspond to the relations on transitions in an occurrence net (\emph{cf.} Definition~\ref{def:occurrence-net}).

Our view is that, as a semantic domain for Petri nets, event structures are more appropriate than occurrence nets. They are easier to describe and reason about, and the only thing lost---the places---is arguably an irrelevant `implementation detail', since the execution of a Petri net is already fully described by the transitions firing.

\begin{definition}
An \emph{event structure} is a tuple $(E, {\leq}, \#)$ where $\leq$ is a partial order on $E$, and $\#$  is an irreflexive and symmetric binary relation on $E$, satisfying the axioms below:
\begin{itemize}
    \item For every $e \in E$  the set $\{e' \in E \mid e' \leq e \}$ is finite.
    \item For $e, e', e'' \in E$, if $e \leq_{E} e'$ and $e \mathrel{\#} e''$ then $e' \mathrel{\#} e''$.
\end{itemize}
\end{definition}
The execution paths of an event structure are called \emph{configurations}:
\begin{definition}
    A finite subset $X \subseteq E$ is called \emph{consistent} if no two events in $X$ are in conflict. It is called a \emph{configuration} if it is additionally down-closed. The set of consistent subsets of $E$ is denoted $\Con(E)$, and the set of configurations is called $\conf(E)$.
\end{definition}
Maps of event structures must faithfully preserve configurations:
\begin{definition}
    If $(E, \leq_E, \#_E)$ and $(E', \leq_{E'}, \#_{E'})$ are event structures, a function $f: E \rightarrow E'$ is a \emph{map of event structures} if for every $x \in \conf(E)$, the direct image $fx := \{f(e) \mid e \in x\}$ is in $\conf(E')$, and moreover the restriction of $f$ to $x$ is injective.
\end{definition}
Event structures and maps of event structures form a category $\ES$. Frequently, we will regard $\ES$ as a locally discrete $2$-category, just like we did for $\Petri$ and $\Occ$.

\newcommand{\E}[0]{\mathsf{E}}
We now recall (from \cite{ES1, ES2}) the relationship between occurrence nets and event structures. The set of transitions of an occurrence net, equipped with the relations $\leq$ and $\#$ of Definition~\ref{def:occurrence-net}, defines an event structure. This extends to a functor $\calE:\Occ \to \ES$. Conversely, for every event structure $E$ the pre-image $\calE^{-1}(E)$ is nonempty and indeed an object of $\calE^{-1}(E)$ can be constructed universally so as to give a functor $\calN: \ES\to \Occ$ and an adjunction
\begin{equation}
\label{eq:occ-es}
\begin{tikzcd}
	\Occ && \ES.
	\arrow[""{name=0, anchor=center, inner sep=0}, "\calE", shift left=2, from=1-1, to=1-3]
	\arrow[""{name=1, anchor=center, inner sep=0}, "\cal{N}", shift left=2, from=1-3, to=1-1]
	\arrow["\dashv"{anchor=center, rotate=90}, draw=none, from=1, to=0]
\end{tikzcd}
\end{equation}
\newcommand{\N}{\mathsf{N}}

By combining this adjunction with that of Theorem~\ref{thm:whole-grain-unfolding}, recalling the equivalence of $\WGOcc$ and $\Occ$, we obtain a universal unfolding of whole-grain Petri nets to event structures. 
 \begin{corollary}
\label{cor:whole-grain-unfolding}
There is a natural equivalence of setoids $\WGPetri(\N E, P) \simeq \ES(E, \semWGA{P})$ for every $P \in \WGPetri$ and $E \in \ES$; in other words, a pseudo-adjunction
\[ 
\begin{tikzcd}
	\WGPetri && \ES
	\arrow[""{name=0, anchor=center, inner sep=0}, "\semWG", shift left =2, from=1-1, to=1-3]
	\arrow[""{name=1, anchor=center, inner sep=0}, "\N", shift left=2,  from=1-3, to=1-1]
	\arrow["\dashv"{anchor=center, rotate=90}, draw=none, from=1, to=0]
\end{tikzcd}
\]
where 
$\N = \ES \xrightarrow{\calN} \Occ \simeq \WGOcc \hookrightarrow \WGPetri$, 
$\semWG = \WGPetri \xrightarrow{\mathsf{U}} \WGOcc \simeq \Occ \xrightarrow{\calE} \ES.$
\end{corollary}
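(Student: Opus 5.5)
The plan is to obtain the pseudo-adjunction by composing three pieces already established: the biadjunction $\WGOcc \hookrightarrow \WGPetri \dashv \U$ of Theorem~\ref{thm:whole-grain-unfolding}, the biequivalence $\Q : \WGOcc \simeq \Occ$, and the ordinary adjunction $\calN \dashv \calE$ of \eqref{eq:occ-es}. The last is regarded as a 2-adjunction between locally discrete 2-categories. I will argue at the level of hom-setoids, which is all the statement asks for.

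For $E \in \ES$ and $P \in \WGPetri$, fix an inverse equivalence $\Q^{-1} : \Occ \to \WGOcc$. Then $\N E = \Q^{-1}\calN E$ and $\semWGA{P} = \calE\Q\U P$. I will exhibit the following chain of equivalences of setoids:
\begin{align*}
\WGPetri(\N E, P) &= \WGPetri(\Q^{-1}\calN E, P) \simeq \WGOcc(\Q^{-1}\calN E, \U P) \\
&\simeq \Occ(\Q\Q^{-1}\calN E, \Q\U P) \cong \Occ(\calN E, \Q \U P) \cong \ES(E, \calE\Q\U P).
\end{align*}
\textbf{Step 1.} The first equivalence is exactly the statement, proved in Theorem~\ref{thm:whole-grain-unfolding}, that post-composition with $(\id,\varepsilon_P)$ is an equivalence $\WGOcc(O,\U P) \to \WGPetri(O,P)$. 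Here I instantiate $O = \Q^{-1}\calN E$.

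\textbf{Step 2.} The second equivalence holds because $\Q$ is a biequivalence, so it is locally an equivalence of hom-categories. By Theorem~\ref{thm:discreteness} these hom-categories are essentially discrete, that is, setoids.

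\textbf{Step 3.} The third step pre-composes with the isomorphism $\Q\Q^{-1}\calN E \cong \calN E$; this is an isomorphism because $\Occ$ is locally discrete.

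\textbf{Step 4.} The final bijection is the hom-set bijection of the adjunction $\calN \dashv \calE$.

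It remains to check naturality in $E$ and $P$, pseudo-naturally on the $\WGPetri$ side. Each of the four steps is natural in the appropriate sense:
\begin{itemize}
\item Step 1 is pseudo-natural in $P$ because $\U$ is defined from the universal arrows $\varepsilon_P$.
\item Steps 2 and 3 are natural because $\Q$ is a pseudofunctor and the unit/counit of the biequivalence are pseudo-natural.
\item Step 4 is strictly natural.
\end{itemize}
Composing pseudo-natural equivalences of hom-setoids gives a pseudo-natural equivalence. By the standard characterisation of biadjunctions through representability, namely a pseudo-natural family of equivalences $\WGPetri(\N E, P) \simeq \ES(E, \semWGA{P})$, this yields the pseudo-adjunction $\N \dashv \semWG$.

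I expect the main obstacle to be the coherence bookkeeping. First, $\N$ and $\semWG$ must be pseudofunctors, since $\Q^{-1}$ is only a pseudo-inverse. Second, the naturality squares commute only up to invertible 2-cells, and those 2-cells must satisfy the pseudo-naturality axioms. My plan to discharge this is to invoke local essential discreteness (Theorem~\ref{thm:discreteness}): every 2-cell is unique when it exists and is invertible. So all coherence axioms between 2-cells hold automatically, and it suffices to check that the relevant 2-cells exist, i.e.\ that the squares commute up to isomorphism of spans. This reduces everything to a 1-categorical check on the quotient categories obtained by identifying isomorphic rational maps.
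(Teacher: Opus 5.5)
Your proposal is correct and follows the paper's route. The paper obtains the corollary by composing the biadjunction of Theorem~\ref{thm:whole-grain-unfolding}, the equivalence $\WGOcc \simeq \Occ$, and the adjunction $\calN \dashv \calE$; your hom-setoid chain spells this out, and your use of local essential discreteness (Theorem~\ref{thm:discreteness}) to dispose of the coherence conditions is also what the paper implicitly relies on.
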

(One could also define $\N$ explicitly, just by adapting the definition of $\calN$ to the whole-grain setting.)
In summary, the unfolding semantics of a Petri net is computed in two steps: first as an occurrence net, which combines all paths into a single (generally infinite) net, and then as an event structure. So far we have only done this for whole-grain Petri nets, giving the adjunction in Corollary~\ref{cor:whole-grain-unfolding}. We now turn to the more complex situation for ordinary nets. 
\section{Petri net unfoldings via explicit symmetries}
\label{sec:unf-sem-sym}

In this section we consider the unfolding of ordinary Petri nets. Our main contribution is a universal unfolding semantics in terms of event structures with symmetry. 

\subsection{Background: Petri net unfoldings to occurrence nets}
Unfoldings of Petri nets are often used in practice (e.g. \cite{UIP}) but it can be difficult to find an explicit construction. We now import a characterization from \cite{GENUN}. To give some intuition for the mutually recursive construction below, the occurrence net $\calU_P$ consists of the tokens of $P$ viewed as places, extended in a small-step fashion by considering reachable markings and transition occurrences. 

\begin{proposition}
\label{prop:winskel}
For a Petri net $P = (S_0, T_0, \Pre_0, \Post_0, \mu_0)$, there is a unique occurrence net $\calU_P = (S, T, \Pre, \Post)$ equipped with a folding map $\varepsilon_P = (\eta, \beta) : \calU_P \to P$, satisfying the equations below, in which `$\mathrm{co} \ A$' means that $A$ has no two places related by $<$ or $\#$,
\begin{align*}
    &S = \{(s, i) \mid s \in S_0, 0 \leq i \lt \mu_0(s)\} 
    \cup
    \{(t, s, i) \mid t \in T, s \in S_0, 0 \leq i < \Post_0(s, \eta(t)) \} \\
    &T = \{ (A, t) \mid A \subseteq S, t \in T_0, \mathrm{co}\ A, \text{and } | \{ s' \in A \mid \beta(s') =  s \}| = \Pre_0(s, t) \text{ for all } s \in S_0 \} 
\end{align*}
and with $\eta(A, t) = t$, $\beta(t, s, i) = \beta(s, i)= s$, $\Pre(s, (A, t))$ iff $s \in A$, and 
$\Post(s', (A, t))$ iff $s'$ is of the form $((A, t), s, i)$. (Recall that for an occurrence net the multirelations $\Pre$ and $\Post$ are relations, and the marking is uniquely determined.) 
\end{proposition}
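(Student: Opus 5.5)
The equations of the statement are mutually recursive: places of the form $(t,s,i)$ depend on transitions $t \in T$, transitions $(A,t)$ depend on sets $A$ of places together with the relations $\mathrm{co}$ and $\beta$, and $\mathrm{co}$ depends on $<$ and $\#$, which are generated by $\Pre$ and $\Post$. I would prove existence and uniqueness together by a stratification on \emph{depth}. Define depth inductively: the initial places $(s,i)$ with $s \in S_0$, $0 \le i < \mu_0(s)$ have depth $0$; a transition $(A,t)$ has depth $1+\max\{\mathrm{depth}(a) \mid a \in A\}$, which is well-defined because $A$ is non-empty, since $P$ is grounded and $|\{s'\in A \mid \beta(s')=s\}| = \Pre_0(s,t)$; a place $((A,t),s,i)$ has the depth of $(A,t)$. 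I then construct by induction on $n$ the finite-depth approximants $S_{\le n}, T_{\le n}$ together with $\eta,\beta,\Pre,\Post$ restricted to them. The key observation is that the relations $<$ and $\#$ among elements of depth $\le n$ involve only elements of depth $\le n$: causal predecessors have strictly smaller depth, and immediate conflict at a place $s$ is witnessed by two transitions consuming $s$, whose depth is at least that of $s$, while hereditary closure only moves upward. Hence $\mathrm{co}\,A$ for $A\subseteq S_{\le n}$ is determined at stage $n$, so $T_{\le n+1}$, and then $S_{\le n+1}$, are forced by the equations. Taking the union over all $n$ gives $\calU_P$. For uniqueness, any $(S',T',\ldots)$ satisfying the equations has each element of finite depth, because in an occurrence net $\{u \mid u < v\}$ is finite; induction on depth then shows that it agrees with the constructed one at every stage.

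Next I would check that $\calU_P$ is an occurrence net. $\Pre$ and $\Post$ are relations by construction. Each place has at most one producing transition, since $(t,s,i)$ is produced only by $t$. Finiteness of $\{u \mid u<v\}$ and irreflexivity of $<$ follow because $<$ strictly increases depth and each transition has finitely many input places. The marking is defined as the set of depth-$0$ places, and these are exactly the $\le$-minimal places, because every other place has a producer. The remaining point is irreflexivity of $\#$. Suppose $u \mathbin{\#} u$. Unwinding the hereditary closure gives $t_1 \mathbin{\#_0} t_2$ with $t_1,t_2 \le u$ and $t_1 \ne t_2$ sharing an input place $s$. I would argue by induction on depth that the set $[u]=\{v \mid v\le u\}$ is conflict-free. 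For $u = (A,t)$, the set $[u]$ is $\{u\}$ together with the union of $[a]$ for $a\in A$. The condition $\mathrm{co}\,A$ rules out conflict across different branches, except in the case where two distinct transitions in different branches consume the same place $s$. That case would require the two downward-closed sets to share $s$ while containing distinct consumers of it. I expect this to be the main obstacle, and the plan is to handle it by also carrying in the induction the invariant that $[a]\cup[a']$ is conflict-free whenever $\neg(a \mathbin{\#} a')$, since $\mathrm{co}$ asserts exactly the absence of $\#$ and $<$.

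Finally I would verify that $\varepsilon_P=(\eta,\beta)$ is a folding map. Here $\beta$ is the total function $S\to S_0$, viewed as a multirelation in the reverse direction per the convention, so the three equations of Definition~\ref{def:morphisms-petri} must be checked. For the marking, the number of depth-$0$ places over $s$ is $\mu_0(s)$. For $\Pre$, the places of $A$ lying over $s$ number exactly $\Pre_0(s,t)$ by the defining condition on $T$. For $\Post$, the places $((A,t),s,i)$ number exactly $\Post_0(s,t)$ by the defining condition on $S$. Each of these is a direct count.
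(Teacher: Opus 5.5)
The paper gives no proof of this proposition; it imports the characterisation from Hayman--Winskel \cite{GENUN}. Your depth-by-depth fixed-point construction is the standard way such unfoldings are built, and it works as a self-contained replacement for the citation. It establishes existence and uniqueness explicitly. It also correctly identifies the one non-trivial point: the circularity through $\mathrm{co}$ is harmless, because $<$ and $\#$ between places of depth at most $n$ are witnessed inside depth at most $n$ and are not changed by later stages. The strategy is sound, but several points need repair.

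\emph{Depth and $<$.} Depth is not strictly increasing along $<$. An output place $((A,t),s,i)$ has the same depth as its producer, yet $(A,t) < ((A,t),s,i)$. So the claims ``causal predecessors have strictly smaller depth'' and ``$<$ strictly increases depth'' are false as written. What is true is that depth is non-decreasing along $<$ and strictly increasing on every place-to-transition step. That suffices, since every $<$-cycle contains such a step. Alternatively, put places at even levels and transitions at odd levels.

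\emph{Irreflexivity of $\#$.} This is easier than you expect. You are implicitly, and rightly, using the standard reading: $t\neq t'$ in $\#_0$, and heredity in both arguments. Read literally, every transition would be in $\#_0$ with itself. With the standard reading, $t_1\le a$, $t_2\le a'$ and $t_1\mathbin{\#_0}t_2$ give $a\mathbin{\#}a'$ by definition. So conflicts across distinct branches $a\neq a'$ of $A$ are excluded directly by $\mathrm{co}\,A$, and your extra invariant follows at once from the induction hypothesis. The case you did not list is when one of the two conflicting transitions is $u=(A,t)$ itself. Then the shared input place $s$ lies in $A$, and the other transition $t_2$ lies below some $a\in A$. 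This gives $s<t_2\le a$, which contradicts the $<$-clause of $\mathrm{co}\,A$, or irreflexivity of $<$ if $s=a$.

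\emph{No isolated places.} You did not check this occurrence-net axiom, and it fails in general. The paper's Petri nets are required to have no isolated places. But take $P$ with one place $p$, $\mu_0(p)=1$, and one transition $t$ with $\Pre_0(p,t)=2$. Then $\calU_P$ is the single place $(p,0)$ with no transitions, which is isolated. So the statement, and your verification, must be read with that axiom dropped for $\calU_P$, as in \cite{GENUN}.
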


This unfolding fails to satisfy the universal property required for an adjunction, essentially because in unsafe settings the small-step construction produces redundant data. For example, when the above construction is applied to the Petri net in \S\ref{example-intro}, two indistinguishable tokens become  two separate (distinguishable) places, leading to the symmetry issues already discussed there.

The unfolding does, in fact, satisfy the existence part of a universal property: 
\begin{lemma}[\cite{GENUN, ES2}]
\label{lemmawinskel}
Let $O \in \Occ$ and $P \in \Petri$. For every map  $(\eta, \beta): O  \rightarrow P$ of Petri nets, there exists a map $(\eta', \beta') : O \rightarrow \calU_P$ such that the diagram 
    \[\begin{tikzcd}
	O & P \\
	{\calU_P}
	\arrow["{(\eta,\beta)}", from=1-1, to=1-2]
	\arrow["{(\eta',\beta')}"', dashed, from=1-1, to=2-1]
	\arrow["{\varepsilon_P}"', from=2-1, to=1-2]
\end{tikzcd}\]
commutes. 
\end{lemma}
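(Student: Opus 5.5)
We lift the problem to the whole-grain setting, where unfoldings are universal, and then push the result back down along $\Q$.

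\emph{Lifting $O$, $P$ and the map.} By Proposition~\ref{prop:Q-surjective}, choose whole-grain nets $O'$ and $P'$ with $\Q(O') = O$ and $\Q(P') = P$. Choose also a rational map $O' \xleftarrow{\psi} R \xrightarrow{\varphi} P'$ with $\Q(\psi,\varphi) = (\eta,\beta)$. Since $O$ is an occurrence net, $O'$ is a whole-grain occurrence net. By Lemma~\ref{lem:cabling-to-occ}, $R$ is then also an occurrence net.

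\emph{Factoring through the whole-grain unfolding.} By Theorem~\ref{thm:whole-grain-unfolding} and its proof, $\varphi$ factors through the counit. Concretely, there is an \'etale $\varphi' : R \to \U P'$ with $\varphi = \varepsilon_{P'}\circ \varphi'$. Hence $(\psi,\varphi) = (\id,\varepsilon_{P'})\circ(\psi,\varphi')$. Applying the 2-functor $\Q$ gives
\[
(\eta,\beta) = \Q(\varepsilon_{P'})\circ \Q(\psi,\varphi').
\]
By Lemma~\ref{lemma:folding}, $\Q(\varepsilon_{P'}) : \Q(\U P') \to P$ is a folding map out of an occurrence net.

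\emph{Comparing $\Q(\U P')$ with $\calU_P$.} It remains to produce a map $g : \Q(\U P') \to \calU_P$ with $\varepsilon_P \circ g = \Q(\varepsilon_{P'})$. The composite $g \circ \Q(\psi,\varphi')$ is then the required $(\eta',\beta')$.

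We build $g$ by recursion along the causal depth of $\Q(\U P')$, which is well founded because causal pasts are finite. Minimal places of $\Q(\U P')$ correspond to the tokens $M_{P'}$, and these lie over places of $P$. For each $s \in S_0$, choose a bijection between the tokens over $s$ and the places $(s,i)$ of $\calU_P$. Since the counts agree ($\mu_0(s)$ in both cases), this bijection exists.

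Suppose $g$ is already defined on the pre-places $A'$ of a transition $u$. Because $\Q(\varepsilon_{P'})$ is a folding map and preserves $\Pre$, the image $A = g(A')$ has, over each $s$, exactly $\Pre_0(s,\eta(u))$ elements. The set $A$ is also co, because $g$ preserves causality and conflict inductively. So $(A,\eta(u))$ is a transition of $\calU_P$, and we set $g(u) = (A,\eta(u))$. The post-places of $u$ lying over $s$ number $\Post_0(s,\eta(u))$, so we biject them with the places $((A,\eta(u)),s,i)$.

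\emph{Expected main obstacle.} The hard step is checking that $g$ is a genuine folding map of Petri nets, rather than merely a function on transitions. The concern is that two distinct transitions of $\Q(\U P')$ might be sent to the same $(A,t)$, so that a place of $\calU_P$ receives two producers, or acquires $\Pre$-multiplicity greater than one. Such collisions would break the equations of Definition~\ref{def:morphisms-petri} at $\beta$.

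I would handle this by taking $g$ to be a folding map. In that case $\beta$ is the reverse of a function, so the required equations are local: they concern only the pre-set and post-set of each transition. These local equations hold by construction. Collisions on transitions are harmless for a folding map, since we only need $\Pre$ and $\Post$ to be preserved transition-wise.

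This check rests on the co-ness of $A$. I would verify it by inducting on depth, proving simultaneously that $g$ reflects no new $<$ or $\#$ relations among places. The key observation is that immediate conflict in $\calU_P$ requires a shared pre-place, and shared pre-places can only arise as images of shared pre-places.
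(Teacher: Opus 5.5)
The paper does not prove this lemma; it imports it from Hayman--Winskel and Winskel. There the standard argument is a direct induction on the depth of $O$ that handles an arbitrary multirelation $\beta$: for each place of $O$ it chooses a set of individual places of $\calU_P$ realising the multiset that $\beta$ prescribes.

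Your route is genuinely different, and the reduction step is sound. You lift along $\Q$ (Proposition~\ref{prop:Q-surjective}), note via Lemma~\ref{lem:cabling-to-occ} that $R$ is an occurrence net, and factor the \'etale leg through $\varepsilon_{P'}$. This gives $(\eta,\beta)=\Q(\varepsilon_{P'})\circ\Q(\psi,\varphi')$, so the multirelation is absorbed by the cabling leg and only the folding case remains. This fits the paper's theme, and it inverts the paper's own dependency: in the proof of the morphism of relative adjunctions, the paper obtains exactly your comparison map $g$ (called $g_P$ there) as an \emph{application} of this lemma. The cost is that you rely on results the paper only sketches: surjectivity of $\Q$ on maps, closure of occurrence nets under cabling, and 2-functoriality of $\Q$. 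Also, the induction that remains is exactly the combinatorial core of the direct proof, specialised from multirelations to functions.

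That remaining induction has a genuine, though repairable, gap. To form the transition $(g(A'),\eta(u))$ you need two things:
\begin{itemize}
\item[(i)] $g$ injective on $A'$, which you use silently; otherwise $g(A')$ has fewer than $\Pre_0(s,\eta(u))$ elements over $s$.
\item[(ii)] $\mathrm{co}\ g(A')$.
\end{itemize}
Your justification of (ii), that shared pre-places in $\calU_P$ only arise as images of shared pre-places, is false, precisely because $g$ collides. Suppose $t$ consumes two tokens from a place holding two. Then $\U P'$ has two transitions over $t$, differing in how the two edges are matched to the tokens, with the same pre-set. Both are sent to the same $(A,t)$, so their post-places are identified. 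If transitions $c\neq d$ each consume one token from an output place of $t$, the copy of $c$ after one of them and the copy of $d$ after the other have images sharing a pre-place, although they share none in $\Q(\U P')$. Note also that ``$g$ preserves causality and conflict'' is the wrong direction: you need that $g$ creates no such relations.

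The missing invariant is injectivity on configurations. For every configuration $x$ of $\Q(\U P')$, $g$ is injective on $x$ and on the places that are initial or produced by events of $x$.
\begin{itemize}
\item This goes by induction: two distinct events of $x$ with equal images would, by injectivity below them, have equal non-empty pre-sets (by groundedness), hence be in immediate conflict.
\item Injectivity on places then follows because places of $\calU_P$ are tagged by their producer.
\item Write $[p]$ for the causal past of $p$. Since $g$ maps pre-sets onto pre-sets and producers to producers, $[g(p)]=g([p])$.
\item For $p_1,p_2\in A'$, the set $[p_1]\cup[p_2]$ is a configuration because $A'$ is co. So any $<$ or $\#$ between $g(p_1)$ and $g(p_2)$ is witnessed inside $g([p_1]\cup[p_2])$. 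It pulls back along the injective restriction, contradicting $\mathrm{co}\ A'$.
\end{itemize}
With this invariant, collisions across conflicting branches are indeed harmless, as you say.

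A minor correction: for a folding map, $\beta$ is a function from source places to target places, not its reverse. The marking equation is also not transition-local. It holds because initial places are bijected with the $(s,i)$ and all other places land on places of the form $((A,t),s,i)$.
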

The insight of Hayman and Winskel \cite{GENUN} is that by tracking the internal symmetries of $\calU_P$ one can prove uniqueness up to a form of symmetry. This can be phrased as a universal property in a weak 2-categorical sense: a pseudo-adjunction. 

But adding symmetry to Petri nets, as in \cite{GENUN}, is a highly technical endeavour. There are several non-equivalent constructions, requiring nets with multiple markings, and the connection to event structures remains unclear because of a fundamental obstacle described in \cite[\S 7.1]{SPN}.

Our perspective is that these complications are unnecessary. We show how to bypass Petri nets with symmetries by directly unfolding to event structures with symmetry, which have a much simpler and established theory \cite{ESS}.  As we will see, this method relies on a 2-categorical relative adjunction and a 2-density property.

\subsection{The unfolding of a Petri net as an event structure with symmetry}

We first recall event structures with symmetry and the $2$-category $\SES$, originally introduced in \cite{ESS}.

\subsubsection{Event structures with symmetry}

\label{subsec:sem-def}
 Informally, symmetry on an event structure indicates which pairs of executions can be considered equivalent, via a bisimulation relation. 
\begin{definition}
    For an event structure $E$, an isomorphism family  is a set $\mathbb{S}$ of bijections $\theta:x \cong y$ between finite configurations of $x, y \in \conf(E)$, such that:
    \begin{itemize}
        \item $\mathbb{S}$ contains all identity bijections, and is stable under composition and inverse.  
        \item For $\theta : x \cong y \in \mathbb{S}$, if $x' \subseteq x$, then the restriction of $\theta$ to $x'$ is in $\mathbb{S}$.
        \item For $\theta : x \cong y \in \mathbb{S}$, if $x \subseteq x'$, then there exists an extension $y \subseteq y'$ and a bijection $\theta' : x' \cong y' \in \mathbb{S}$ such that $\theta'$ restricts to $\theta$.
    \end{itemize}
    The pair $(E, \mathbb{S})$ is called an \emph{event structure with symmetry}.
\end{definition}

We then consider symmetry-preserving maps:
\begin{definition}
    For event structures with symmetry $(E,\mathbb{S})$ and $(E', \mathbb{S}')$, a map $f : E \rightarrow E'$ \emph{preserves symmetry} if for all $\theta : x \cong y \in \mathbb{S}$, $f \theta : fx \cong fy \in \mathbb{S}'$, where $f\theta : f(e) \mapsto f(\theta(e))$.
\end{definition}
A novelty in the presence of symmetry is the following equivalence relation on maps:
\begin{definition}
    For event structures with symmetry $(E,\mathbb{S})$ and $(E', \mathbb{S}')$ and symmetry-preserving maps $f, g : E \rightarrow E'$, say that $f$ and $g$ are \emph{symmetric}, denoted $f \sim g$, if for every $x \in \conf(E)$ the bijection $fx \cong gx$ defined by $ f(e) \mapsto g(e)$ is in $\mathbb{S}'$.
\end{definition}
Together, event structures with symmetry, symmetry-preserving maps and symmetries of maps form a $2$-category $\SES$ \cite{ESS}. This is a degenerate 2-category in the sense that hom-categories are all \emph{setoids} (sets with an equivalence relation). There is an embedding $J : \ES \to \SES$. As we will see, the 2-categorical structure makes it possible to consider universal properties ``up to symmetry''.  

\subsubsection{The unfolding semantics with symmetry}

\newcommand{\semAlone}{\semSA{-}}
We proceed towards the construction of a 2-functor $\semAlone : \Petri \to \SES$. To build $\semSA{P}$ for a Petri net $P$, we equip the event structure $\calE(\calU_P)$ with an appropriate isomorphism family, which we define now.  The informal idea is that configurations corresponding to the same path in $P$ should be made symmetric. 

To make this precise we first need to formally construct the path corresponding to a configuration $x$ of $\calE(\calU_{P})$. Note that $x$ can itself be regarded as an event structure, inheriting the partial order from $\calE(\calU_{P})$, and with no conflict. There is an embedding $x \to \calE\calU_{P}$ which, because of the adjunction in \eqref{eq:occ-es}, corresponds to a morphism $\gamma_x : \calN x \rightarrow \calU_{P}$ (and the image $\calE(\gamma_x)$ is the inclusion map $x \to \calU_{P}$). Let $\alpha_x$ denote the composite map $\calN x \xrightarrow {\gamma_x} \calU_{P} \xrightarrow{\varepsilon_P} P$. The pair $(\calN x, \alpha_x)$ is a path  
of $P$, because the functor $\calN$ turns conflict-free event structures into graphs (see \cite{ES1}). An isomorphism of event structures $\theta : x \cong y$ between configurations of $\calE(\calU_P)$ induces a morphism of $\calU_P$-paths $\calN(\theta) : \calN x \cong \calN y$. The underlying bijection $\theta$ should be a symmetry if the paths $(\calN x, \alpha_x)$ and $(\calN y, \alpha_y)$ correspond to the same path in $P$. All of this is summarized in the commutative diagram 

\[\begin{tikzcd}
	{\calN x} && {\calN y} \\
	{\calU_P} && {\calU_P} \\
	& P
	\arrow["\cong"{description}, draw=none, from=1-1, to=1-3]
	\arrow["\theta"{description}, shift left=3, draw=none, from=1-1, to=1-3]
	\arrow["{{\gamma_x}}"', hook, from=1-1, to=2-1]
	\arrow["{{\gamma_y}}", from=1-3, to=2-3]
	\arrow["{{\varepsilon_P}}"', from=2-1, to=3-2]
	\arrow["{{\varepsilon_P}}", from=2-3, to=3-2]
\end{tikzcd}\]
and in the following proposition:
\begin{proposition} 
\label{prop:sem-symmetry}
For a Petri net $P$, the collection
\[
\mathbb{S}_{\calE(\calU_P)} = \{\theta : x \cong y \mid \theta \in \ES(x, y) \text{\ and\ } \alpha_y \circ \calN(\theta) = \alpha_x\} \] is an isomorphism family on the event structure $\calE(\calU_{P})$ with respect to which the following holds: for every $E \in \ES$ and for all pairs of maps $f_1, f_2 \in \SES(E, \calE(\calU_P))$  such that $f_1 \sim f_2$, $\varepsilon_P \circ \calN(f_1) =\varepsilon_P \circ \calN(f_2)$.
\end{proposition}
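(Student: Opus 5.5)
We have to prove two things: that $\mathbb{S}_{\calE(\calU_P)}$ is an isomorphism family, and the statement about symmetric maps $f_1 \sim f_2$. Throughout we use that $\calN$ is functorial and that, by the adjunction \eqref{eq:occ-es}, the map $\gamma_x$ is natural in $x$. Concretely, for an inclusion of configurations $x' \subseteq x$ with inclusion $\iota$, we have $\gamma_x \circ \calN(\iota) = \gamma_{x'}$, and hence $\alpha_x \circ \calN(\iota) = \alpha_{x'}$.

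\emph{Identities, inverses, composition and restriction.} Identities belong to the family trivially. If $\alpha_y \calN(\theta) = \alpha_x$ and $\alpha_z \calN(\theta') = \alpha_y$, then $\alpha_z\calN(\theta'\theta) = \alpha_x$ by functoriality of $\calN$. For inverses, precompose the defining equation with $\calN(\theta^{-1})$. For restriction to $x' \subseteq x$, the restriction $\theta|_{x'}: x' \cong \theta x'$ satisfies $\iota_{\theta x'} \circ \theta|_{x'} = \theta \circ \iota_{x'}$ in $\ES$. Applying $\calN$ and the naturality above gives $\alpha_{\theta x'} \calN(\theta|_{x'}) = \alpha_y \calN(\theta)\calN(\iota_{x'}) = \alpha_x \calN(\iota_{x'}) = \alpha_{x'}$. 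One should also check that $\theta x'$ is a configuration; this holds because $\theta$ is an order-isomorphism of conflict-free configurations.

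\emph{Extension.} This is the main obstacle. Given $\theta : x \cong y$ in the family and $x \subseteq x'$, we use Lemma~\ref{lemmawinskel}. Form the pushout (in $\Occ$, or just as graphs) of $\calN x' \leftarrow \calN x \xrightarrow{\calN\theta} \calN y \hookrightarrow \calN y$, or more simply consider the path $\calN x'$ obtained by relabelling $\calN x$ along $\theta$. Concretely, build an occurrence net $O$ obtained from $\calN x'$ by identifying $\calN x$ with $\calN y$ via $\calN\theta$. It carries a map to $P$, because $\alpha_{x'}$ restricted to $\calN x$ agrees with $\alpha_y\calN\theta$. It also carries a map of $\calN y$ into $\calU_P$, namely $\gamma_y$.

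We then need a map $O \to \calU_P$ over $P$ extending $\gamma_y$ on the $\calN y$ part. This is a relative version of Lemma~\ref{lemmawinskel}, and I would prove it using the explicit construction in Proposition~\ref{prop:winskel}. Proceed by induction on the events of $x' \setminus x$ in a linear extension of the order. For each new event, the preset of its transition in $O$ is already mapped into a co-set $A$ of places of $\calU_P$, so the pair $(A,t)$ is a transition of $\calU_P$. Its post-places $(A,t,s,i)$ are then chosen using the bijection supplied by $\beta$ on the fibres. Because folding maps preserve arities and cabling is bijective on fibres, these choices exist; the only subtle point is matching the multiplicities $\Pre_0(s,t)$. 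Let $y'$ be the image configuration and $\theta'$ the induced bijection. Then $\theta'$ extends $\theta$, and $\alpha_{y'}\calN\theta' = \alpha_{x'}$ holds by construction. Finally, injectivity on the new events, i.e.\ that new events never collide with events of $y$ or with each other, follows from $\calU_P$ being an occurrence net together with the fact that distinct events of $O$ consume distinct tokens.

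\emph{The second claim.} Let $f_1 \sim f_2 : E \to \calE(\calU_P)$. The map $\calN(f_i)$ corresponds under the adjunction to $f_i$, and $\calN E$ is the colimit of the paths $\calN x$ for $x \in \conf(E)$, since an occurrence net is the canonical colimit of its paths. It therefore suffices to show that $\varepsilon_P\calN(f_1)$ and $\varepsilon_P\calN(f_2)$ agree after precomposition with each $\calN x \to \calN E$. That restriction equals $\varepsilon_P \gamma_{f_i x}\calN(f_i|_x) = \alpha_{f_i x}\calN(f_i|_x)$, using naturality of $\gamma$ applied to $f_i|_x : x \cong f_ix$. Write $\theta : f_1x \cong f_2x$ for the bijection $f_1(e)\mapsto f_2(e)$; since $f_1 \sim f_2$, $\theta$ lies in $\mathbb{S}$, and $\theta\circ f_1|_x = f_2|_x$. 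Hence $\alpha_{f_2x}\calN(f_2|_x) = \alpha_{f_2x}\calN(\theta)\calN(f_1|_x) = \alpha_{f_1x}\calN(f_1|_x)$, as required.
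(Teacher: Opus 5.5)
Your checks of identities, composites, inverses and restriction (via naturality of $\gamma$) are fine, and they fill in what the paper omits. There are, however, two genuine gaps, and the first cannot be repaired.

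\textbf{The second claim.} The step ``it suffices to check agreement after precomposition with each $\calN x\to\calN E$'' is invalid, and in fact the conclusion fails as stated. The pairs $(\calN x,\calN\iota_x)$ are not paths of $\calN E$, because $\calN\iota_x$ is in general not a folding map, so the colimit-of-paths property does not apply. Moreover, the family $\{\calN x\to\calN E\}_{x\in\conf(E)}$ is not jointly epimorphic in $\Petri$. No configuration contains two conflicting events, so these restrictions cannot see how a map $\calN E\to P$ distributes tokens over preconditions shared by conflicting events.

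Here is a concrete counterexample. Let $P$ be the introductory net: one place $s$ with two tokens and one transition $a$ consuming one token. Then $\calU_P$ has concurrent events $t_i=(\{(s,i)\},a)$ for $i=0,1$, and the bijection $\{t_0\}\cong\{t_1\}$ lies in $\mathbb{S}$. Take $E=\{u,v\}$ with $u\mathbin{\#}v$. Set $f_1(u)=f_1(v)=t_0$, and set $f_2(u)=t_0$, $f_2(v)=t_1$. Both are maps, since every configuration of $E$ has at most one event, and $f_1\sim f_2$.

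Now look at the transposes. By the marking and $\Pre$ equations, the transpose of $f_1$ must relate $(s,0)$ to a single initial condition lying in the preset of both $u$ and $v$, and it relates $(s,1)$ to no precondition of $u$ or $v$. The transpose of $f_2$ instead relates $(s,0)$ to a precondition of $u$ that is not a precondition of $v$. Hence the $\beta$-components of $\varepsilon_P\circ\calN(f_1)$ and $\varepsilon_P\circ\calN(f_2)$ differ. With Winskel's $\calN$, $s$ is related to $(\emptyset,\{u,v\})$ and $(\emptyset,\emptyset)$ in the first case, and to $(\emptyset,\{u\})$ and $(\emptyset,\{v\})$ in the second. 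This happens even though, as you correctly compute, their restrictions to every $\calN x$ agree.

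So this half of the proposition cannot be proved as formulated; the paper's proof does not treat it either. What your argument does establish is agreement configuration by configuration, in particular on the $\eta$-components.

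\textbf{The extension step.} The sentence ``the preset of its transition in $O$ is already mapped into a co-set $A$'' hides the whole difficulty. The preconditions of a new event $e$ are its initial precondition and the conditions linking it to its causes in $\calN x'$. None of these is in the image of $\calN y$, so nothing assigns them places of $\calU_P$ yet. You must choose them, and they must be places \emph{not consumed by $y$} (nor by previously added events); otherwise the new event is in conflict with $y$ and the image is not a configuration. Your closing remark on injectivity does not address this: ``distinct events of $O$ consume distinct tokens'' concerns $O$, not $\calU_P$.

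The missing ingredient is a counting argument. Through the conditions of $\calN x$, the equation $\alpha_y\circ\calN(\theta)=\alpha_x$ says the following. For each $s\in S_0$ and events $e'',e'$ of $x$, the number of places over $s$ produced by $\gamma_x(e'')$ and consumed by $\gamma_x(e')$ (or by no event of $x$) equals the corresponding number for $\theta(e''),\theta(e')$ in $y$. The same holds for initial places. Hence the cut reached by $y$ contains enough free places over each $s$, produced by each $\theta(e'')$ or initial, to choose $B'$ with the same distribution as $B$. Such a $B'$ is automatically a co-set, so $(B',b)$ is a transition, $y\cup\{(B',b)\}$ is a configuration, and the extension of $\theta$ satisfies the defining equation.

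The paper's one-step construction has the same defect, because it keeps the initial places of $B$ and reuses the indices $j$. For instance, add to the introductory net a transition $b$ consuming one token from $s$, and take $x=\{t_0\}$, $y=\{t_1\}$, $x'=x\cup\{(\{(s,1)\},b)\}$. The recipe then returns $(\{(s,1)\},b)$, which is in conflict with $t_1$. So choosing free places is genuinely the heart of the extension axiom.
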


For $P \in \Petri$, let $\semSA{P} \in \SES$ denote the pair $(\calE(\calU_{P}), \mathbb{S}_{\calE(\calU_P)})$.

\begin{proof}
    We only detail the extension axiom for isomorphism families. It suffices to look at a one-event extension $x \subseteq x'$ with $x' = x \cup \{ e \}$ and $e \notin x$. The general axiom then follows by induction. Let $e_1, \ldots, e_n$ be the immediate predecessors of $e$ w.r.t.~the order $\leq$. First note that $\gamma_x(e_i) = \gamma_{x'}(e_i)$ for every $i \leq n$, a consequence of naturality.

		Recall the construction of the unfolding: for every $i \leq n$, $\gamma_x(e_i)$ is a transition of $\calU_P$ and so of the form $(A_i, a_i)$. Observe that we must have $\gamma_y(\theta(e_i)) = (A_i', a_i)$ for some $A_i'$, since by assumption both $\gamma_x(e_i)$ and $\gamma_y(\theta(e_i))$ map to the same transition of $P$ (namely, $a_i$). 

	Now let $(B, b)$ denote the transition $\gamma_{x'}(e)$, so (by construction) $B$ is its pre-set of places in $\calU_P$. We partition $B$ as follows: let $B_{\mathrm{int}} \subseteq B$ be the subset of places sitting in between $\gamma_x(e_i)$ and $\gamma_{x'}(e)$ for some $i$ (formally,
       $B_{\mathrm{int}} = \{ s \in B \mid \Post((A_i, a_i), s) \text{ for some } i \leq n \})$ and  
   let $B_{\mathrm{inn}} = B \setminus B_{\mathrm{int}}$. We note that every place $s \in B_{\mathrm{inn}}$ is initial: indeed, any transition of $\calU_P$ having $s$ in its post-set would correspond to an event of $x$ immediately below $e$, thus one of the $e_i$, and we would have $s \in B_{\mathrm{int}}$.
		
   We claim that $(B', b)$ is a well-defined transition of $\calU_P$, where the set $B'$ is defined as $B_{\mathrm{inn}} \cup \{   ((A_i', a_i), s, j) \mid ((A_i, a_i), s, j) \in B_{\mathrm{int}} \}$. To see why this holds, note that every element of $B'$ is a well-defined place: $((A_i', a_i), s, j)$ is valid just because $((A_i, a_i), s, j)$ is, since $\eta(A_i', a_i)= \eta(A_i, a_i)$. Then observe that $\mathrm{co}~B'$ (because the transitions $(A_i', a_i)$ are concurrent and initial places are never involved in a conflict) and the arity condition transfers easily from $B$. So the claim holds and we can let $y' = y \cup \{ (B, b)\}$ and easily verify that the resulting extension of $\theta$ is in $\mathbb S$. 
\end{proof}

\begin{example} We return to the `problematic' example in \S\ref{example-intro}. Applying the above construction, we obtain the same event structure (with two concurrent events), but this time it comes equipped with an isomorphism family that makes the two events symmetric. This is one way to resolve the universality issue by restoring the equivalence, up to symmetry. 
\end{example}

\def\x{2}

In general, in addition to the existence property  of Lemma~\ref{lemmawinskel}, the unfolding satisfies the uniqueness part of a universal property, up to symmetry:
\begin{lemma}
\label{prop:crucial}
    For $O \in \Occ$ and $P \in \Petri$, if $(\eta, \beta): O \rightarrow P$ is a map of Petri nets and there are two maps $(\eta'_1, \beta'_1), (\eta'_2, \beta'_2):  O \rightarrow \calU_P$ satisfying the commutative diagram of Lemma~\ref{lemmawinskel}, 
then $\eta'_1 \sim \eta'_2$ as maps of event structures with symmetry $\calJ\calE O \to \semSA{P}$.
\end{lemma}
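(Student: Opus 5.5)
The plan is to check the definition of $\sim$ directly, one configuration at a time, and to use the adjunction \eqref{eq:occ-es} to turn the commutativity hypothesis of Lemma~\ref{lemmawinskel} into the defining equation of $\mathbb{S}_{\calE(\calU_P)}$ from Proposition~\ref{prop:sem-symmetry}.

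First, the preliminaries. Write $f_i = \calE(\eta'_i,\beta'_i) : \calE O \to \calE\calU_P$; as a function on events this is $\eta'_i$. Both $f_i$ are maps of event structures by functoriality of $\calE$. They are symmetry-preserving as maps $\calJ\calE O \to \semSA{P}$, because $\calJ$ equips $\calE O$ with the identity-only isomorphism family and identities lie in $\mathbb{S}_{\calE(\calU_P)}$.

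Now fix $z \in \conf(\calE O)$. Put $x_i = f_i z \in \conf(\calE\calU_P)$ and let $\iota_i : z \to x_i$ be the restriction of $f_i$. This is a bijection, since maps of event structures are injective on configurations. The bijection required by the definition of $\sim$ is $\theta = \iota_2 \circ \iota_1^{-1} : x_1 \cong x_2$, sending $f_1(e) \mapsto f_2(e)$. We must show two things: that $\theta \in \ES(x_1,x_2)$, and that $\alpha_{x_2}\circ\calN(\theta) = \alpha_{x_1}$.

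For the second point, consider the inclusion $z \hookrightarrow \calE O$. Under \eqref{eq:occ-es} it corresponds to $\gamma_z : \calN z \to O$. By naturality of the adjunction bijection, the transpose of $f_i|_z : z \to \calE\calU_P$ is $(\eta'_i,\beta'_i)\circ\gamma_z$. But $f_i|_z$ also factors as $z \xrightarrow{\iota_i} x_i \hookrightarrow \calE\calU_P$, whose transpose is $\gamma_{x_i}\circ\calN(\iota_i)$. Hence
\[ (\eta'_i,\beta'_i)\circ\gamma_z = \gamma_{x_i}\circ\calN(\iota_i). \]
Post-composing with $\varepsilon_P$ and using $\varepsilon_P\circ(\eta'_i,\beta'_i) = (\eta,\beta)$ gives
\[ \alpha_{x_i}\circ\calN(\iota_i) = (\eta,\beta)\circ\gamma_z, \]
and the right-hand side does not depend on $i$. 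Therefore
\[ \alpha_{x_2}\circ\calN(\theta)\circ\calN(\iota_1) = \alpha_{x_2}\circ\calN(\iota_2) = \alpha_{x_1}\circ\calN(\iota_1). \]
Since $\calN(\iota_1)$ is an isomorphism, we can cancel it to get $\alpha_{x_2}\circ\calN(\theta) = \alpha_{x_1}$. Hence $\theta \in \mathbb{S}_{\calE(\calU_P)}$, and since $z$ was arbitrary, $\eta'_1 \sim \eta'_2$.

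The main obstacle is the point used implicitly above: that each $\iota_i$ is an isomorphism of event structures, where $z$ and $x_i$ carry the restricted orders. Only then is $\theta$ a morphism of $\ES$, and only then do $\calN(\iota_i)$ and its transpose make sense.

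Order reflection is easy. Let $e, e' \in z$ with $f_i(e') \le f_i(e)$, and write $[e]$ for the down-closure of $e$ in $z$. Then $f_i[e]$ is a configuration containing $f_i(e)$, so it contains $f_i(e')$. Injectivity of $f_i$ on $z$ then gives $e' \in [e]$, that is, $e' \le e$.

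Order preservation needs the net structure rather than just the event-structure axioms. I would argue at the level of places, for an immediate causal link $e' \lessdot e$ in $O$. Take a place $s$ with $\Post(s,e')$ and $\Pre(s,e)$. The map conditions in Definition~\ref{def:morphisms-petri} for $(\eta'_i,\beta'_i)$, together with $\beta'_i$ being a relation between occurrence nets, force some place $s'$ with $\beta'_i(s,s')$ to lie in both the post-set of $\eta'_i(e')$ and the pre-set of $\eta'_i(e)$. This gives $\eta'_i(e') < \eta'_i(e)$ in $\calU_P$, and transitivity extends the conclusion to all of $\le$.
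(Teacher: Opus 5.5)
\paragraph{Verdict: gap at the step you flag as the main obstacle.} Your strategy is what the paper's one-line proof has in mind: transpose along \eqref{eq:occ-es}, use $\varepsilon_P\circ(\eta'_i,\beta'_i)=(\eta,\beta)$, and unwind $\mathbb{S}_{\calE(\calU_P)}$. Your transposition computation and your order-reflection argument are sound. The gap is the claim that each $\iota_i$ is an isomorphism, which is false. Nothing in Definition~\ref{def:morphisms-petri} forces $\beta'_i(s,\cdot)\neq 0$: a net map may forget a place, and with it a causal link.

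\paragraph{Counterexample.} Let $O$ have places $s_0,s_1,s$ and transitions $e',e$, with pre-set of $e'$ equal to $\{s_0\}$, post-set of $e'$ equal to $\{s\}$, pre-set of $e$ equal to $\{s,s_1\}$, and marking $\{s_0,s_1\}$. Let $P$ have places $a,b$ with one token each, and transitions $c,d$ whose only edges are $a\to c$ and $b\to d$. Then $P$ is safe and $\calU_P\cong P$. The data $\eta(e')=c$, $\eta(e)=d$, $\beta=\{(s_0,a),(s_1,b)\}$ satisfy all three equations and lift to $\calU_P$, so the lemma's hypotheses hold with both lifts equal. Yet $e'<e$ in $\calE O$ while the two images are concurrent.

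\paragraph{Consequences for your proof.} Both places where you use that $\iota_1$ is invertible fail as written.
\begin{itemize}
\item $\theta=\iota_2\circ\iota_1^{-1}$ lies in $\ES(x_1,x_2)$ only if it reflects order. Your argument obtains this from $\iota_1$ preserving order.
\item $\calN(\iota_1)$ is no longer an isomorphism you can simply cancel.
\end{itemize}

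\paragraph{The missing idea: the forgetting is the same for both lifts.} The place component of $\varepsilon_P$ is a total function (Proposition~\ref{prop:winskel}). Hence $\beta(s,p)$ is the sum of $\beta'_i(s,w)$ over the places $w$ above $p$, so $\beta'_i(s,\cdot)=0$ iff $\beta(s,\cdot)=0$, independently of $i$.

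\paragraph{Claim.} For $e',e\in z$, the events $\eta'_i(e')$ and $\eta'_i(e)$ share a place of $\calU_P$ (in the post-set of the first and the pre-set of the second) iff some $s$ in the post-set of $e'$ and the pre-set of $e$ has $\beta(s,\cdot)\neq 0$.
\begin{itemize}
\item The `if' direction is your argument, with the non-emptiness now justified.
\item For `only if', suppose $w$ is a shared place. The Pre-equation for $e$ at $w$ gives a pre-place $s$ of $e$ with $\beta'_i(s,w)=1$. The marking equation rules out $s$ being initial, since $w$ is not initial. The Post-equation for the producer $e''$ of $s$ puts $w$ in the post-set of $\eta'_i(e'')$. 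Uniqueness of producers in $\calU_P$ gives $\eta'_i(e'')=\eta'_i(e')$. Since $e''<e$ and $z$ is down-closed, $e''\in z$, and injectivity of $\eta'_i$ on $z$ gives $e''=e'$.
\end{itemize}
The order on a configuration of $\calE(\calU_P)$ is generated by such links. So $x_1$ and $x_2$ induce the same order on $z$, and $\theta$ is an order-isomorphism even though the $\iota_i$ are not.

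\paragraph{What remains.} For the last step you still need one of the following:
\begin{itemize}
\item $\calN(\iota_1)$ is an epimorphism. It is bijective on transitions, but the statement for places depends on the explicit action of $\calN$ on maps, which the paper does not give.
\item A direct comparison of $\varepsilon_P\circ\gamma_{x_2}\circ\calN(\theta)$ with $\varepsilon_P\circ\gamma_{x_1}$.
\end{itemize}
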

\begin{proof}
By assumption the two maps of ocurrence nets are equal when postcomposed by $\varepsilon_P$. The result then follows from our definition of the symmetry in $\calU_P$. 
\end{proof}

\begin{theorem}
\label{thm:pseudo-adj}
  For $E \in \ES$ and $P \in \Petri$, the function 
  \[ 
  \ES(E, \calE{\calU_P}) \xrightarrow{\cong} \Occ(\calN E, \calU_P) \xrightarrow{\varepsilon_P \circ -} \Petri(\calN E, P) 
  \]
extends to an equivalence of categories $\SES(JE, \semSA{P}) \simeq \Petri(\calN E, P)$, natural in $E$. 
\end{theorem}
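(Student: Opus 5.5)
We set up a functor $\Phi_E : \SES(JE, \semSA{P}) \to \Petri(\calN E, P)$. On objects, a symmetry-preserving map $f : JE \to \semSA{P}$ is in particular a map of event structures $E \to \calE\calU_P$. Via the adjunction \eqref{eq:occ-es} it corresponds to a map $\calN E \to \calU_P$, and we send it to $\varepsilon_P \circ \calN(f)$; here $\calN(f)$ is shorthand for this transpose. Since $\Petri(\calN E, P)$ is locally discrete, every 2-cell $f_1 \sim f_2$ has to be sent to an identity. This is well defined precisely by the second clause of Proposition~\ref{prop:sem-symmetry}, which gives $\varepsilon_P \circ \calN(f_1) = \varepsilon_P \circ \calN(f_2)$. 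One point needs checking first: every map $E \to \calE\calU_P$ preserves symmetry for $JE$. This holds because $JE$ carries only the identity symmetries. So the objects of $\SES(JE,\semSA{P})$ are exactly $\ES(E,\calE\calU_P)$, and $\Phi_E$ extends the displayed function. Functoriality is automatic, since the codomain is discrete.

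Both hom-categories are setoids, so an equivalence amounts to three properties: $\Phi_E$ is essentially surjective, it is full, and it is faithful. Faithfulness is trivial because $\SES(JE,\semSA{P})$ has at most one 2-cell between two maps. For essential surjectivity, take $(\eta,\beta) : \calN E \to P$. Since $\calN E \in \Occ$, Lemma~\ref{lemmawinskel} gives a map $g : \calN E \to \calU_P$ with $\varepsilon_P \circ g = (\eta,\beta)$. Its transpose $f = \calE(g)$ composed with the counit (more precisely, the adjoint transpose $E \to \calE\calU_P$) is a preimage under $\Phi_E$, and it lies on the nose.

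Fullness is the main content. Suppose $\Phi_E(f_1) = \Phi_E(f_2)$. Then the transposes $g_1, g_2 : \calN E \to \calU_P$ both factor $(\eta,\beta)$ through $\varepsilon_P$. By Lemma~\ref{prop:crucial}, $\calE g_1 \sim \calE g_2$ as maps $J\calE\calN E \to \semSA{P}$. To conclude $f_1 \sim f_2$ we precompose with the unit $E \to \calE\calN E$, using that $f_i$ equals $\calE g_i$ composed with this unit. Precomposition with any map of event structures (regarded via $J$) preserves $\sim$, since configurations map to configurations. I expect the main obstacle to be making Lemma~\ref{prop:crucial} precise. Concretely, I would unwind the definition of $\mathbb{S}_{\calE(\calU_P)}$. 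For each $x \in \conf(E)$ the bijection $\theta : f_1x \cong f_2x$ is an isomorphism of event structures. The key is to identify the paths $\alpha_{f_ix}$ with the restriction of $\varepsilon_P \circ g_i$ to $\calN x \hookrightarrow \calN E$, via naturality of the adjunction. Once that is done, the equality $\varepsilon_P g_1 = \varepsilon_P g_2$ yields $\alpha_{f_2x}\circ\calN(\theta)=\alpha_{f_1x}$.

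Finally, naturality in $E$. Take $h : E' \to E$ in $\ES$. Precomposing with $Jh$ on the left corresponds to precomposing with $\calN h$ on the right. This follows from the naturality of the bijection in \eqref{eq:occ-es} and the associativity of composition in $\Petri$. Since the target is locally discrete, no coherence 2-cells need to be checked.
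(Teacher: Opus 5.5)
Your proposal is correct and follows the paper's proof: 2-cells are respected by Proposition~\ref{prop:sem-symmetry}, and (essential) surjectivity comes from Lemma~\ref{lemmawinskel}. You also make explicit the fullness step, via Lemma~\ref{prop:crucial} whiskered with the unit $E \to \calE\calN E$ (your ``counit'' is a slip for this unit); the paper's one-line proof leaves this step implicit when it says surjectivity suffices.
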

\begin{proof}
First recall that the domain category $\SES(JE, \semSA{P})$ is a setoid and the codomain $\Petri(\calN E, P)$ is a set, i.e. a discrete category. The function extends to a functor, i.e. symmetric pairs have the same image, by Proposition~\ref{prop:sem-symmetry}. To establish the equivalence it suffices to prove that the function is surjective, which follows from Lemma~\ref{lemmawinskel}. 
\end{proof}

This theorem characterizes the natural transformation $\SES(J-, \semSA{P})$ up to equivalence. Modulo the $J$, this resembles a characterization of $\semSA{P}$ itself by a Yoneda-style argument. We can make this argument precise using a density theorem for $J$. 

\subsection{A 2-density result for event structures}
\label{subsec:2density}

\newcommand{\bE}{\mathbf{E}}
\newcommand{\bC}{\mathbf{C}}
\newcommand{\bD}{\mathbf{D}}

We prove a density result for the embedding of event structures into event structures with symmetry. This is a 2-categorical (i.e. $\Cat$-enriched) form of density: each event structure with symmetry is a canonical 2-colimit of  ordinary event structures. Another way of stating this is as follows:

\newcommand{\op}{\mathrm{op}}

\begin{definition}[e.g. \cite{BCECT}]
For locally small 2-categories $\bD$ and $\bE$, a $2$-functor $J : \bD \rightarrow \bE$ is \emph{2-dense} if the 2-functor $\tilde{J} : \bE \to [\bD^\op, \Cat] : e \mapsto \bE(J-, e)$ is locally an isomorphism of categories, where $[\bD^\op, \Cat]$ is the 2-category of 2-functors, strict natural transformations, and modifications. 
\end{definition}

\begin{theorem}
The embedding 2-functor $J : \ES \to \SES$ is 2-dense.
\end{theorem}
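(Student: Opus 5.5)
The plan is to unfold the definition directly. For $(E,\mathbb{S})$ and $(E',\mathbb{S}')$ in $\SES$, I will show that the functor
\[
\SES\big((E,\mathbb{S}),(E',\mathbb{S}')\big) \longrightarrow [\ES^\op,\Cat]\big(\SES(J-,(E,\mathbb{S})),\ \SES(J-,(E',\mathbb{S}'))\big),
\]
sending $f$ to the natural transformation $\tau_f = f\circ -$, is bijective on objects and on morphisms. Both sides are setoid-like: the domain is a setoid, and there is at most one modification between two natural transformations into setoid-valued 2-functors. So it suffices to prove three things: injectivity on objects, surjectivity on objects, and that a modification $\tau_f \Rightarrow \tau_g$ exists iff $f \sim g$.

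The key tool is probing by configurations. Each $x \in \conf(E)$, viewed as a conflict-free event structure, gives an inclusion $\iota_x : Jx \to (E,\mathbb{S})$; this is symmetry-preserving because $Jx$ carries only identity symmetries. I also need the following observation. For $\theta : x \cong y$ in $\mathbb{S}$, the maps $\iota_x$ and $\iota_y\circ\theta : Jx \to (E,\mathbb{S})$ are symmetric, because $\mathbb{S}$ is closed under restriction. Conversely, two maps $Jx \to (E',\mathbb{S}')$ are symmetric iff the induced bijection between their images lies in $\mathbb{S}'$, since that bijection is the top configuration case of the definition.

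With these facts the three steps go as follows.

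\emph{Injectivity on objects.} $f$ is recovered from its components $f\circ\iota_{[e]}$, where $[e]$ is the down-closure of $e$, since these configurations cover $E$.

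\emph{Modifications.} A modification $\tau_f \Rightarrow \tau_g$ has, at each $x$, a 2-cell $f\iota_x \sim g\iota_x$. This says exactly that $fx \cong gx$ lies in $\mathbb{S}'$ for every $x$, i.e.\ $f \sim g$. Conversely, $f\sim g$ gives such 2-cells at every $A \in \ES$, and they are automatically coherent by local discreteness.

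\emph{Surjectivity on objects.} This is the main step. Given a natural $\sigma$, I define $f(e) = \sigma_{[e]}(\iota_{[e]})(e)$. Strict naturality along the inclusions $[e] \hookrightarrow x$ of conflict-free event structures shows that $\sigma_x(\iota_x)$ restricts to $f|_x$. Since $\sigma_x(\iota_x)$ is a map of event structures, $f$ sends configurations to configurations injectively, so $f$ is a map in $\ES$.

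Symmetry preservation is the delicate point. For $\theta : x\cong y$ in $\mathbb{S}$, I use $\iota_x \sim \iota_y\circ\theta$. Because $\sigma_x$ is a functor on setoids, $\sigma_x(\iota_x) \sim \sigma_x(\iota_y\circ\theta)$. Because $\theta$ is a map in $\ES$, naturality gives $\sigma_x(\iota_y\circ\theta) = \sigma_y(\iota_y)\circ\theta$. Hence $f|_x \sim f|_y\circ\theta$, which says precisely that $f\theta \in \mathbb{S}'$.

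Finally I must check $\sigma = \tau_f$, i.e.\ $\sigma_A(g) = f\circ g$ for an arbitrary $g : JA \to (E,\mathbb{S})$. Fix $a\in A$. The image $z = g[a]$ is a configuration, so $g|_{[a]}$ factors as $\iota_z\circ g'$ with $g' : [a]\to z$ a map in $\ES$. Naturality along $[a]\hookrightarrow A$ and along $g'$ gives
\[
\sigma_A(g)|_{[a]} = \sigma_{[a]}(g|_{[a]}) = \sigma_z(\iota_z)\circ g' = f\circ g|_{[a]}.
\]

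I expect the main obstacle to be this surjectivity step, in particular symmetry preservation. There one has to turn the functoriality of $\sigma_x$ on 2-cells, together with naturality along the $\ES$-map $\theta$, into membership of $f\theta$ in $\mathbb{S}'$. The rest is bookkeeping with restrictions to down-closures.
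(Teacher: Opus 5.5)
Your proof is correct, and its core step is the same as the paper's: symmetry preservation follows from $\iota_x \sim \iota_y\circ J\theta$, functoriality of $\sigma_x$ on 2-cells, and naturality along $\theta$. Where you differ is in how you build the inverse.

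The paper evaluates the natural transformation once, at the `generic' probe $\id_{|E|} : J|E| \to (E,\mathbb{S})$. That is, it takes $f = \sigma_{|E|}(\id_{|E|})$, using that the underlying event structure $|E|$ is itself an object of $\ES$. Then $f$ is a map of event structures for free. The identity $\sigma = \tau_f$ is a one-line Yoneda argument: every $g : JA \to (E,\mathbb{S})$ equals $\id_{|E|}\circ Jg$, so naturality along $g$ gives $\sigma_A(g) = f\circ g$ directly.

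You probe only with configurations and glue the results. This costs you the well-definedness argument along the inclusions $[e]\hookrightarrow x$, and the factorisation $g|_{[a]} = \iota_{g[a]}\circ g'$ in the final check. In exchange, every object and morphism your argument uses lies in the full subcategory of finite conflict-free event structures. So the proof goes through verbatim with $\ES$ replaced by that subcategory, which shows that configurations alone already suffice as probes.

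You also carry out verifications the paper omits. These are that the two assignments are mutually inverse, and both directions of the correspondence between modifications $\tau_f \Rightarrow \tau_g$ and symmetries $f\sim g$.
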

(We emphasize that 2-density is strictly weaker than density in the 1-categorical sense. In particular the underlying 1-functor $J : \ES \to \SES$ is not dense.)
\begin{proof}
For event structures with symmetry $A$ and $B$, we show that the functor $\tilde{J}_{A ,B} : \SES(A, B) \to \Cat^{\ES^\op}(\SES(J-, A), \SES(J-, B))$ is an isomorphism of categories by constructing an inverse. Write $A = (|A|, \mathbb{S}_A)$ and $B = (|B|, \mathbb{S}_B)$. 

Observe that, for a natural transformation $\alpha : \SES(J-, A) \to \SES(J-, B)$, applying component $\alpha_{|A|}$ to the `identity' function $\id_{|A|} : J|A| \to A$ gives a map of event structures with symmetry $\alpha_{|A|}(\id_{|A|}) : J|A| \to B$. We claim that $\alpha_{|A|}(\id_{|A|})$ preserves the symmetry in $A$. For any bijection $\theta: x \cong y$ in $\mathbb{S}_A$, we have $(\id_{|A|} \circ J\iota_y \circ J\theta) \sim (\id_{|A|} \circ J\iota_x)$, writing $\iota_x : x \to |A|$ and $\iota_y : y \to |A|$ for the inclusion maps (and recalling that $\theta \in \ES(x, y)$). Since $\alpha$ is a natural transformation and each component preserves the symmetry relation on maps,  $\alpha_{|A|}(\id_{|A|}) \circ J{\iota_x} = \alpha_x (\id_{|A|} \circ J{\iota_x}) \sim \alpha_x(\id_{|A|} \circ J{\iota_y} \circ J\theta) = \alpha_{|A|}(\id_{|A|}) \circ J\iota_y \circ J\theta.$ In other words, $\alpha_{|A|}(\id_{|A|}) \theta \in \mathbb{S}_B$, and we have proven the claim. Therefore, we take $\tilde J_{A, B}^{-1}(\alpha) = \alpha_{|A|}(\id_{|A|})$. For functoriality of $J_{A, B}^{-1}$, note that any modification $\alpha \to \beta$ must be unique and gives $\alpha_{|A|}(\id_{|A|})\sim \beta_{|A|}(\id_{|A|})$. We omit the verification that $\tilde J_{A, B}^{-1}$ and $\tilde J_{A, B}$ are inverses. 
\end{proof}

From this we extend the unfolding semantics to maps of Petri nets, up to symmetry. 
\begin{corollary}
\label{def:sem}
The unfolding semantics $P \mapsto \semSA{P}$ determines a pseudo-functor $\semAlone : \Petri \to \SES$     (i.e. composition and identities are only preserved up to symmetry). 
\end{corollary}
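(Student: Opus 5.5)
We build the action of $\semAlone$ on a morphism $f : P \to P'$ of $\Petri$ by passing through the presheaf side and then using the 2-density of $J$ to come back to $\SES$. Theorem~\ref{thm:pseudo-adj} gives, for every $E \in \ES$, an equivalence of categories
\[
\Phi^P_E : \SES(JE, \semSA{P}) \simeq \Petri(\calN E, P),
\]
natural in $E$. Here the left-hand side is a setoid and the right-hand side is discrete.

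\textbf{Step 1: a natural transformation in $[\ES^\op, \Cat]$.} Postcomposition gives a strictly natural family $f \circ - : \Petri(\calN E, P) \to \Petri(\calN E, P')$. We want a transformation $\tilde f : \SES(J-, \semSA{P}) \to \SES(J-, \semSA{P'})$ that is \emph{strictly} natural, because the 2-density theorem is stated for $[\ES^\op,\Cat]$ with strict transformations. Conjugating $f\circ -$ by the equivalences $\Phi$ only gives pseudo-naturality. So instead of choosing arbitrary quasi-inverses, we use Lemma~\ref{lemmawinskel} directly.

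For $g \in \SES(JE,\semSA{P})$, take the occurrence-net map $\hat g : \calN E \to \calU_P$ corresponding to $g$. Apply Lemma~\ref{lemmawinskel} to $f\circ\varepsilon_P\circ\hat g : \calN E \to P'$. This gives a lift $h : \calN E \to \calU_{P'}$, and we set $\tilde f_E(g) := \calE(h)$ under the adjunction~\eqref{eq:occ-es}. By Lemma~\ref{prop:crucial} the result is well defined up to $\sim$. By Proposition~\ref{prop:sem-symmetry}, symmetric inputs $g_1 \sim g_2$ give equal composites with $\varepsilon_P$, and hence symmetric outputs.

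Strict naturality in $E$ fails for arbitrary choices. We get around this by choosing lifts only once, at a single generic stage. Take $E := |\semSA{P}|$ and $g := \id$, and let $\bar f : \calE\calU_P \to \calE\calU_{P'}$ be the map corresponding to a lift of $f\circ\varepsilon_P : \calU_P \to P'$. Then define $\tilde f_E(g) := \bar f \circ g$. This family is strictly natural by construction.

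\textbf{Step 2: checking $\bar f$ is symmetry-preserving.} We must show that $\bar f$ preserves $\mathbb S_{\calE\calU_P}$; this is the main obstacle. Take $\theta : x\cong y$ in $\mathbb{S}$, so $\alpha_y\circ\calN\theta = \alpha_x$. Then the paths $f\circ\alpha_x$ and $f\circ\alpha_y\circ\calN\theta$ of $P'$ agree. These paths factor through $\calU_{P'}$ via the restrictions of the lift, so by the definition of $\mathbb{S}_{\calE\calU_{P'}}$ the bijection $\bar f\theta$ is a symmetry. The delicate point is that $\bar f$ restricted to $x$ corresponds, under \eqref{eq:occ-es}, to the lift composed with $\gamma_x$; this should follow from naturality of the adjunction. (The 2-density theorem would produce the same map from the transformation of Step~1, but here it is simplest to verify directly.)

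\textbf{Step 3: pseudo-functoriality.} Set $\semSA{f} := \bar f$. For composable $f, f'$, both $\bar{f'}\circ\bar f$ and $\overline{f'f}$ correspond to lifts of $f'f\circ\varepsilon_P$ along $\varepsilon_{P''}$, since $\varepsilon_{P'}\circ(\text{lift of } f\varepsilon_P) = f\varepsilon_P$. Lemma~\ref{prop:crucial}, applied with $O = \calU_P$ and $E = |\semSA P|$, then gives $\bar{f'}\circ\bar f \sim \overline{f'f}$. Likewise $\overline{\id_P}\sim\id$, because the identity of $\calU_P$ is itself a lift of $\varepsilon_P$.

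Since the hom-categories of $\SES$ are setoids, these 2-cells are unique. The coherence axioms for a pseudo-functor therefore hold automatically, and the choice of lift in Step~1 does not matter up to symmetry.
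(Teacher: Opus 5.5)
Your argument is correct, but it takes a different route from the paper.

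\textbf{How the two routes differ.} The paper never builds the map by hand. It conjugates $f\circ -$ by the equivalences of Theorem~\ref{thm:pseudo-adj} to get a transformation $\SES(J-,\semSA{P})\to\SES(J-,\semSA{P'})$. It then invokes 2-density of $J$ to extract a 1-cell, and attributes the non-strictness to the choice of pseudo-inverses. You instead choose a single lift $h$ of $f\circ\varepsilon_P$ along $\varepsilon_{P'}$ (Lemma~\ref{lemmawinskel}), set $\semSA{f}:=\calE h$, and obtain the pseudo-functoriality 2-cells directly from Lemma~\ref{prop:crucial}.

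Unwinding the density inverse $\alpha\mapsto\alpha_{|A|}(\id_{|A|})$ on the paper's transformation yields, up to symmetry, exactly $\calE$ of a chosen lift of $f\circ\varepsilon_P$. So you have in effect evaluated the density argument at the generic element. This makes 2-density unnecessary for the corollary. It also sidesteps the point you rightly raise: the conjugated transformation is only pseudo-natural, whereas 2-density is stated for strict transformations, so the paper tacitly uses that the density inverse also works for transformations natural only up to $\sim$. What the paper's route buys is uniformity: the same mechanism yields naturality in $P$ of the equivalence and the uniqueness lemma for relative pseudo-adjoints.

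\textbf{Step~2 needs repair.} You cannot conclude ``by the definition of $\mathbb{S}_{\calE\calU_{P'}}$''. That definition concerns the paths $\alpha_{\bar f x}$ and $\alpha_{\bar f y}$, which live on $\calN(\bar f x)$ and $\calN(\bar f y)$, whereas your equation lives on $\calN x$. Moreover, $\bar f|_x : x\to\bar f x$ need not be an isomorphism of event structures, so you cannot transport along it. For instance, a map of nets may send a place between two transitions to the zero multiset and thereby erase a causal link.

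The fix is to apply Lemma~\ref{prop:crucial} with $O=\calN x$:
\begin{itemize}
\item By naturality of \eqref{eq:occ-es}, $\bar f\iota_x$ and $\bar f\iota_y\theta$ are the transposes of $h\gamma_x$ and $h\gamma_y\calN(\theta)$.
\item Both of these are lifts of $f\alpha_x$, since $\varepsilon_{P'}h=f\varepsilon_P$ and $\alpha_y\calN(\theta)=\alpha_x$.
\item Lemma~\ref{prop:crucial} therefore gives $\calE(h\gamma_x)\sim\calE(h\gamma_y\calN(\theta))$. Precomposing with the unit $x\to\calE\calN x$ gives $\bar f\iota_x\sim\bar f\iota_y\theta$.
\item Evaluating $\sim$ at the top configuration $x$ yields $\bar f\theta\in\mathbb{S}_{\calE\calU_{P'}}$.
\end{itemize}
This is precisely the computation carried out in the paper's proof of 2-density, so with this change your Step~2 is sound and the rest of your proof goes through.
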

\begin{proof}
Every map $(\eta, \beta) : P \to P'$ induces a natural transformation $\SES(J -, \semSA{P}) \xrightarrow{\simeq} \Petri(\calN -, P) \xrightarrow{ (\eta, \beta)\circ - } \Petri(\calN -, P') \xrightarrow{\simeq} \SES(J-, \semSA{P'})$, which corresponds to a map $\semSA{P} \to \semSA{P'}$  under the density isomorphism. The non-strictness is because pseudo-inverses are only determined up to isomorphism. 
\end{proof}
It also follows from this construction that the equivalence of categories $\Petri(\calN E, P) \simeq \SES(JE, \semSA{P})$ is natural in $P$. In other words,  we have constructed a \emph{$J$-relative pseudo-adjunction}: 
 \begin{equation}
 \label{eq:relative-adjunction}
 \begin{tikzcd}
	& \SES \\
	\Petri & \ES
	\arrow[""{name=0, anchor=center, inner sep=0}, "\semAlone", from=2-1, to=1-2]
	\arrow["J"', from=2-2, to=1-2]
	\arrow[""{name=1, anchor=center, inner sep=0}, "\calN", from=2-2, to=2-1]
	\arrow["\dashv"{anchor=center, rotate=96}, draw=none, from=1, to=0]
\end{tikzcd}
\end{equation}
The 2-functor $\semAlone$ is determined up to symmetry by the relative 2-adjunction. The general result is as follows. 
\begin{lemma}
  Let $L : \bD \to \bC$ and $\calJ : \bD \to \bE$ be 2-functors and suppose that $\calJ$ is 2-dense. If $L$ has two $\calJ$-relative pseudo-adjoint pseudo-functors $R, R' : \bC \to \bE$, then $R$ and $R'$ are equivalent. 
\end{lemma}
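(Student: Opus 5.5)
The plan is a Yoneda-style argument carried out in the presheaf 2-category $[\bD^\op, \Cat]$, using the 2-density of $\calJ$ in place of full faithfulness of the Yoneda embedding. Unpacking the hypothesis, a $\calJ$-relative pseudo-adjunction with left part $L$ provides, for every $c \in \bC$, equivalences of categories
\[
\Phi_{d,c} : \bE(\calJ d, Rc) \simeq \bC(Ld, c), \qquad \Phi'_{d,c} : \bE(\calJ d, R'c) \simeq \bC(Ld, c),
\]
pseudo-natural in $d \in \bD$ and in $c \in \bC$. For each $c$ we compose $\Phi_{-,c}$ with a pseudo-inverse of $\Phi'_{-,c}$. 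This gives a pseudo-natural equivalence $\sigma_c : \tilde{\calJ}(Rc) = \bE(\calJ -, Rc) \simeq \bE(\calJ -, R'c) = \tilde{\calJ}(R'c)$ in the 2-category of 2-functors $\bD^\op \to \Cat$.

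The first key step is to transport $\sigma_c$ and its pseudo-inverse back to $\bE$. The definition of 2-density only asserts that $\tilde{\calJ}$ is locally an isomorphism onto \emph{strict} natural transformations and modifications. So I will first replace $\sigma_c$ by a strict natural transformation. In the cases of interest the homs are setoids (or essentially discrete), and there I would argue directly that a pseudo-natural transformation between such 2-functors can be strictified: choose components as in the density proof, $\sigma_c$ applied to the identity-like element. In general I would invoke the standard fact that for 2-functors into $\Cat$ out of a 2-category, a pseudo-natural equivalence between representables-along-$\calJ$ is isomorphic to a strict one when the domain presheaves are of the form $\bE(\calJ-, e)$. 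I expect this to be the main obstacle, and I would state it as an assumption on the notion of relative pseudo-adjunction, namely that its hom-equivalences are strictly natural in $d$, as in Theorem~\ref{thm:pseudo-adj}.

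Given strict $\sigma_c$ and a strict pseudo-inverse $\tau_c$, with invertible modifications $\tau_c\sigma_c \cong 1$ and $\sigma_c\tau_c \cong 1$, local isomorphism of $\tilde{\calJ}$ yields unique morphisms $f_c : Rc \to R'c$ and $g_c : R'c \to Rc$ with $\tilde{\calJ}(f_c) = \sigma_c$ and $\tilde{\calJ}(g_c) = \tau_c$. Because $\tilde{\calJ}$ is 2-functorial and locally fully faithful on 2-cells, the modifications lift to invertible 2-cells $g_c f_c \cong 1$ and $f_c g_c \cong 1$. Hence each $f_c$ is an equivalence in $\bE$.

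Finally I will assemble the $f_c$ into a pseudo-natural equivalence $R \simeq R'$. For $h : c \to c'$, naturality of $\Phi, \Phi'$ in $c$ gives an invertible modification $\tilde{\calJ}(R'h)\,\sigma_c \cong \sigma_{c'}\,\tilde{\calJ}(Rh)$. This uses the pseudo-functoriality constraints of $R, R'$, which are also transported through $\tilde{\calJ}$. Lifting it through the local isomorphism gives invertible 2-cells $R'h \circ f_c \cong f_{c'} \circ Rh$. The coherence axioms of a pseudo-natural transformation (compatibility with composition, identities, and 2-cells of $\bC$) then hold because they hold after applying $\tilde{\calJ}$, and $\tilde{\calJ}$ is faithful on 2-cells. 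This yields the equivalence $R \simeq R'$.
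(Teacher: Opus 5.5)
\textbf{Verdict: your route is the paper's, but it has a genuine gap at the strictification step you flag, and as stated the lemma can actually fail there.}

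Your overall route coincides with the paper's sketch. You compose the hom-equivalences into $\sigma_c:\bE(\calJ-,Rc)\simeq\bE(\calJ-,R'c)$, lift along $\tilde\calJ$, then lift the naturality cells. The step you flag is exactly the one the paper passes over with ``by 2-density this lifts''. None of your three remedies closes it.

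\emph{Strict naturality in $d$ is not enough.} Even if $\Phi$ and $\Phi'$ are strictly natural in $d$, $\sigma_c$ need not be strict, because it contains a pseudo-inverse of $\Phi'_{-,c}$. Pseudo-inverses of strict transformations are in general only pseudo-natural. This already happens in the paper's own instance. Take $P$ the two-token net of the introduction, $d$ two concurrent events, and $h$ the swap of $d$. The bijections $g$ and $g\circ h$ from $d$ to $\calE\calU_P$ are symmetric, so $f:=\Phi(g)=\Phi(g\circ h)=f\circ\calN h$. A strictly natural pseudo-inverse $s$ would then give $s(f)=s(f)\circ h$. This is impossible, since $s(f)$ is injective on the configuration $d$.

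\emph{The ``standard fact'' is not available.} Strictification holds for pseudo-natural transformations out of a \emph{flexible} 2-functor, such as a representable (bicategorical Yoneda). But $\bE(\calJ-,e)$ need not be flexible.

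\emph{The ``identity-like element'' trick is instance-specific.} It uses the particular structure of $\SES$, so it cannot prove the general lemma.

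In fact, with the paper's strict definition of 2-density the lemma is false, even when both hom-equivalences are strictly natural in $d$. Here is a counterexample.
\begin{itemize}
\item Let $\bD$ be the locally discrete one-object 2-category $\mathbb{Z}/2=\{1,\sigma\}$.
\item Let $\bE$ be $\bD$ plus objects $e,e'$, with:
  \begin{itemize}
  \item $\bE(\ast,e)=\bE(e,e)=\bE(e',e)=1$;
  \item $\bE(\ast,e')$ the indiscrete category on $\{b_0,b_1\}$, with $b_i\circ\sigma=b_{1-i}$;
  \item $\bE(e',e')$ the indiscrete category on $\{\id_{e'},\tau\}$, with $\tau^2=\id_{e'}$ and $\tau\circ b_i=b_{1-i}$;
  \item $\bE(e,e')=\bE(e,\ast)=\bE(e',\ast)=\emptyset$.
  \end{itemize}
\item A direct check shows $\calJ$ is 2-dense. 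The crucial case is that the swap of $\{b_0,b_1\}$ has no strict fixed point, matching $\bE(e,e')=\emptyset$.
\item Let $\bC$ be $\bD$ plus an object $c$, with $\bC(\ast,c)=\bC(c,c)=1$ and $\bC(c,\ast)=\emptyset$, and let $L$ be the inclusion.
\end{itemize}
Fixing $\ast\mapsto\ast$ and sending $c$ to $e$, respectively to $e'$, gives two $\calJ$-relative pseudo-adjoints of $L$. The second is only pseudo-functorial, using indiscreteness of $\bE(\ast,e')$. The hom-equivalences $\bE(\ast,e)=1\cong\bC(\ast,c)$ and $\bE(\ast,e')\to 1$ are strictly natural in $d$. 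Yet $e\not\simeq e'$.

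\textbf{What the argument needs} is bi-density: $\tilde\calJ$ should be locally an equivalence into the 2-category of 2-functors, \emph{pseudo}-natural transformations and modifications. Under that hypothesis your remaining steps go through as written:
\begin{itemize}
\item local essential surjectivity lifts $\sigma_c$ and $\tau_c$ up to isomorphism;
\item local full faithfulness lifts the invertible modifications;
\item faithfulness gives the coherence axioms.
\end{itemize}

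\textbf{The application survives.} $J:\ES\to\SES$ does satisfy bi-density. The inverse $\alpha\mapsto\alpha_{|A|}(\id_{|A|})$ from the density proof works verbatim for pseudo-natural $\alpha$. Every $g:Jd\to A$ equals $\id_{|A|}\circ J|g|$, so $\alpha_d(g)\sim\alpha_{|A|}(\id_{|A|})\circ J|g|$. Symmetry preservation still follows by transitivity of $\sim$. Your setoid remark should be promoted to this hypothesis, and verified for $J$, rather than left as an aside.
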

The following diagram summarizes the situation:
\[\begin{tikzcd}
	&& \bE \\
	\bC && \bD
	\arrow[""{name=0, anchor=center, inner sep=0}, "R"{description}, from=2-1, to=1-3]
	\arrow["{R'}", curve={height=-12pt}, from=2-1, to=1-3]
	\arrow["\calJ"', from=2-3, to=1-3]
	\arrow[""{name=1, anchor=center, inner sep=0}, "L", from=2-3, to=2-1]
	\arrow["\dashv"{anchor=center, rotate=-90}, shift left=3, draw=none, from=0, to=1]
\end{tikzcd}\]
\begin{proof}
By the relative pseudo-adjunction property we have for every $c \in \bC$ a natural equivalence of pseudo-functors $\bE(\calJ-, Rc) \simeq \bC(L-, c) \simeq \bE(\calJ-, R'c)$, and this is natural in $c$. By 2-density of $\calJ$ this lifts to an internal equivalence $R c \simeq R'c$ in $\bE$. Several naturality conditions must be verified but this is straightforward diagram chasing. 
\end{proof} 

\subparagraph*{Summary of section.} We have obtained a new presentation of the unfolding semantics for ordinary Petri nets as a pseudo-functor $\semSA{-} : \Petri \to \SES$. The universal property of this unfolding is necessarily weaker than in the whole-grain setting, but we have characterized it as a pseudo-adjunction relative to the embedding $\ES \to \SES$. The 2-density of this embedding suffices to characterize the unfolding up to symmetry.

\section{Multiplicity count as a morphism of relative adjunctions}
\label{sec:adjunction-morphism}

In this section, we connect the unfolding in whole-grain style (Corollary~\ref{cor:whole-grain-unfolding}) and the unfolding in ordinary style (\S\ref{sec:unf-sem-sym}). Observe that there are functors connecting the two settings on all sides of the adjunctions: the multiplicity count functor $\Q : \WGPetri \to \Petri$ and the embedding $\ES \hookrightarrow \SES$. 
The idea is to assemble them into an appropriate \emph{morphism of adjunctions}.

\begin{definition}[Adapted from \cite{RM}]
A \emph{right-morphism of relative adjunctions} between relative adjunctions as on the left below consists of functors $F: \bC \to \bC'$, $G : \bD \to \bD'$ and $H : \bE \to \bE'$, and a pseudonatural transformation $\alpha: HR \to R'F$:
\[
\begin{tikzcd}[row sep=-0.2em, column sep=1em]
	&& \bE \\
	\bC \\
	& {} & \bD
	\arrow[""{name=0, anchor=center, inner sep=0}, "R", from=2-1, to=1-3]
	\arrow["J"', from=3-3, to=1-3]
	\arrow[""{name=1, anchor=center, inner sep=0}, "L", from=3-3, to=2-1]
	\arrow["\dashv"{anchor=center, rotate=90}, draw=none, from=1, to=0]
\end{tikzcd}\quad 
\begin{tikzcd}[row sep=-0.2em, column sep=1em]
	&& \bE' \\
	\bC' \\
	& {} & \bD'
	\arrow[""{name=0, anchor=center, inner sep=0}, "R'", from=2-1, to=1-3]
	\arrow["J'"', from=3-3, to=1-3]
	\arrow[""{name=1, anchor=center, inner sep=0}, "L'", from=3-3, to=2-1]
	\arrow["\dashv"{anchor=center, rotate=90}, draw=none, from=1, to=0]
\end{tikzcd}
\ \  
\begin{tikzcd}[column sep=1.4em]
	\bD & \bE & \bC & \bD \\
	{\bD'} & {\bE'} & {\bC'} & {\bD'}
	\arrow["J", from=1-1, to=1-2]
	\arrow["G"', from=1-1, to=2-1]
	\arrow["H"', from=1-2, to=2-2]
	\arrow["R"', from=1-3, to=1-2]
	\arrow["\alpha", Rightarrow, from=1-2, to=2-3]
    \arrow["F", from=1-3, to=2-3]
	\arrow["L"', from=1-4, to=1-3]
	\arrow["G", from=1-4, to=2-4]
	\arrow["{{J'}}"', from=2-1, to=2-2]
	\arrow["{{R'}}", from=2-3, to=2-2]
	\arrow["{{L'}}", from=2-4, to=2-3]
\end{tikzcd}
\ \ 
\begin{tikzcd}[column sep=1.4em]
	{J'Gd} & {R'L'Gd} &[-1em] {R'FLd} \\
	HJd && HRLd
	\arrow["{\eta'_{Gd}}", from=1-1, to=1-2]
	\arrow[equals, from=1-1, to=2-1]
	\arrow[equals, from=1-2, to=1-3]
	\arrow["{\alpha_{Ld}}", from=1-3, to=2-3]
	\arrow["{H\eta_d}"', from=2-1, to=2-3]
\end{tikzcd}
\]
such that the two unlabelled squares in the middle diagram commute, and that the right-most diagram commutes for every $d \in \bD$ ($\eta$ and $\eta'$ denote the two relative units).

\end{definition}
Here we need a slightly more general notion to deal with the pseudo aspects. Since the 2-categories involved here are quite degenerate, there are no coherence axioms at the 2-cell levels, and we only relax the naturality of $\alpha$:
\begin{theorem}
There is a (pseudo) right-morphism of relative pseudo-adjunctions 
\[\begin{tikzcd}[row sep=-0.2em, column sep=2em]
	& \ES \\
	\WGPetri \\
	& \ES
	\arrow[""{name=0, anchor=center, inner sep=0}, "{\semWG}"{pos=0.8}, from=2-1, to=1-2]
	\arrow[equals, from=3-2, to=1-2]
	\arrow[""{name=1, anchor=center, inner sep=0}, "\N"{pos=0.3}, from=3-2, to=2-1]
	\arrow["\dashv"{anchor=center, rotate=90}, draw=none, from=1, to=0]
\end{tikzcd}
\qquad 
\raisebox{-0.3em}{\scalebox{1.2}{$\longrightarrow$}}
\qquad 
\begin{tikzcd}[row sep=-0em, column sep=2em]
	& \SES \\
	\Petri \\
	& \ES
	\arrow[""{name=0, anchor=center, inner sep=0}, "\semS"{pos=0.8}, from=2-1, to=1-2]
	\arrow["J"', hook', from=3-2, to=1-2]
	\arrow[""{name=1, anchor=center, inner sep=0}, "\calN"{pos=0.3}, from=3-2, to=2-1]
	\arrow["\dashv"{anchor=center, rotate=91}, draw=none, from=1, to=0]
\end{tikzcd}
\]
consisting of the 2-functors $\id_{\ES}: \ES \to \ES$, $\Q : \WGPetri \to \Petri$, and $J : \ES \to \SES$.
\end{theorem}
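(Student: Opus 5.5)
The plan is to build each piece of the morphism from the universal properties already established: Corollary~\ref{cor:whole-grain-unfolding} on the whole-grain side, and Theorem~\ref{thm:pseudo-adj} together with the 2-density of $J$ on the ordinary side. Since all hom-categories involved are setoids, every coherence condition reduces to showing that two maps into some $\semSA{P}$ are symmetric. By Theorem~\ref{thm:pseudo-adj}, that in turn reduces to an \emph{equality} of their transposes in the set $\Petri(\calN E, P)$.

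\emph{Step 1: the two squares.} The left square $J \circ \id_{\ES} = J \circ \id_{\ES}$ is trivial. The right square asks for $\Q \circ \N = \calN$. Here $\N$ is $\calN$ followed by a pseudo-inverse of the equivalence $\Q : \WGOcc \to \Occ$ and the inclusion into $\WGPetri$, so a priori $\Q\N \cong \calN$ only. I would choose the pseudo-inverse using the explicit construction in the proof of Proposition~\ref{prop:Q-surjective}: name edges as $(p,t,i)$ and tokens as $(p,i)$. This makes $\Q(\N E) = \calN E$ on the nose on objects. For maps $\calN f$, the same construction produces a rational map whose image under $\Q$ is exactly $\calN f$. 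As $\Petri$ is locally discrete, this gives strict commutation of the square.

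\emph{Step 2: the components $\alpha_P : J\semWGA{P} \to \semSA{\Q P}$.} Write $E_P = \semWGA{P} = \calE(\U P)$. I define $\alpha_P$ as the transpose, under Theorem~\ref{thm:pseudo-adj}, of the map of Petri nets
\[ \calN E_P = \calN\calE(\Q\U P) \xrightarrow{\ c\ } \Q\U P \xrightarrow{\ \Q(\varepsilon_P)\ } \Q P, \]
where $c$ is the counit of the adjunction \eqref{eq:occ-es}. Concretely, Lemma~\ref{lemmawinskel} factors this composite through $\varepsilon_{\Q P}$, and applying $\calE$ to the factorisation gives $\alpha_P$. This map is well defined up to symmetry by Lemma~\ref{prop:crucial}. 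It preserves symmetry automatically, because $J$ equips $E_P$ with the identity symmetry only.

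\emph{Step 3: pseudonaturality.} Fix a rational map $f : P \to P'$. I need $\alpha_{P'} \circ J\semWGA{f} \sim \semSA{\Q f} \circ \alpha_P$ as maps $J E_P \to \semSA{\Q P'}$. By the equivalence of Theorem~\ref{thm:pseudo-adj}, natural in both variables (naturality in $P$ being the remark after Corollary~\ref{def:sem}), it suffices to show their transposes in $\Petri(\calN E_P, \Q P')$ coincide.
\begin{itemize}
\item The transpose of the right-hand side is $\Q f \circ \Q\varepsilon_P \circ c$.
\item The transpose of the left-hand side is $\Q\varepsilon_{P'} \circ c \circ \calN\calE(\Q\U f)$. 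Naturality of $c$ rewrites this as $\Q\varepsilon_{P'} \circ \Q\U f \circ c$.
\end{itemize}
The two agree once we know $\varepsilon_{P'} \circ \U f \cong f \circ \varepsilon_P$ in $\WGPetri$, i.e.\ pseudonaturality of the counit from Theorem~\ref{thm:whole-grain-unfolding}. Applying the 2-functor $\Q$ into the locally discrete $\Petri$ then turns this 2-cell into an equality.

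\emph{Step 4: the unit condition.} For $E \in \ES$, the whole-grain relative unit is the ordinary unit $\eta_E : E \to \semWGA{\N E}$. The symmetric unit $\eta'_E : JE \to \semSA{\calN E}$ is the transpose of $\id_{\calN E}$. The condition to check is $\alpha_{\N E} \circ J\eta_E \sim \eta'_E$, so I would compare transposes. By the computation of Step 3, the transpose of $\alpha_{\N E} \circ J\eta_E$ is $\Q(\varepsilon_{\N E}) \circ c \circ \calN\calE(\Q \U(\ldots))\circ\calN\eta_E$. Using Step 1, this is $\Q$ applied to the triangle identity of the composite adjunction $\N \dashv \semWG$, hence equal to $\id_{\calN E}$.

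I expect the main obstacle to be bookkeeping rather than any single deep step. The delicate points are, first, making the square $\Q\N = \calN$ hold strictly, including on morphisms. Second, in Steps 3 and 4 the counits of the three adjunctions (\eqref{eq:occ-es}, Theorem~\ref{thm:whole-grain-unfolding}, and the equivalence $\WGOcc \simeq \Occ$) must be identified compatibly, so that the triangle identity really yields $\id_{\calN E}$. If the strict equality in Step 1 fails on morphisms, I would fall back to a pseudo-commuting square. The only change would be that the unit condition holds up to symmetry, which is harmless because all the relevant homs are setoids.
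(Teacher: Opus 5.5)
Your proposal is correct and is essentially the paper's proof. The paper also takes $\alpha_P = \calE g_P$, where $g_P : \Q\U P \to \calU_{\Q P}$ factors $\Q\varepsilon_P$ through $\varepsilon_{\Q P}$ by Lemma~\ref{lemmawinskel}; this agrees up to symmetry with your transpose of $\Q\varepsilon_P \circ c$, since $g_P \circ c$ factors the latter. The paper then derives pseudonaturality from the uniqueness up to symmetry in Lemma~\ref{prop:crucial}, which is exactly what your comparison of transposes relies on, while your strict choice making $\Q\mathsf{N} = \calN$ and your triangle-identity check of the unit condition supply details the paper explicitly omits.
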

\begin{proof}
We define a pseudo-natural transformation $\alpha_{P} : J \semWGA{P} \to \semSA{\Q(P)}$ for $P \in \WGPetri$ and omit the rest of the details.  For $P \in \WGPetri$, $\alpha_P$ is built from the counit $\varepsilon_P$ of the adjunction between $\WGPetri$ and $\Occ$. Its image $\Q\varepsilon_P : \Q\U P \rightarrow \Q P$ factors through a morphism $g_P : \Q P \rightarrow \calU_{\Q P}$   by Lemma \ref{lemmawinskel}, and we define $\alpha_P$ as $\calE g_P$ seen as a morphism $J\semWGA{P} \to \semSA{\Q P}$. Pseudonaturality follows from Lemma \ref{prop:crucial}.
\end{proof}

\section{Conclusion: related work and perspectives}

The symmetry problems that arise in the unfolding of unsafe Petri nets are well-known but difficult to explain precisely. We have attempted to give an account of the situation using 2-categorical language.

One key takeaway is that event structures with symmetry are an appropriate semantic domain for Petri nets, whether safe or unsafe. The corresponding unfolding construction is characterized as a right relative 2-adjoint. This method completely bypasses the difficulties of adding symmetry on Petri nets themselves \cite{GENUN}.

Over the past few years the theory of Petri nets has experienced a new wave of interest, with many new contributions on the categorical side 
(\cite{baez2021categories,baez2020open,bumpus2025additive,master2025colored, ACFPN}) and new applications to semantics \cite{di2025dialectica,castellan2023geometry} and practical systems modelling \cite{dalrymple2024safeguarded,catcolab}. We hope that the present work will resonate with this line of work. 

Meanwhile, event structures and symmetry have applications to program semantics (e.g. \cite{GSES,castellan2015parallel}) and the present work may provide new perspectives in that area. We could also explore connections with other kinds of unfoldings where the symmetry problems also occur. For instance \cite{ESES, SIC, GCMC} have all (implicitly or  explicitly) considered symmetry to describe the complexities of unfolding semantics. 

\bibliographystyle{alpha} 
\bibliography{biblio}

\appendix
\section{The bicategory $\WGPetri$ is locally a setoid}
\label{sec:discreteness-proof}

This appendix gives a detailed proof that there can be at most one 2-cell between two rational maps of whole-grain Petri nets. First we recall the theorem. Note that this property holds only because we have made the convenient assumption that Petri nets have no isolated places, an essential assumption for the proof to go through.  

\discreteness*

For the proof, let the four maps involved be typed as
\[
P \xleftarrow{\varphi} Q \xrightarrow{\psi} P'  \qquad P \xleftarrow{\varphi'} Q' \xrightarrow{\psi'} P'
\]
where $Q$ and $Q'$ are whole-grain Petri nets, $\varphi$ and $\varphi'$ are cabling maps, and $\psi$ and $\psi'$ are \'etale maps. Following our convention, let the data of the four Petri nets be denoted by 
\begin{align*}
 & P\quad =\quad  S_P \leftarrow  I_P \to T_P \leftarrow  O_P \to S_P \\
& P'\quad =\quad  S_{P'} \leftarrow  I_{P'} \to T_{P'} \leftarrow  O_{P'} \to S_{P'} \\
& Q\quad =\quad  S_Q \leftarrow  I_Q \to T_Q \leftarrow  O_Q \to S_Q \\
& Q'\quad =\quad  S_{Q'} \leftarrow  I_{Q'} \to T_{Q'} \leftarrow  O_{Q'} \to S_{Q'}
\end{align*}
so that, in particular, there are commutative diagrams and pullbacks as follows:
\[\begin{tikzcd}
	{S_P} & {I_P} & {T_P} & {O_P} & {S_P} \\
	{S_Q} & {I_Q} & {T_Q} & {O_Q} & {S_Q} \\
	{S_{P'}} & {I_{P'}} & {T_{P'}} & {O_{P'}} & {S_{P'}}
	\arrow[from=1-2, to=1-1]
	\arrow[from=1-2, to=1-3]
	\arrow[from=1-4, to=1-3]
	\arrow[from=1-4, to=1-5]
	\arrow[from=2-1, to=1-1]
	\arrow[from=2-1, to=3-1]
	\arrow["\lrcorner"{anchor=center, pos=0.125, rotate=180}, draw=none, from=2-2, to=1-1]
	\arrow[from=2-2, to=1-2]
	\arrow[from=2-2, to=2-1]
	\arrow[from=2-2, to=2-3]
	\arrow[from=2-2, to=3-2]
	\arrow["\lrcorner"{anchor=center, pos=0.125}, draw=none, from=2-2, to=3-3]
	\arrow[equals, from=2-3, to=1-3]
	\arrow[from=2-3, to=3-3]
	\arrow[from=2-4, to=1-4]
	\arrow["\lrcorner"{anchor=center, pos=0.125, rotate=90}, draw=none, from=2-4, to=1-5]
	\arrow[from=2-4, to=2-3]
	\arrow[from=2-4, to=2-5]
	\arrow["\lrcorner"{anchor=center, pos=0.125, rotate=-90}, draw=none, from=2-4, to=3-3]
	\arrow[from=2-4, to=3-4]
	\arrow[from=2-5, to=1-5]
	\arrow[from=2-5, to=3-5]
	\arrow[from=3-2, to=3-1]
	\arrow[from=3-2, to=3-3]
	\arrow[from=3-4, to=3-3]
	\arrow[from=3-4, to=3-5]
\end{tikzcd}\]
\[\begin{tikzcd}
	{S_P} & {I_P} & {T_P} & {O_P} & {S_P} \\
	{S_{Q'}} & {I_{Q'}} & {T_{Q'}} & {O_{Q'}} & {S_{Q'}} \\
	{S_{P'}} & {I_{P'}} & {T_{P'}} & {O_{P'}} & {S_{P'}}
	\arrow[from=1-2, to=1-1]
	\arrow[from=1-2, to=1-3]
	\arrow[from=1-4, to=1-3]
	\arrow[from=1-4, to=1-5]
	\arrow[from=2-1, to=1-1]
	\arrow[from=2-1, to=3-1]
	\arrow["\lrcorner"{anchor=center, pos=0.125, rotate=180}, draw=none, from=2-2, to=1-1]
	\arrow[from=2-2, to=1-2]
	\arrow[from=2-2, to=2-1]
	\arrow[from=2-2, to=2-3]
	\arrow[from=2-2, to=3-2]
	\arrow["\lrcorner"{anchor=center, pos=0.125}, draw=none, from=2-2, to=3-3]
	\arrow[equals, from=2-3, to=1-3]
	\arrow[from=2-3, to=3-3]
	\arrow[from=2-4, to=1-4]
	\arrow["\lrcorner"{anchor=center, pos=0.125, rotate=90}, draw=none, from=2-4, to=1-5]
	\arrow[from=2-4, to=2-3]
	\arrow[from=2-4, to=2-5]
	\arrow["\lrcorner"{anchor=center, pos=0.125, rotate=-90}, draw=none, from=2-4, to=3-3]
	\arrow[from=2-4, to=3-4]
	\arrow[from=2-5, to=1-5]
	\arrow[from=2-5, to=3-5]
	\arrow[from=3-2, to=3-1]
	\arrow[from=3-2, to=3-3]
	\arrow[from=3-4, to=3-3]
	\arrow[from=3-4, to=3-5]
\end{tikzcd}\]

A 2-cell $(\varphi, \psi) \to (\varphi', \psi')$ is a map of whole-grain Petri nets $Q \to Q'$ such that the triangles
\[\begin{tikzcd}
	& Q \\
	P && {P'} \\
	& {Q'}
	\arrow[from=1-2, to=2-1]
	\arrow[from=1-2, to=2-3]
	\arrow["\cong"{description}, from=1-2, to=3-2]
	\arrow[from=3-2, to=2-1]
	\arrow[from=3-2, to=2-3]
\end{tikzcd}\]
commute. Suppose that we have two 2-cells $\alpha, \beta : Q \to Q'$ respectively given by the families of maps below:
\[\begin{tikzcd}
	{S_Q} & {I_Q} & {T_Q} & {O_Q} & {S_Q} \\
	{S_{Q'}} & {I_{Q'}} & {T_{Q'}} & {O_{Q'}} & {S_{Q'}}
	\arrow["\alpha_S"', from=1-1, to=2-1]
	\arrow[from=1-2, to=1-1]
	\arrow[from=1-2, to=1-3]
	\arrow["\alpha_I"', from=1-2, to=2-2]
	\arrow["\alpha_T"', from=1-3, to=2-3]
	\arrow[from=1-4, to=1-3]
	\arrow[from=1-4, to=1-5]
	\arrow["\alpha_O"', from=1-4, to=2-4]
	\arrow["\alpha_S", from=1-5, to=2-5]
	\arrow[from=2-2, to=2-1]
	\arrow[from=2-2, to=2-3]
	\arrow[from=2-4, to=2-3]
	\arrow[from=2-4, to=2-5]
\end{tikzcd}\]

\[\begin{tikzcd}
	{S_Q} & {I_Q} & {T_Q} & {O_Q} & {S_Q} \\
	{S_{Q'}} & {I_{Q'}} & {T_{Q'}} & {O_{Q'}} & {S_{Q'}}
	\arrow["\beta_S"', from=1-1, to=2-1]
	\arrow[from=1-2, to=1-1]
	\arrow[from=1-2, to=1-3]
	\arrow["\beta_I"', from=1-2, to=2-2]
	\arrow["\beta_T"', from=1-3, to=2-3]
	\arrow[from=1-4, to=1-3]
	\arrow[from=1-4, to=1-5]
	\arrow["\beta_O"', from=1-4, to=2-4]
	\arrow["\beta_S", from=1-5, to=2-5]
	\arrow[from=2-2, to=2-1]
	\arrow[from=2-2, to=2-3]
	\arrow[from=2-4, to=2-3]
	\arrow[from=2-4, to=2-5]
\end{tikzcd}\]
We show that the vertical maps are pairwise equal, so that $\alpha = \beta$.

($\alpha_T = \beta_T$). This is because $T_Q = T_{Q'} = T_P$ and the diagram
\[\begin{tikzcd}
	& {T_Q} \\
	{T_P} \\
	& {T_{Q'}}
	\arrow[equals, from=1-2, to=2-1]
	\arrow["\alpha_T", from=1-2, to=3-2]
	\arrow[equals, from=3-2, to=2-1]
\end{tikzcd}\]
must commute, so $\alpha_T$ is an identity map. The same argument also applies to $\beta_T$.

($\alpha_O = \beta_O$).
The diagram below commutes whether the map $O_Q \to O_{Q'}$ is $\alpha_O$ or $\beta_O$.
\[\begin{tikzcd}
	{T_Q} & {O_Q} \\
	{T_{Q'}} & {O_{Q'}} \\
	{T_{P'}} & {O_{P'}}
	\arrow[equals, from=1-1, to=2-1]
	\arrow[curve={height=30pt}, from=1-1, to=3-1]
	\arrow[from=1-2, to=1-1]
	\arrow["\alpha_O \text{ or } \beta_O"', from=1-2, to=2-2]
	\arrow[curve={height=-30pt}, from=1-2, to=3-2]
	\arrow[from=2-1, to=3-1]
	\arrow[from=2-2, to=2-1]
	\arrow["\lrcorner"{anchor=center, pos=0.125, rotate=-90}, draw=none, from=2-2, to=3-1]
	\arrow[from=2-2, to=3-2]
	\arrow[from=3-2, to=3-1]
\end{tikzcd}\]
Since the bottom square is a pullback, the map $O_Q \to O_{Q'}$ is unique, so $\alpha_O = \beta_O$. Since the outer square is a pullback too ($\psi$ is \'etale), so is the top square. Isomorphisms are stable under pullback, therefore $\alpha_O$ is invertible. 

($\alpha_I = \beta_I$). Symmetric argument.

($\alpha_S = \beta_S$). The squares
\[\begin{tikzcd}
	{S_{Q'}} & {I_{Q'}} && {S_{Q'}} & {O_{Q'}} \\
	{S_P} & {I_P} && {S_P} & {O_P}
	\arrow[from=1-1, to=2-1]
	\arrow[from=1-2, to=1-1]
	\arrow["\lrcorner"{anchor=center, pos=0.125, rotate=-90}, draw=none, from=1-2, to=2-1]
	\arrow[from=1-2, to=2-2]
	\arrow[from=1-4, to=2-4]
	\arrow[from=1-5, to=1-4]
	\arrow["\lrcorner"{anchor=center, pos=0.125, rotate=-90}, draw=none, from=1-5, to=2-4]
	\arrow[from=1-5, to=2-5]
	\arrow[from=2-2, to=2-1]
	\arrow[from=2-5, to=2-4]
\end{tikzcd}\]
are pullbacks and since $\mathbf{Set}$ is an extensive category, the square
\[\begin{tikzcd}
	{S_{Q'}} & {I_{Q'} + O_{Q'}} \\
	{S_P} & {I_P + O_P}
	\arrow[from=1-1, to=2-1]
	\arrow[from=1-2, to=1-1]
	\arrow["\lrcorner"{anchor=center, pos=0.125, rotate=-90}, draw=none, from=1-2, to=2-1]
	\arrow[from=1-2, to=2-2]
	\arrow[from=2-2, to=2-1]
\end{tikzcd}\]
whose horizontal maps are copairings is also a pullback. The same argument shows that
\[\begin{tikzcd}
	{S_Q} & {I_Q + O_Q} \\
	{S_P} & {I_P + O_P}
	\arrow[from=1-1, to=2-1]
	\arrow[from=1-2, to=1-1]
	\arrow["\lrcorner"{anchor=center, pos=0.125, rotate=-90}, draw=none, from=1-2, to=2-1]
	\arrow[from=1-2, to=2-2]
	\arrow[from=2-2, to=2-1]
\end{tikzcd}\]
is also a pullback. We have assumed that $I_P + O_P \to S_P$ is always surjective for a whole-grain Petri net (no isolated places), and pullbacks of surjective maps are surjective in $\mathbf{Set}$, so $I_Q + O_Q \to S_Q$ is surjective. Since the following diagram commutes for either $\alpha_S$ or $\beta_S$, 
\[\begin{tikzcd}
	{S_Q} & {I_Q + O_Q} \\
	{S_{Q'}} & {I_{Q'} + O_{Q'}}
    \arrow["\alpha_S \text{ or } \beta_S"', from=1-1, to=2-1]
	\arrow[from=1-2, to=1-1]
	\arrow[from=1-2, to=2-2]
	\arrow[from=2-2, to=2-1]
\end{tikzcd}\]
the defining property of epimorphisms implies that $\alpha_S = \beta_S$. It is in fact a pullback square (apply the pullback pasting law as above) and we have shown that the right-hand vertical arrow $\alpha_I + \alpha_O$ is invertible, therefore $\alpha_S$ is invertible too. 

Altogether we have shown that there is at most one 2-cell $(\varphi, \psi) \to (\varphi', \psi')$ and that when it exists it is invertible.

\end{document}